\newcommand{\adh}{\raisebox{-1.25pt}{$\TransformHoriz$}}		
\newcommand{\Adh}{{\rm Ad}}						
\DeclareMathOperator{\AfH}{aff}						
\newcommand{\after}{\mathrel{\circ}}					
\newcommand{\Alg}[1]{\mathbb{#1}}					
\newcommand{\Cat}[1]{\ensuremath{\mathbf{#1}}}				
\DeclareMathOperator{\CoH}{conv}					
\newcommand{\CSS}{\mathcal{P}_{{\rm c}}}				
\newcommand{\Dis}{\mathcal{D}{\mkern-3.5mu}_f}				
\newcommand{\EM}{\mathcal{E}{\kern-.2ex}\mathcal{M}}			
\DeclareMathOperator{\Ext}{Ext}						
\DeclareMathOperator{\id}{id}						
\newcommand{\nadh}{\mkern2mu\not\mkern-12mu\adh}			
\newcommand{\Pow}{\mathcal{P}{\mkern-4mu}_f}				
\newcommand{\Po}{\mathcal{P}	}			
\newcommand{\Pri}{{\rm P}}						
\DeclareMathOperator{\relint}{ri}					
\newcommand{\Sets}{\Cat{Sets}}						
\DeclareMathOperator{\ViH}{Vis}						
\newcommand{\qedd}{}%
\begin{document}

\title{Termination in Convex Sets of Distributions}

\author{Ana Sokolova}
\address{University of Salzburg, Austria}
\email{ana.sokolova@cs.uni-salzburg.at}

\author{Harald Woracek}
\address{TU Vienna, Austria}
\email{harald.woracek@tuwien.ac.at}

\keywords{convex algebra, one-point extensions, convex powerset monad}

\begin{abstract}
	Convex algebras, also called (semi)convex sets, are at the heart of modelling probabilistic systems including
	probabilistic automata. Abstractly, they are the Eilenberg-Moore algebras of the finitely supported distribution monad.
	Concretely, they have been studied for decades within algebra and convex geometry. 
	
	In this paper we study the problem of extending a convex algebra by a single point. 
	Such extensions enable the modelling of termination in probabilistic systems.
	We provide a full description of all possible extensions for a particular class of convex algebras: For a fixed convex
	subset $D$ of a vector space satisfying additional technical condition, we consider the algebra of convex subsets of $D$. 
	This class contains the convex algebras of convex subsets of distributions, modelling (nondeterministic) probabilistic 
	automata. We also provide a full description of all possible extensions for the class of free convex algebras, 
	modelling fully probabilistic systems. Finally, we show that there is a unique functorial extension, the so-called 
	black-hole extension.
\end{abstract}

\maketitle


\section{Introduction}
\label{sec:intro}

In this paper we study the question of how to extend a convex algebra by a single element. Convex algebras have been studied for many decades in the context of algebra, vector spaces, and convex geometry, see e.g.~\cite{stone:1949,flood:1981,gudder.schroeck:1980} and from a categorical viewpoint, see e.g.~\cite{fritz:2015,swirszcz:1974,roehrl:1994,pumpluen.roehrl:1995,boerger.kemper:1996}. Recently they have attracted more attention in computer science as well, see e.g.~\cite{doberkat:2006,jacobs:2010,jacobs.westerbaan.westerbaan:2015}.
One reason is that probability distributions, the main ingredient for modelling various probabilistic systems, see e.g.~\cite{VR99:tcs,BSV04:tcs,Sokolova11}, have a natural convex algebra structure. Even more than that, the set of distributions over a set $S$ carries the free convex algebra over $S$.
As a consequence, on the concrete side, convexity has notably appeared in the semantics of probabilistic systems, in particular probabilistic automata~\cite{SL94,Seg95:thesis,Mio14,HermannsKK14}. One particularly interesting development in the last decade in the theory of probabilistic systems is to consider probabilistic automata as transformers of \emph{belief states}, i.e., probability distributions over states, resulting in semantics on distributions, see~\cite{HermannsKK14,CPP09,CR11,DGHM08,DengGHM09,DengH13,PS07} to name a few. Convexity is inherent to this modelling and the resulting semantics that we call \emph{distribution bisimilarity}.

Additionally, on the abstract side, coalgebras over (categories of) algebras have attracted significant attention~\cite{SBBR10,JacobsSS15}. They make explicit the algebraic structure that is present in (the states of) transition systems and allow for its utilisation in the notion of semantics.  For coalgebras over convex algebras, the most important observation is that convex algebras are the Eilenberg-Moore algebras of the finitely supported distribution monad~\cite{swirszcz:1974,doberkat:2006,doberkat:2008,jacobs:2010}.  The first author, with coauthors, has recently studied the abstract coalgebraic foundation of probabilistic automata as coalgebras over convex algebras in~\cite{BSS17}, by providing suitable functors and monads on the category $\EM(\Dis)$ of Eilenberg-Moore algebras that model probabilistic automata. As a result, one gets a neat generic treatment and understanding of distribution bisimilarity.

One contribution of~\cite{BSS17} is identifying a convex-powerset monad on $\EM(\Dis)$ that together with a constant-exponent functor can be used to model probabilistic automata as coalgebras over $\EM(\Dis)$. However, the convex-powerset monad allows only for \emph{nonempty} convex subsets, and hence there is no notion of termination. As a consequence, one can only model \emph{input enabled} probabilistic automata. Hence, the question arises of how to add termination. One obvious way is to add termination that rules over any other behaviour: Consider a probabilistic automaton with two states $s$ and $t$; a distribution $ps + \bar pt$ with $\bar p=1-p$ 
over states $s$ and $t$ terminates if and only if one of the states terminates. We refer to this approach as \emph{black-hole} termination. Several distribution-bisimilarity semantics in the literature disagree on the treatment of termination, see e.g.~the discussion in~\cite{HermannsKK14} as well as~\cite{DGHM08,DengGHM09,DengH13,CPP09}. Understanding termination in probabilistic automata as transformers of belief states is the motivation for this work. On the level of convex algebras, termination amounts to the question of extending a convex algebra by a single element. 

Stated algebraically, the questions we address in this paper are: 
\begin{enumerate}
\item Given a convex algebra $\Alg X$, is it possible to extend it by a single point?
\item If yes, what are all possible one-point extensions? 
\item Which one-point extensions are functorial, i.e., do they provide a functor on $\EM(\Dis)$ that could then be used for modelling probabilistic automata as coalgebras over $\EM(\Dis)$? 
\end{enumerate}

Observe that extensions by a single point are different from the coproduct $\mathbb X+1$ in $\EM(\mathcal D_f)$; the coproduct was concretely described in \cite[Lemma~4]{jacobs.westerbaan.westerbaan:2015}, and 
	it has a much larger carrier than the {set} $X  + 1$.

Despite a large body of work on convex algebras, to the best of our knowledge, the problem of extending a convex algebra by a single element has not been studied, except for the black-hole extension mentioned above, see~\cite{fritz:2015}.

We answer the stated questions, and in particular our answers and main results are:
 
\begin{enumerate}
\item Yes, it is possible and there are many possible extensions in general. One of them is the mentioned black-hole extension. 
\item  We describe all possible extensions for the free convex algebra $\Alg D_S$ of probability distributions over a set $S$, see
Theorem~\ref{thm:free-extensions} in Section~\ref{sec:free}. Furthermore, we describe all possible extensions of an algebra $\CSS
\Alg D$ for $\Alg D$ being a convex subset of a vector space, satisfying a boundedness condition, see
Theorem~\ref{thm:Pc-extensions} in Section~\ref{sec:PcD}. As $\Alg D_S$ is a particular subset of a vector space, we get a
description of all possible extensions of $\CSS \Alg D_S$ which is exactly what is needed to understand termination in convex sets
of distributions.  \item We prove that only the black-hole extension is functorial, see Theorem~\ref{thm:functor} in
Section~\ref{sec:free}. 
\end{enumerate}

In addition, we provide many smaller results, observations, and examples that add to the vast knowledge on convex algebras. 

We mention that reading our results and proofs in detail does not require any prior knowledge beyond basics of algebra, with two
notable exceptions: (1) We do use some topological and geometric arguments in order to prove claims for the
construction of some of our examples; (2) We add small remarks about coalgebras and categories as we already did in this
introduction, assuming that readers are familiar with these basic notions (or will otherwise ignore the remarks made). 

This paper is an extended version of~\cite{SW17} including all proofs and additional examples (Section~\ref{sec:appendix-ex} and Section~\ref{sec:appendix-examples}).

\section{Convex Algebras}
\label{sec:CA}

By $\mathcal{C}$ we denote the signature of convex algebras
$$\mathcal{C} = \{ (p_i)_{i=0}^n \mid n \in \mathbb{N}, p_i \in [0,1], \sum_{i=0}^n p_i = 1 \}.$$
Intuitively, the $(n+1)$-ary operation symbol $(p_i)_{i=0}^n$ will be interpreted by a convex combination with coefficients $p_i$
for $i=0, \dots, n$.  For a real number $p \in [0,1]$ we set $\bar{p} = 1-p$. 

\begin{defi}\label{def:CA}
	A \emph{convex algebra} $\Alg{X}$ is an algebra with signature $\mathcal{C}$, i.e., a set $X$ together with an 
	operation $\sum_{i=0}^n p_i (-)_i$ for each operational symbol $(p_i)_{i=0}^n \in \mathcal{C}$, 
	such that the following two axioms hold:
	\begin{itemize}
		\item Projection: $\sum_{i=0}^n p_ix_i = x_j$ if $p_j = 1$.
		\item Barycenter: $\sum_{i=0}^n p_i \left(\sum_{j=0}^m q_{i,j} x_j\right) = 
			\sum_{j=0}^m \left( \sum_{i=0}^n p_i q_{i,j}\right) x_j$. 
\qedd
	\end{itemize}
\end{defi} 

We remark that the terminology in the literature is far from uniform. To give a few examples, 
convex algebras are called ``convex modules in \cite{pumpluen.roehrl:1995}, ``positive convex structures'' 
in \cite{doberkat:2006} (where $X$ is taken to be endowed with the discrete topology), 
``finitely positively convex spaces'' in \cite{wickenhaeuser:1988}, and ``sets with a convex structure'' 
in \cite{swirszcz:1974}. 

Convex algebras are the Eilenberg-Moore algebras of the finitely-supported distribution monad $\Dis$ on 
$\Sets$, cf.\ \cite[4.1.3]{swirszcz:1974} and \cite{semadeni:1973}, see also \cite{doberkat:2006,doberkat:2008} or 
\cite[Theorem~4]{jacobs:2010} where a concrete and simple proof is given. 
A convex algebra homomorphism is a morphism in the Eilenberg-Moore category $\EM(\Dis)$. 
Concretely, a convex algebra homomorphism $h$ from $\Alg{X}$ to $\Alg{Y}$ is a \emph{convex} (synonymously, \emph{affine}) map, 
i.e., $h\colon X \to Y$ with the property $h\left(\sum_{i=0}^n p_i x_i \right) = \sum_{i=0}^n p_i h(x_i)$.

\begin{rem}\label{rem:bin-suff}
	Let $\Alg{X}$ be a convex algebra. Then (for $p_n \neq 1$)
	\begin{equation}\label{eq:bin-suff}
		\sum_{i=0}^n p_i x_i = \overline{p_n} \left( \sum_{j=0}^{n-1} \frac{p_j}{\overline{p_n}} x_j\right) + p_n x_n
		.
	\end{equation}	
	Hence, an $(n+1)$-ary convex combination can be written as a binary convex combination using an $n$-ary convex
	combination. As a consequence, if $X$ is a set that carries two convex algebras $\Alg{X}_1$ and $\Alg{X}_2$ with
	operations $\sum_{i=0}^n p_i (-)_i$ and $\bigoplus_{i=0}^n p_i (-)_i$, respectively (and binary versions $+$ and $\oplus$,
	respectively) such that $px + \bar p y = px \oplus \bar p y$ for all $p, x, y$, then $\Alg X_1 = \Alg X_2$. 
\end{rem}

One can also see Equation~(\ref{eq:bin-suff}) as a definition --  the classical definition of Stone~\cite[Definition~1]{stone:1949}.
We make the connection explicit with the next proposition.

\begin{prop}\label{prop:bin-suff}
	Let $X$ be a set with binary operations $px + \bar py$ for $x, y \in X$ and $p \in (0,1)$. 
	Assume 
	\begin{itemize}
		\item Idempotence: $px + \bar p x = x$ for all $x\in X, p \in (0,1)$.
		\item Parametric commutativity: $px + \bar py = \bar p y + p x$ for all $x,y \in X, p \in (0,1)$.
		\item Parametric associativity: $p(qx + \bar qy) + \bar p z = 
			pqx + \overline{pq}\left( \frac{p\bar q}{\overline{pq}}y + \frac{\bar p}{\overline{pq}}z\right)$ 
			for all $x,y,z \in X, p,q, \in (0,1)$.  
	\end{itemize}
	Define $n$-ary convex operations inductively by the projection axiom and the formula~(\ref{eq:bin-suff}). Then $X$ becomes
	a convex algebra.
\end{prop}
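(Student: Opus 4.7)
The plan is to verify the two axioms of Definition~\ref{def:CA} for the $n$-ary operations built inductively from (\ref{eq:bin-suff}), essentially reconstructing the standard equivalence between Stone's binary presentation and the $n$-ary axiomatisation.

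\textbf{Step 1 (boundary handling).} I would first extend the binary operation to $p \in \{0, 1\}$ by $1 \cdot x + 0 \cdot y = x$ and $0 \cdot x + 1 \cdot y = y$, and check that idempotence, parametric commutativity, and parametric associativity continue to hold in this extended form (the degenerate cases of parametric associativity simply reduce to projection or commutativity). A short induction then yields the absorption lemma that if $p_n = 0$, then $\sum_{i=0}^n p_i x_i = \sum_{i=0}^{n-1} p_i x_i$. This ensures that vanishing coefficients in later inductions cause no trouble.

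\textbf{Step 2 (permutation invariance).} The definition via (\ref{eq:bin-suff}) singles out index $n$, so I must show that $\sum_{i=0}^n p_i x_i$ does not actually depend on the ordering of the pairs $(p_i, x_i)$. By induction on $n$ it suffices to interchange two adjacent pairs, and for this one combines parametric commutativity with a single application of parametric associativity, in essentially Stone's original calculation~\cite{stone:1949}. Once permutation invariance is available, the projection axiom for an arbitrary $j$ follows from the case $j = n$, which is the base convention of the definition.

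\textbf{Step 3 (barycenter).} This is the main obstacle. I would prove
\[
\sum_{i=0}^n p_i \Bigl(\sum_{j=0}^m q_{i,j} x_j\Bigr) \;=\; \sum_{j=0}^m \Bigl(\sum_{i=0}^n p_i q_{i,j}\Bigr) x_j
\]
by induction on $n$. In the inductive step one peels off $i = n$ on the left via (\ref{eq:bin-suff}), applies the induction hypothesis to the remaining summand, and then merges the two inner $m$-ary combinations into a single one using parametric associativity together with permutation invariance from Step~2. The real work is bookkeeping: the normalising factor $\overline{p_n}$ introduced by (\ref{eq:bin-suff}) has to cancel correctly against the analogous factors that appear when one renormalises the combined weights $p_i q_{i,j}$ on the right. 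Introducing $s_j := \sum_{i=0}^n p_i q_{i,j}$ and isolating the corner cases $s_j \in \{0, 1\}$ (handled by Step~1 and the projection axiom) makes the computation manageable, and the remaining generic case reduces, after a single application of parametric associativity, to an instance of the inductive hypothesis.
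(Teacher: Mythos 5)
Your proposal is correct and follows essentially the same route as the paper, whose proof consists of a one-line deferral to the induction in Stone's Lemmas~1--4 of \cite{stone:1949}: handle the degenerate coefficients, establish permutation invariance via adjacent transpositions, and then prove the barycenter law by induction on $n$. Your three steps are a faithful reconstruction of exactly that argument, with no substantive deviation.
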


\begin{proof}
	The proof is carried out by induction along the lines of \cite[Lemma~1--Lemma~4]{stone:1949}.
\end{proof}

This allows us to focus on binary convex combinations whenever more convenient. 

\begin{defi}\label{def:convex-set}
	Let $\Alg X$ be a convex algebra, and $C \subseteq X$. Then $C$ is called \emph{convex} if 
	it is the carrier of a subalgebra of $\Alg X$, i.e., if $px + \bar p y \in C$ for all $x, y \in C$ and $p \in (0,1)$.
\qedd
\end{defi}

Given convex algebras $\Alg X_i$, $i\in I$, the direct product $\prod_{i\in I}X_i$ is a convex algebra 
$\prod_{i\in I}\Alg X_i$ with operations defined componentwise. 
We call a relation on the carrier of a convex algebra $\Alg X$ a \emph{convex relation}, 
if it is a convex subset of $\Alg X\times\Alg X$. 

\begin{defi}\label{def:cancel}
	Let $\Alg X$ be a convex algebra. An element $z \in X$ is $\Alg X$-\emph{cancellable} if 
	$$\forall x,y \in X. \,\,\forall p \in (0,1). \,\,px + \bar pz = py + \bar pz \Rightarrow x=y.$$
	The convex algebra $\Alg X$ is \emph{cancellative} if every element of $X$ is $\Alg X$-cancellable. 
\qedd
\end{defi}

\begin{defi}\label{def:adheres}
	Let $\Alg X$ be a convex algebra. An element $x \in X$ \emph{adheres to} an element $y \in X$, 
	notation $x \adh y$, if $px + \bar p y = y$ for all $p \in (0,1)$. 
\qedd
\end{defi}

Observe that for a cancellative algebra the adherence relation equals the identity relation. 
As examples show, the converse need not hold.
The following simple properties of adherence will be needed on multiple occasions.

\begin{lem}\label{lem:adh-props}
Let $\Alg X$ be a convex algebra. The following properties hold.
\begin{enumerate}
 	\item For all $x, y \in X$, $x \adh y$ if and only if $px + \bar p y = y$ for some $p \in (0,1)$.
 	\item The adherence relation is reflexive and convex.
	\item For all $x, y \in X$, if $x \adh y$ then $pz  + \bar px \adh pz + \bar py$ for all $z \in X$ and $p \in (0,1)$. 
	\item If $z$ is $\Alg X$-cancellable, then for all $x, y \in X$ and $p \in (0,1)$
		$$
			pz  + \bar px \adh pz + \bar py \,\,\,\Rightarrow\,\,\, x \adh y
			.
		$$
\end{enumerate}    
\end{lem}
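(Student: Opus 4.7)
The plan is to handle each of the four claims in turn, treating (1) as the main substantive step and deducing (2)--(4) from direct barycenter computations.

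For (1), one direction is immediate from the definition; for the converse, I assume $y = p_0 x + \bar{p_0} y$ for a single $p_0 \in (0,1)$ and aim to show $y = qx + \bar q y$ for every $q \in (0,1)$. The first move is a \emph{downward closure} step: for any $s \in (0,1]$, idempotence followed by barycenter gives
\[
y = sy + \bar s y = s(p_0 x + \bar{p_0} y) + \bar s y = sp_0\cdot x + (1 - sp_0)\cdot y,
\]
so $y = rx + \bar r y$ for every $r \in (0, p_0]$. The second move is an \emph{upward iteration}: substituting $y = p_0 x + \bar{p_0} y$ into its own right-hand side and applying barycenter gives $y = (1 - \bar{p_0}^{\,2}) x + \bar{p_0}^{\,2} y$, and by induction $y = (1 - \bar{p_0}^{\,n}) x + \bar{p_0}^{\,n} y$ for every $n \geq 1$. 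Since $\bar{p_0} \in (0,1)$, for any given $q \in (0,1)$ I can pick $n$ with $q \leq 1 - \bar{p_0}^{\,n}$ and then apply the downward closure step, with $p_0$ replaced by $1 - \bar{p_0}^{\,n}$, to conclude $y = qx + \bar q y$. The only real obstacle in the lemma is making the downward and upward arguments meet at an arbitrary $q$; once this is done, adherence reduces to a single-parameter check, which will be useful throughout the paper.

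Reflexivity in (2) is immediate from idempotence. For convexity, I have to show that $x_1 \adh y_1$ and $x_2 \adh y_2$ imply $(p x_1 + \bar p x_2) \adh (p y_1 + \bar p y_2)$ for every $p \in (0,1)$. Given $q \in (0,1)$, expanding $q(px_1 + \bar p x_2) + \bar q(py_1 + \bar p y_2)$ via barycenter and regrouping by the coefficient of $p$ versus $\bar p$ produces
\[
p(qx_1 + \bar q y_1) + \bar p(qx_2 + \bar q y_2) = py_1 + \bar p y_2,
\]
where the inner binary combinations collapse by the hypotheses $x_i \adh y_i$.

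Claim (3) is the specialization of (2) to the pair $(z, z)$, which adheres to itself, and the pair $(x, y)$ with $x \adh y$; so no new computation is needed. For (4), the same regrouping used in (3) applied to the hypothesis $pz + \bar p x \adh pz + \bar p y$ gives $pz + \bar p(qx + \bar q y) = pz + \bar p y$ for every $q \in (0,1)$; this puts the equation exactly in the form governed by the cancellation axiom for $z$, with the ``cancellation parameter'' being $\bar p \in (0,1)$, so cancelling the $z$-summand yields $qx + \bar q y = y$ and hence $x \adh y$. The only subtlety is matching parameters with Definition~\ref{def:cancel}, which is a direct check.
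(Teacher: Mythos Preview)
Your proof is correct. Parts (2)--(4) coincide with the paper's arguments essentially verbatim. For part (1), however, you take a genuinely different route: the paper observes that the map $\varphi\colon[0,1]\to\Alg X$, $\varphi(p)=px+\bar py$, is a convex homomorphism whose kernel is a congruence on $[0,1]$ strictly larger than the diagonal, and then invokes an external result (Flood's classification of congruences on $[0,1]$) to conclude that $\varphi$ is constant on $(0,1)$. Your argument is entirely self-contained: the downward-closure step together with the iteration $y=(1-\bar{p_0}^{\,n})x+\bar{p_0}^{\,n}y$ reaches every $q\in(0,1)$ without appealing to any outside lemma. The paper's approach is conceptually cleaner---it reduces the question to the known structure of congruences on $[0,1]$---while yours is more elementary and avoids the dependence on the cited reference.
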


\begin{proof}
\hfill
	\begin{enumerate}
	\item Let $x, y \in X$.
		Consider the map $\varphi\colon [0,1] \to \Alg X$ defined by $\varphi(p) = px + \bar py$.
		Easy calculations show that 
		\begin{equation}\label{eq:ccstar1}
			(qp + \bar qr)x + \overline{(qp + \bar q r)}y = q(px+\bar py)+\bar q (rx + \bar r y) 
			,
		\end{equation}
		showing that $\varphi$ is convex. 
		The implication $\Rightarrow$ trivially holds. For the implication $\Leftarrow$ assume that $rx + \bar ry = y$ for some 
		$r \in (0,1)$. Then $\varphi(0) = y = \varphi(r)$ showing that the kernel of $\varphi$ 
		is a congruence of $[0,1]$ which is not the diagonal. Recall that a congruence on $[0,1]$ is an equivalence $R$ with the property that for all $p \in [0,1]$ and $(x_1, y_1), (x_2,y_2) \in R$, also $(px_1 + \bar p x_2, py_1 + \bar p y_2) \in R$. Equivalently, congruences are kernels of homomorphisms. By \cite[Lemma~3.2]{flood:1981}, 
		$\varphi$ is constant on $(0,1)$. 
		This shows that for all $p \in (0,1)$, $px + \bar py = y$ and hence $x \adh y$. 
	\item Reflexivity is a direct consequence of idempotence. 
		Let $x,y,u,v \in X$ and assume $x \adh y$ and $u \adh v$. Then
		$$
			q(px + \bar p u) + \bar q(py + \bar p v) = p(qx + \bar q y) + \bar p(qu + \bar q v) = py + \bar p v
			.
		$$ 
	\item Direct consequence of reflexivity and convexity of adherence.
	\item Assume $pz  + \bar px \adh pz + \bar py$ and $z$ is $\Alg X$-cancellable. Let $q \in (0,1)$. Then 
		$$pz + \bar py = q(p z + \bar px) + \bar q (pz + \bar py) = pz + \bar p(qx + \bar q y)$$ 
		implies $qx + \bar q y = y$, after cancelling $z$. Hence $x \adh y$.
\qedhere
	\end{enumerate}
\end{proof}

\begin{exa}\label{ex:CAs}~Here are two examples of convex algebras.
	\begin{enumerate}
	\item Let $\Alg V$ be a vector space over $\mathbb R$ and $X \subseteq V$ a convex subset.
		Then $X$ with the operations inherited from $\Alg V$ is a cancellative convex algebra $\Alg X$.
		Conversely, every cancellative convex algebra is isomorphic to a convex subset of a vector space, 
		cf.\ \cite[Theorem~2]{stone:1949} or \cite[Satz~3]{kneser:1952}. 
	\item In particular, we consider the vector space $\ell^1(S)$ for a set $S$. 
		Recall, $\ell^1(S) = \{(r_s)_{s \in S} \mid r_s \in \mathbb{R}, \sum_{s \in S} |r_s| < \infty \}$ 
		with the norm $\|(r_s)_{s \in S}\|_1=\sum_{s \in S} |r_s|$. 
		The set $\Dis S$ of finitely supported 
		probability distributions over $S$ forms a convex subset of $\ell^1(S)$ and hence 
		a cancellative convex algebra $\Alg D_S$. It is shown in \cite[Lemma~1]{neumann:1970} that $\Alg D_S$ 
		is the free convex algebra generated by $S$, i.e., every map of $S$ into some convex algebra $\Alg X$ 
		has a unique extension to a homomorphism from $\Alg D_S$ into $\Alg X$. 
	\end{enumerate}
\end{exa}

The following construction is basic for our considerations.

\begin{defi}\label{def:Pc}
	Let $\Alg X$ be convex algebra. Then $\CSS X$ denotes the set of nonempty convex subsets of $X$, i.e., 
	carriers of subalgebras of $\Alg X$. We endow $\CSS X$ with the pointwise operations 
	\[
		p A + \bar p B = \{p a + \bar p b \mid a \in A, b\in B\}
		.
	\]
\end{defi}

Then $\CSS X$ forms a convex algebra, cf.\ \cite{BSS17}, and we write $\CSS\Alg X$ for this algebra. 
Note that requiring the elements of $\CSS X$ to be nonempty is necessary for the projection axiom to hold. 

We note that $\CSS$ is a monad on $\EM(\Dis)$ as shown 
in~\cite{BSS17}. On morphisms, $\CSS$ acts as the powerset monad.  
The original algebra $\Alg X$ embeds in $\CSS\Alg X$ via the unit of the monad $\eta\colon x\mapsto\{x\}$. 

For convex subsets of a finite dimensional vector space, the pointwise operations are known as 
Minkowski addition and are a basic construction in convex geometry, cf.~\cite{schneider:1993}.

The algebra $\CSS\Alg X$ is in general not cancellative 
and has a nontrivial adherence relation, cf.\ \cite[Example~6.3]{fritz:2015}. However it contains cancellative
elements: It is easy to check that for each $\Alg X$-cancellable element $x$ the element $\{x\}$ is 
$\CSS\Alg X$-cancellable. 

\begin{exa}\label{ex:CAss}~Here are further two examples of convex algebra which are of particular interst 
	in this paper.
	\begin{enumerate}
	\item The motivating example for this work is the convex algebra $\CSS\Alg D_S$ of convex subsets of 
		distributions over a set $S$. 
	\item Let $X$ be the carrier of a meet-semilattice and define $p x + \bar p y = x \wedge y$ for $x, y \in X$ and $p \in (0,1)$. 
		Then $X$ becomes a convex algebra $\Alg X$ with these operations, cf.\ \cite[\S4.5]{neumann:1970}. 
		This algebra is not cancellative, in fact $\adh = \{(x,y) \mid x \ge y\}$. For the categorically minded, we remark
		that behind this construction is the support monad map from $\Dis$ to $\Pow$, the finite powerset monad, and
		semilattices are the Eilenberg-Moore algebras of $\Pow$. 
	\end{enumerate}
\end{exa}

We now present a construction that provides a beautiful way of constructing convex algebras out of existing ones. 

\begin{exa}~\label{ex:semilattice-construction}
	The \emph{semilattice construction}, cf.\ \cite[p.22f]{fritz:2015}: 
	Let $S$ be the carrier of a meet-semilattice and let $(\Alg X_s)_{s \in S}$ be an $S$-indexed family of convex algebras. 
	Moreover, let $(f_s^t)_{\substack{s, t \in S\\ s \le t}}$ be a family of convex algebra homomorphisms 
	$f_s^t \colon \Alg{X}_t \to \Alg{X}_s$ that satisfy $f_s^t \after f_t^u = f_s^u$ for all $s \le t \le u$, 
	and $f_s^s = \id_{X_s}$ for all $s \in S$. 
	Let $X$ be the disjoint union of all $X_s$ for $s \in S$. 
	Then $X$ becomes a convex algebra $\Alg X$ with operations defined by 
	$p x + \bar p y = p f_{s\wedge t}^s(x) + \bar p f_{s\wedge t}^t(y)$ for $x \in X_s$, $y \in X_t$, and $p \in (0,1)$.
	The algebra $\mathbb X$ obtained in this way is the direct limit of the diagram formed by the algebras $\mathbb X_s$ and the maps $f^t_s$.
\end{exa}

\begin{defi}\label{def:ideals}
	Let $\Alg X$ be a convex algebra, and $P, Q \subseteq X$. 
	\begin{itemize}
		\item $P$ is an \emph{ideal} if $\forall x \in P.\,\, \forall y\in X.\,\, \forall p \in (0,1).\,\, px + \bar py \in P$. 
		\item $P$ is a \emph{prime ideal} if it is an ideal and its complement $X\setminus P$ is convex.
		\item $Q$ is an \emph{extremal set} if $px + \bar p y \in Q \Rightarrow x,y \in Q$ for all $x,y \in X, p \in (0,1)$.
		\item $z\in X$ is an \emph{extremal point} if $\{z\}$ is an extremal set. 
			Explicitly: $z$ is an extremal point if whenever $px+\bar py=z$ for $x,y\in X,p\in(0,1)$, 
			it follows that $x=y=z$. 
			The set of all extremal points of $\Alg X$ is denoted as $\Ext\Alg X$.
\qedd
	\end{itemize}
\end{defi} 

Again, the terminology used in the literature is not uniform: in \cite[Definition~7]{jacobs:2010} extremal sets 
are called \emph{filters}. Moreover, let us remark that the construction $\Ext$ is not functorial.

\begin{lem}\label{lem:ideal-extr}
	Let $\Alg X$ be a convex algebra, and $P\subseteq X$. Then $P$ is an ideal if and only if $X \setminus P$ is an extremal set. 
\end{lem}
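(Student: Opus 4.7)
The plan is to prove both implications by contraposition, essentially unpacking the two definitions and observing that the only subtlety is the asymmetry in the definition of an ideal, which is handled by parametric commutativity.

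For the forward direction, I would assume $P$ is an ideal and let $x,y \in X$, $p \in (0,1)$ with $px+\bar p y \in X\setminus P$; the goal is to conclude $x,y \in X\setminus P$. Suppose for contradiction that $x \in P$. Then the ideal axiom directly gives $px + \bar p y \in P$, contradicting the assumption. If instead $y \in P$, I rewrite $px + \bar p y = \bar p y + p x$ via parametric commutativity (Lemma~\ref{lem:adh-props} uses the same trick implicitly; it is built into the convex algebra axioms) and apply the ideal axiom with $y$ in the distinguished role, again obtaining a contradiction. Hence both $x$ and $y$ lie in $X \setminus P$, so $X \setminus P$ is extremal.

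For the converse, assume $X \setminus P$ is an extremal set and let $x \in P$, $y \in X$, $p \in (0,1)$. If $px + \bar p y \notin P$, then $px + \bar p y \in X \setminus P$, and extremality forces $x \in X \setminus P$, contradicting $x \in P$. Thus $px + \bar p y \in P$, proving $P$ is an ideal.

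The argument is essentially routine; the only ``obstacle'' worth flagging is the observation that the definition of ideal only makes the ``$x \in P$'' hypothesis explicit, so one must invoke parametric commutativity of binary convex combinations to recover the symmetric version. Since the paper has established that binary combinations suffice (Remark~\ref{rem:bin-suff} and Proposition~\ref{prop:bin-suff}), this presents no real difficulty. No induction on the arity of operations is required, because both ideals and extremal sets are defined purely through binary combinations.
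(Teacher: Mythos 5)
Your proof is correct and follows essentially the same route as the paper's: both directions are proved by contraposition directly from the definitions. The only difference is that you make explicit the use of parametric commutativity to handle the case $y \in P$ in the forward direction, a step the paper's proof treats as immediate.
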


\begin{proof}
	Assume $P$ is an ideal. Let $x,y \in X, p \in (0,1)$ such that $px + \bar p y \in X \setminus P$. 
	If $x \in P$ or $y \in P$, then since $P$ is an ideal also $px + \bar p y \in P$, a contradiction. 
	Hence $x, y \in X \setminus P$. 
	
	For the converse, assume $X \setminus P$ is extremal and let $x \in P, y \in X, p \in (0,1)$. 
	If $px + \bar p y \notin P$, then $px + \bar p y \in X \setminus P$ which implies $x, y \in X \setminus P$, a contradiction. 
	Hence, $px + \bar p y \in P$.
\end{proof}

\section{The Problem and Some Example Solutions}
\label{sec:problem-ex}

\framebox[\textwidth][l]{
	\raisebox{-4mm}{\rule{0pt}{10mm}}\kern1pt\parbox{\textwidth-5mm}{
	Let $\Alg X$ be a convex algebra. 
	Can one extend it for one element to a convex algebra $\Alg X_*$ with carrier $X \cup \{*\}$ where $* \notin X$? 
	If yes, what are all possible extensions?
	}
}
\\[3mm]
We will show that an arbitrary convex algebra $\Alg X$ can be extended in many ways,
and describe all possible ways of extending $\Alg X=\Alg D_S$ and $\Alg X = \CSS\Alg D_S$. 

First, we provide four examples of extensions, two of which are instances of the semilattice construction 
from Example~\ref{ex:semilattice-construction}.

\begin{exa}~\label{ex:star} 
	Let $\Alg X$ be a convex algebra and $* \notin X$. We denote the operations of $\Alg X$ as before by $p (-) + \bar p (-)$. 
	In each of the examples we construct a convex algebra $\Alg X_*$ with operations denoted by $p (-) \oplus \bar p (-)$ 
	satisfying $px\oplus\bar py=px+\bar py$, $x,y\in X$, $p\in[0,1]$.
	\begin{enumerate}
	\item Black-hole behaviour, cf.\ \cite[Example~6.1]{fritz:2015}: 
		In this example, $*$ behaves like a black hole and swallows everything in the sense that $x \adh *$ for all $x \in X$.
		To be precise, consider the semilattice $S = \{0,1\}$ with $0 \le 1$. Let $\Alg X_0$ be the trivial convex algebra with 
		$X_0 = \{*\}$ and $\Alg X_1 = \Alg X$. Let $f_0^1: \Alg X_1 \to \Alg X_0$ be the unique homomorphism 
		(mapping everything to $*$). Then the semilattice construction gives us a convex algebra $\Alg X_*$ with the property 
		\begin{equation}\label{eq:black-hole}
			px \oplus \bar py = 
			\begin{cases}
				px + \bar p y &\hspace*{-3mm},\quad x, y \in X,\\
				* &\hspace*{-3mm},\quad x = * \text{ or } y = *.
			\end{cases}
		\end{equation}
	\item Imitating behaviour: 
		The intuition behind this construction is that $*$ imitates the behaviour of a given element $w \in X$.
		Formally, we want to build $\Alg X_*$ in such a way that 
		$px\oplus\bar p*=px+\bar p w$ for all $p\in(0,1]$ and $x\in X$. 

		To do this, consider again the semilattice $S = \{0,1\}$ with $0 \le 1$. 
		Let $\Alg X_0 = \Alg X$ and $\Alg X_1$ be the trivial convex algebra with $X_1 = \{*\}$. 
		Let $f_0^1: \Alg X_1 \to \Alg X_0$ be the homomorphism mapping $*$ to $w$. 
		Then the semilattice construction gives us a convex algebra $\Alg X_*$ with the property 
		\begin{equation}\label{eq:imitating}
			px \oplus \bar py =
			\begin{cases}
				px + \bar p y &\hspace*{-3mm},\quad x, y \in X,\\
				px + \bar p w &\hspace*{-3mm},\quad x \in X, y = *,\\ 
				pw + \bar p y &\hspace*{-3mm},\quad x =*, y \in X,\\ 
				* &\hspace*{-3mm},\quad x=y=*.
			\end{cases}
		\end{equation}
	\item Imitating an outer element: 
		Assume we are given a convex algebra $\Alg Y$ which contains $\Alg X$ as a subalgebra. 
		Let $w\in Y\setminus X$ be such that $X\cup\{w\}$ is convex. 
		Then we obtain an extension $\Alg X_*$ by identifying $X\cup\{*\}$ with $X\cup\{w\}$ 
		via $x\mapsto x$ for $x\in X$ and $*\mapsto w$. We say that $*$ imitates the 
		outer element $w$, since $px\oplus\bar p*=px+\bar p w$ for all $p\in(0,1]$ and $x\in X$. 

		This way of defining extensions is of course trivial, 
		but it is useful in presence of a natural larger algebra. 
		For example, we will apply it when $D$ is a convex subset of a vector 
		space $\Alg V$, $\Alg X = \CSS\Alg D$, and $\Alg Y = \CSS\Alg V$. 
	\item Mixed behaviour: Let $w$ be an extremal point of $\Alg X$. 
		The intuition in this example is that $*$ imitates $w \in X$ on $X \setminus \{w\}$ and swallows $w$. 
		That is, we want to build $\Alg X_*$ according to 
		\begin{equation}\label{eq:mixed}
			px \oplus \bar py = 
			\begin{cases}
				px + \bar p y &\hspace*{-3mm},\quad x, y \in X,\\
				px + \bar p w &\hspace*{-3mm},\quad x \in X \setminus\{w\}, y = *,\\ 
				pw + \bar p y &\hspace*{-3mm},\quad x =*, y \in X\setminus\{w\},\\ 
				* &\hspace*{-3mm},\quad \text{otherwise}.
			\end{cases}
		\end{equation}
		The fact that this construction indeed produces a convex algebra is not an instance of the 
		semilattice construction and requires a proof, which 
		we give in Section~\ref{sec:free} below (p.\pageref{prf:ex:star.4}).
	\end{enumerate}
\end{exa}

\section{Extensions of Convex Algebras - The Prime Ideal}
\label{sec:necessity}
 
The following two notions provide a crucial characteristic of an extension $\Alg X_*$ for a convex algebra $\Alg X$. 

\begin{defi}\label{def:Ad-Pri}
	Let $\Alg X$ be a convex algebra, and let $\Alg X_*$ be an extension. 
	Then its \emph{set of adherence} $\Adh(\Alg X_*)$ is
	$\Adh(\Alg X_*) =\{x\in X\mid x \adh *\}$
	and its \emph{prime ideal} is  $\Pri(\Alg X_*) = X\setminus\Adh(\Alg X_*)$. 
\hspace*{0pt}\qedd
\end{defi}

\begin{lem}\label{lem:Ad-Pri}
	Let $\Alg X$ be a convex algebra, and let $\Alg X_*$ be an extension of $\Alg X$. 
	The set $\Pri(\Alg X_*)$ is indeed a prime ideal of $\Alg X$. 
\end{lem}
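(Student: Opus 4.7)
The plan is to verify both defining properties of a prime ideal of $\Alg X$ for the set $\Pri(\Alg X_*) = X \setminus \Adh(\Alg X_*)$: that it is an ideal, and that its complement $\Adh(\Alg X_*)$ is convex in $\Alg X$. By Lemma~\ref{lem:ideal-extr}, the ideal condition is equivalent to $\Adh(\Alg X_*)$ being an extremal subset of $\Alg X$, so the two properties I need to establish for $\Adh(\Alg X_*)$ are convexity and extremality in $\Alg X$.

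Convexity is essentially immediate from Lemma~\ref{lem:adh-props}(2). If $x,y \in X$ satisfy $x \adh *$ and $y \adh *$, then $(x,*)$ and $(y,*)$ both lie in the adherence relation of $\Alg X_*$; convexity of that relation, together with idempotence $p * + \bar p * = *$, gives $(px + \bar p y,\, *) \in \adh$, i.e.\ $px + \bar p y \in \Adh(\Alg X_*)$.

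The main work is extremality: given $w, x \in X$ and $p \in (0,1)$ with $pw + \bar p x \adh *$, I need to conclude that both $w \adh *$ and $x \adh *$. The plan is to use an asymmetric decomposition. By Lemma~\ref{lem:adh-props}(1), $q(pw + \bar p x) + \bar q * = *$ for every $q \in (0,1)$. Expanding this ternary expression via the barycenter axiom and regrouping by formula~(\ref{eq:bin-suff}) so that $w$ is pulled out separately, one obtains
\[
    qp \cdot w \;+\; \overline{qp} \cdot v \;=\; *,
    \qquad v \;=\; \tfrac{q\bar p}{\overline{qp}}\, x \,+\, \tfrac{\bar q}{\overline{qp}}\, * ,
\]
where all the coefficients lie in $(0,1)$ and sum correctly to $1$ (easy verifications). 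The element $v$ cannot lie in $X$: otherwise the left-hand side would be a binary convex combination of two elements of the subalgebra $\Alg X$, hence an element of $X$, contradicting that it equals $* \notin X$. Therefore $v = *$. Applying Lemma~\ref{lem:adh-props}(1) to the equation $v = *$ (with coefficient $\tfrac{q\bar p}{\overline{qp}} \in (0,1)$) yields $x \adh *$, while applying it to $qp \cdot w + \overline{qp} \cdot * = *$ yields $w \adh *$, giving the desired extremality.

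The main obstacle is precisely this extremality step. A symmetric attempt---for instance decomposing $\tfrac{1}{2} z + \tfrac{1}{2} *$ as $p(\tfrac{1}{2} w + \tfrac{1}{2} *) + \bar p(\tfrac{1}{2} x + \tfrac{1}{2} *)$ and using closure of $\Alg X$ under binary combinations---only forces at least one of the two auxiliary elements to be $*$, giving $w \adh *$ \emph{or} $x \adh *$, but not both. The asymmetric rearrangement above instead bundles $x$ together with $*$ into a single auxiliary element $v$, so that one application of closure in $\Alg X$ collapses $v$ to $*$, and this single identity simultaneously delivers both required adherences.
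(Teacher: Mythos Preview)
Your argument is correct. Both proofs rest on the same barycenter rearrangement that groups one of the two $X$-elements together with $*$; the difference is only in orientation. The paper proves the ideal property directly: starting from $x\in\Pri(\Alg X_*)$ and $y\in X$, it rewrites
\[
	q(px+\bar py)+\bar q*=\overline{q\bar p}\Big(\tfrac{qp}{\overline{q\bar p}}x+\tfrac{\bar q}{\overline{q\bar p}}*\Big)+q\bar p\,y
\]
and uses $x\nadh *$ to see that the bracketed term lies in $X$, whence the whole expression lies in $X$ and $px+\bar py\in\Pri(\Alg X_*)$. You instead pass through Lemma~\ref{lem:ideal-extr} and prove the contrapositive (extremality of $\Adh(\Alg X_*)$): assuming $pw+\bar px\adh *$, the same rearrangement gives $qp\,w+\overline{qp}\,v=*$, and now the dichotomy $X_*=X\cup\{*\}$ forces $v=*$, from which both adherences drop out. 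The paper's route is marginally shorter (no appeal to Lemma~\ref{lem:ideal-extr}, no case split on $v$), while yours makes explicit that $\Adh(\Alg X_*)$ is simultaneously convex and extremal, which is a pleasant symmetry.
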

\begin{proof}
	Let $x\in\Pri(\Alg X_*)$, $y \in X$, $p\in(0,1)$. Then
	$$q(px + \bar py) + \bar q * = \overline{q\bar p}\left(\frac{qp}{\overline{q\bar p}}x + \frac{\bar q}{\overline{q\bar p}}*\right) + q\bar p y \in X$$
	since $y \in X$ and $\frac{qp}{\overline{q\bar p}}x + \frac{\bar q}{\overline{q\bar p}}* \in X$ due to $x \in \Pri(\Alg X_*)$ and hence $x \notin \Adh(\Alg X_*)$. 
	Therefore, $px + \bar py \in \Pri(\Alg X_*)$ proving that $\Pri(\Alg X_*)$ is an ideal in $\Alg X$. 
	By Lemma~\ref{lem:adh-props}.2 $\Adh(\Alg X_*)$ is convex and hence $\Pri(\Alg X_*)$ is prime.
\end{proof}

The next lemma gives a way to conclude that $*$ imitates an element. 

\begin{lem}\label{lem:uniqueness-lemma}
	Let $\Alg Y$ be a convex algebra, $\Alg X\leq\Alg Y$ a subalgebra, 
	and let $\Alg X_*$ be an extension of $\Alg X$. 
	Further, let $z\in\Pri(\Alg X_*)$ and assume that $z$ is $\Alg Y$-cancellable. 
	If there exist $w\in\Alg Y$ and $q\in(0,1)$ with $qz+\bar q*=qz+\bar qw$, then 
	$*$ imitates $w$ on $\Pri(\Alg X_*)$ and $\Adh(\Alg X_*)\subseteq\{x\in X\mid x\adh w\}$. 
\end{lem}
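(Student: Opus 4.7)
The two conclusions split naturally: (a) for every $x \in \Pri(\Alg X_*)$ and $p \in (0,1)$, the element $px + \bar p *$ (which lies in $X$ because $x \nadh *$) equals $px + \bar p w$ when viewed in $\Alg Y$; (b) every $x \in \Adh(\Alg X_*)$ satisfies $x \adh w$ in $\Alg Y$. Both parts rely on the fact that convex combinations of elements of $X$ coincide whether computed in $\Alg X_*$ or in $\Alg Y$, since $\Alg X$ is a subalgebra of both.

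For (a), I would first extend the hypothesis to a whole range of coefficients. Writing $c := qz + \bar q * = qz + \bar q w$, the combination $\lambda z + \bar\lambda c$ with $\lambda \in (0,1)$ is computed inside $X$ and equals $\beta z + \bar\beta *$ in $\Alg X_*$ and $\beta z + \bar\beta w$ in $\Alg Y$, where $\beta = \lambda + \bar\lambda q$ ranges over $(q,1)$. Given an arbitrary $x \in \Pri$ and $p \in (0,1)$, I would regroup $qz + \bar q(px + \bar p*) = qz + \bar q p x + \bar q \bar p *$ (barycenter in $\Alg X_*$) into the form $\alpha(\gamma z + \bar\gamma *) + \bar\alpha x$ with $\alpha = 1 - \bar q p$ and $\gamma = q/\alpha \in (q,1)$. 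The extended identity replaces the inner factor, and reversing the barycenter step in $\Alg Y$ turns the whole expression into $qz + \bar q(px + \bar p w)$. Cancellability of $z$ in $\Alg Y$ then delivers $px + \bar p * = px + \bar p w$.

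For (b), take $x$ with $x \adh *$. Since $z \in \Pri$ and $\Pri$ is an ideal (Lemma~\ref{lem:Ad-Pri}), the element $v := qz + \bar q x$ lies in $\Pri$, so (a) applies: $\lambda v + \bar\lambda * = \lambda v + \bar\lambda w$ in $\Alg Y$ for every $\lambda \in (0,1)$. Expanding the left side in $\Alg X_*$ and using $x \adh *$ to absorb the $x$-and-$*$ portion collapses it to $\lambda q z + \overline{\lambda q} *$; a further application of (a) with $z$ itself and parameter $\lambda q$ then rewrites this as $\lambda q z + \overline{\lambda q} w$ in $\Alg Y$. Comparing with the right side $\lambda q z + \lambda\bar q x + \bar\lambda w$ and cancelling $z$ leaves $w = \mu x + \bar\mu w$ with $\mu = \lambda\bar q/\overline{\lambda q} \in (0,1)$, whence $x \adh w$ by Lemma~\ref{lem:adh-props}(1).

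The main obstacle is the coefficient mismatch in (a): the hypothesis fixes the single value $q$, so the barycenter-regrouping has to be engineered so that the inner $z$-weight lands in the already-established range $(q,1)$. Repeated cancellation of $z$ in $\Alg Y$ is what dissolves the leftover ternary combinations, and one must keep careful track of which equations live in $\Alg X_*$ (where $*$ is a genuine element) versus $\Alg Y$ (where only elements of $X \cup \{w\}$ appear).
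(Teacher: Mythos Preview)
Your proposal is correct and follows the same underlying strategy as the paper: barycentric regrouping followed by cancellation of $z$ in $\Alg Y$. The packaging differs in two places. For (a), the paper picks a single coefficient $s=\bar p/\overline{pq}$ so that the regrouping of $sqz+\overline{sq}(px+\bar p*)$ lands \emph{exactly} on the given $qz+\bar q*$, avoiding your preliminary extension of the hypothesis to the range $\beta\in(q,1)$; your detour works, but the paper's choice of $s$ shows it is unnecessary. For (b), the paper invokes Lemma~\ref{lem:adh-props}(3) to get $pz+\bar px\adh pz+\bar p*$ directly, then substitutes $pz+\bar p*=pz+\bar pw$ from part (a) and cancels via Lemma~\ref{lem:adh-props}(4); you instead unfold the same content by hand through the element $v=qz+\bar qx$. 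Both routes are sound, but the paper's is shorter and makes the dependence on the adherence lemmas explicit rather than reproving them inline.
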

\begin{proof}
	Let $x\in\Pri(\Alg X_*)$, $p\in(0,1)$, and set $s=\frac{\bar p}{\overline{pq}}$. Then $s\in(0,1)$ and 
	$\overline{sq}\cdot p=\bar s$, $\overline{sq}\cdot\bar p=s\bar q$, and we have 
	\[
		sqz+\overline{sq}(\underbrace{px+\bar p*}_{\in\Pri(\Alg X_*)\subseteq Y})=
		s(qz+\bar q*)+\bar sx=s(qz+\bar qw)+\bar sx=
		sqz+\overline{sq}(\,\underbrace{px+\bar pw}_{\in Y}\,)
		.
	\]
	Cancelling $z$ yields $px+\bar p*=px+\bar pw$. We conclude that indeed $*$ imitates $w$ on all of 
	$\Pri(\Alg X_*)$. 
	Assume now that $x\in\Adh(\Alg X_*)$. Then by Lemma~\ref{lem:adh-props}.3.
	\[
		pz+\bar px\adh pz+\bar p*=pz+\bar pw,\quad \text{~for~} p\in(0,1)
		.
	\]
	Again using cancellability of $z$, it follows that $x\adh w$ by Lemma~\ref{lem:adh-props}.4. 
\end{proof}

\section{Extensions of Free Algebras and Functoriality}
\label{sec:free}

Let $S$ be a nonempty set and consider the free convex algebra over $S$. 
As noted in Example~\ref{ex:CAs}.2, this is the algebra $\Alg D_S$ of finitely supported distributions on $S$. 
In the next theorem we determine all possible one-point extensions of $\Alg D_S$. 

\begin{thm}\label{thm:free-extensions}
	Let $S$ be a nonempty set and consider the free convex algebra $\Alg D_S$.
	One-point extensions of $\Alg D_S$ can be constructed as follows:
	\begin{enumerate}
	\item The black-hole behaviour, where the set of adherence equals $\Dis S$. 
	\item Let $w\in\Dis S$, and let $*$ imitate $w$ on all of $\Dis S$. 
	\item Let $w$ be an extremal point of $\Alg D_S$, and let $*$ imitate $w$ on $\Dis S\setminus\{w\}$ and adhere $w$. 
	\end{enumerate}
	Every one-point extension of $\Alg D_S$ can be obtained in this way, and each two of these extensions are different. 
\end{thm}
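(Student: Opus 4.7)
The plan is to first verify that each of the three listed constructions really produces an extension, then show that the three families are pairwise distinct and that within families (2) and (3) different choices of $w$ yield different extensions, and finally to establish exhaustiveness: every one-point extension belongs to exactly one of the three families.

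For validity, cases (1) and (2) are direct instances of the semilattice construction of Example~\ref{ex:semilattice-construction} (they appear as Example~\ref{ex:star}.1 and Example~\ref{ex:star}.2), while case (3) is the mixed-behaviour construction of Example~\ref{ex:star}.4, whose correctness will be proved separately in this section. For distinctness, the key invariant is the set of adherence $\Adh(\Alg X_*)$: case (1) gives $\Adh(\Alg X_*)=\Dis S$, case (2) gives $\Adh(\Alg X_*)=\emptyset$, and case (3) gives $\Adh(\Alg X_*)=\{w\}$. Within cases (2) and (3) the parameter $w$ can be recovered from the algebra, e.g.\ via $\tfrac12x+\tfrac12*=\tfrac12x+\tfrac12w$ for any $x\in\Pri(\Alg X_*)$, so different $w$ yield genuinely different extensions.

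For exhaustiveness, let $\Alg X_*$ be any one-point extension and write $P=\Pri(\Alg X_*)$, $A=\Adh(\Alg X_*)$. If $P=\emptyset$, every $x\in\Dis S$ adheres to $*$, and combining this with commutativity and Lemma~\ref{lem:adh-props}.1 one checks that every operation involving $*$ collapses to $*$; this is the black-hole, case (1). Assume now $P\neq\emptyset$ and pick $z\in P$. Consider the map $\bar f\colon [0,1]\to X_*$, $\bar f(p)=pz+\bar p*$. A short calculation using the barycenter axiom shows that $\bar f$ is affine in the sense $\bar f(pa+\bar pb)=p\bar f(a)+\bar p\bar f(b)$. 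Since $z\nadh*$, Lemma~\ref{lem:adh-props}.1 yields $\bar f(p)\in\Dis S$ for every $p\in(0,1)$. Composing with the affine embedding $\Dis S\hookrightarrow\ell^1(S)$ from Example~\ref{ex:CAs}.2 gives an affine map into the vector space $\ell^1(S)$, determined by any two of its values; set $w:=2\bar f(\tfrac12)-z\in\ell^1(S)$, so that $\bar f(p)=pz+\bar pw$ in $\ell^1(S)$ for every $p\in[0,1]$.

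The first technical point is to argue $w\in\Dis S$. Finite support and total mass $1$ are immediate from the definition; positivity follows by fixing $s\in S$, using that $pz(s)+\bar pw(s)\geq 0$ for all $p\in(0,1)$ since $\bar f(p)$ is a probability distribution, and letting $p\to 0^+$. With $w\in\Dis S$ in hand, the identity $\tfrac12z+\tfrac12*=\tfrac12z+\tfrac12w$ and Lemma~\ref{lem:uniqueness-lemma}, applied with $\Alg Y=\Alg D_S$ and $q=\tfrac12$ using cancellativity of $\Alg D_S$, yield that $*$ imitates $w$ on all of $P$ and that $A\subseteq\{x\in\Dis S\mid x\adh w\}=\{w\}$, where the last equality uses that $\Alg D_S$ is cancellative and so adherence coincides with equality. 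If $A=\emptyset$ we are in case (2) with $P=\Dis S$. If $A=\{w\}$ then $P=\Dis S\setminus\{w\}$, and it remains to show that $w$ is extremal; I would do this by contradiction, assuming $w=px_1+\bar px_2$ with $x_1,x_2\ne w$, so $x_1,x_2\in P$, and expanding $q(px_1+\bar px_2)+\bar q*$ in two ways via the barycenter axiom, once using $w\adh*$ to get $qw+\bar q*=*$, and once using that $*$ imitates $w$ on $x_1,x_2$ to get an element of $\Dis S$, contradicting $*\notin\Dis S$. The main obstacle is the extraction of $w$: passing from the purely algebraic identity $\bar f(p)=pz+\bar p*$ inside $\Alg X_*$ to a concrete vector $w\in\Dis S$ via the $\ell^1$-embedding, and in particular justifying the positivity step; once $w\in\Dis S$ is in place, the remainder is driven by Lemma~\ref{lem:uniqueness-lemma} and cancellativity of $\Alg D_S$.
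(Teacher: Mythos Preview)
Your proposal is correct and follows essentially the same strategy as the paper: extract a candidate $w\in\ell^1(S)$ from the affine segment $p\mapsto pz+\bar p*$, invoke Lemma~\ref{lem:uniqueness-lemma} to conclude that $*$ imitates $w$ on $\Pri(\Alg X_*)$ and that $\Adh(\Alg X_*)\subseteq\{w\}$, and then case-split. The differences are only in the bookkeeping: you prove $w\in\Dis S$ first (via coordinatewise positivity as $p\to 0^+$) and then apply Lemma~\ref{lem:uniqueness-lemma} with $\Alg Y=\Alg D_S$, whereas the paper applies the lemma with $\Alg Y=\ell^1(S)$ first and deduces $w\in\Dis S$ afterwards from $\|w\|_1=1$; and for extremality of $w$ you argue by a direct barycenter computation, while the paper simply notes that $\Pri(\Alg X_*)=\Dis S\setminus\{w\}$ is a prime ideal (Lemma~\ref{lem:Ad-Pri}), hence $\{w\}$ is extremal by Lemma~\ref{lem:ideal-extr}. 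One small slip: your displayed identity $\bar f(p)=pz+\bar pw$ holds in $\ell^1(S)$ only for $p\in(0,1]$, not at $p=0$ where $\bar f(0)=*\notin\Dis S$; this is harmless since you only use $p\in(0,1)$.
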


Note that $w\in\Ext\Alg D_S$ if and only if $w$ is a corner point, in other words, a Dirac measure concentrated 
at one of the points of $S$. 

The fact that the constructions $(1)$ and $(2)$ give extensions is Example~\ref{ex:star}.1/2. 
The construction in $(3)$ is Example~\ref{ex:star}.4, for which we will now provide evidence. 
First, we prove a more general statement that we call the gluing lemma, which will be needed later as well. It gives a way to produce extensions with a prescribed set of adherence.

\begin{lem}[Gluing Lemma]\label{lem:glue}
	Let $\Alg X$ be a convex algebra, and $P \subseteq X$ a prime ideal. 
	Assume we have convex operations $p(-) \boxplus \bar p(-)$ on $\Alg P_*$ that extend $\Alg P$ (whose operations are
	inherited from $\Alg X$). Assume further that $\Adh(\Alg P_*) = \emptyset$ and that
	\begin{equation}\label{eq:glue-lem}
		px + \bar py \adh p x \boxplus \bar p*, \quad \text{~for~} x \in P, \,y \in X\setminus P, \,p \in (0,1).
	\end{equation}
	Then the operations $p(-) \oplus \bar p(-)$, $p\in(0,1)$, defined as follows extend $\Alg X$ to a 
	convex algebra $\Alg X_*$ with $\Adh(\Alg X_*) = X \setminus P$:
	$$px \oplus \bar py = \left\{\begin{array}{ll}
		px + \bar p y, &  x, y \in X,\\	
		px \boxplus py, & x=*,y\in P\text{ or }x\in P,y=*,\\
		*, & \text{otherwise}.
	\end{array}
	 \right.$$
\end{lem}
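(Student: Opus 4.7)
The plan is to invoke Proposition~\ref{prop:bin-suff}, so that the proposed operations induce well-defined $n$-ary convex combinations as soon as the three binary axioms (idempotence, parametric commutativity, parametric associativity) are established. Idempotence is immediate from the case split: on $X\times X$ it reduces to idempotence of $+$ in $\Alg X$, while $p*\oplus\bar p*=*$ falls under the ``otherwise'' clause. Parametric commutativity likewise follows at once from the symmetric shape of the three clauses together with parametric commutativity of $+$ in $\Alg X$ and $\boxplus$ in $\Alg P_*$.

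The core task is parametric associativity, which I would dispatch by case analysis on the membership of each of $x,y,z$ in the three-fold partition $P\sqcup(X\setminus P)\sqcup\{*\}$. Three blocks handle most subcases uniformly. When none of $x,y,z$ equals $*$, the equation reduces to parametric associativity in $\Alg X$. When all arguments lie in $P\cup\{*\}$, both sides can be computed purely with $\boxplus$ (noting, via Lemma~\ref{lem:adh-props}.1 and the assumption $\Adh(\Alg P_*)=\emptyset$, that $\boxplus$-combinations of a $P$-element with $*$ stay in $P$), and the equality follows from parametric associativity in $\Alg P_*$. When the combinatorial pattern forces an $X\setminus P$-argument to meet $*$ at every layer, both sides collapse to $*$ through repeated uses of the ``otherwise'' clause.

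The remaining subcases are the genuinely mixed ones, in which an $X\setminus P$-argument is first combined with a $P$-argument (landing in $P$ by the ideal property) before being mixed with $*$. Here the hypothesis~(\ref{eq:glue-lem}) is exactly what is needed: it allows me to replace each subterm of the form $px+\bar py$, with $x\in P$ and $y\in X\setminus P$, that subsequently gets convexly combined with $*$, by the $\boxplus$-expression $px\boxplus\bar p*$. Via Lemma~\ref{lem:adh-props}.3 this replacement is compatible with further convex combinations, and after performing it on both sides the equation reduces to a $\boxplus$-identity in $\Alg P_*$, which follows from parametric associativity together with the barycenter identity. I expect the main obstacle to lie in exactly this bookkeeping: identifying in each subcase which subexpressions need rewriting, applying~(\ref{eq:glue-lem}) with the correct weights, and matching the resulting barycentric weight profiles on the two sides.

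Finally, the computation $\Adh(\Alg X_*)=X\setminus P$ is straightforward from the definition of $\oplus$. For $y\in X\setminus P$ the ``otherwise'' clause gives $py\oplus\bar p*=*$ for every $p\in(0,1)$, so $y\adh *$; conversely for $x\in P$ one has $px\oplus\bar p*=px\boxplus\bar p*$, and $\Adh(\Alg P_*)=\emptyset$ together with Lemma~\ref{lem:adh-props}.1 forces this to differ from $*$ for every $p\in(0,1)$, so $x\nadh *$.
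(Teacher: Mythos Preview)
Your overall architecture matches the paper's: reduce to the binary axioms via Proposition~\ref{prop:bin-suff}, dispose of idempotence and commutativity immediately, and split associativity according to membership in $P$, $X\setminus P$, $\{*\}$. The three ``easy blocks'' are indeed handled by $\Alg X$, $\Alg P_*$, and the black-hole extension of $X\setminus P$ respectively, exactly as in the paper. The computation of $\Adh(\Alg X_*)$ at the end is also fine.

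The gap is in the mixed cases. Lemma~\ref{lem:adh-props}.3 does \emph{not} justify ``replacing'' $px+\bar py$ by $px\boxplus\bar p*$: from $u\adh v$ it yields only $qz\boxplus\bar qu\adh qz\boxplus\bar qv$, an adherence, not an equality. Concretely, take $x\in P$, $y\in X\setminus P$, $z=*$. The right-hand side of associativity evaluates directly to $pqx\boxplus\overline{pq}*$, while your strategy gives only that the left-hand side $p(qx+\bar qy)\boxplus\bar p*$ \emph{adheres to} $pqx\boxplus\overline{pq}*$. Nothing in the hypotheses forces adherence within $P$ to be trivial, so this does not close the case, and ``performing it on both sides'' does not help either: two expressions adhering to a common element need not coincide. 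What is actually needed are the \emph{equalities}
\[
q(rx\boxplus\bar r*)+\bar qy \,=\, qrx\boxplus\overline{qr}*\,,\qquad
q(rx+\bar ry)\boxplus\bar q* \,=\, qrx\boxplus\overline{qr}*
\]
for $x\in P$, $y\in X\setminus P$, $q,r\in(0,1)$. The paper derives each of these from~(\ref{eq:glue-lem}) by a short mixed $+/\boxplus$ rewriting that first brings the two adhering elements $tx+\bar ty$ and $tx\boxplus\bar t*$ \emph{together} in one convex combination (so that (\ref{eq:glue-lem}) collapses it to an equality rather than a one-sided adherence), and then dispatches the six mixed cases using these identities. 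That rearrangement is the substantive step you are missing; it is more than bookkeeping of barycentric weights.
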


\begin{proof}
	We first show two auxilliary properties for $x \in P$ and $y \in X\setminus P$: (\ref{eq:A}) $\Leftrightarrow$ (\ref{eq:B}) and (\ref{eq:A}) $\Leftrightarrow$ (\ref{eq:C}) for 
	\begin{align}
		& \forall p \in (0,1). \,\,px + \bar p y \adh px \boxplus \bar p*
		\label{eq:A}
		\\
		& \forall q,r \in (0,1). \,\, q(rx \boxplus \bar r *) + \bar q y = qrx \boxplus \overline{qr}* 
		\label{eq:B}\\
		& \forall q,r \in (0,1). \,\, q(rx + \bar r y) \boxplus \bar q * = qrx \boxplus \overline{qr}*
		\label{eq:C} 
	\end{align}
Let $x \in P, y \in X \setminus P$ and $s,t \in (0,1)$. 
For (\ref{eq:A}) $\Leftrightarrow$ (\ref{eq:B}) we first compute
\begin{eqnarray*}
	s(tx + \bar t y) + \bar s(t x  \boxplus \bar t *) & \stackrel{(b)}{=} &
	stx + s \bar t y + \bar s(tx \boxplus \bar t *)
	\\ 
	& \stackrel{(b)}{=} &
	\overline{s\bar t} \left( \frac{st}{\overline{s\bar t}}x + \frac{\bar s}{\overline{s\bar t}}(tx  \boxplus \bar t *)\right) + s\bar t y
	\\ 
	& \stackrel{(c)}{=} &
	\overline{s\bar t} \left( \frac{st}{\overline{s\bar t}}x \boxplus \frac{\bar s}{\overline{s\bar t}}(tx  \boxplus \bar t *)\right) + s\bar t y
	\\ 
	& \stackrel{(d)}{=} &
	\overline{s\bar t} \left( \frac{t}{\overline{s\bar t}}x \boxplus \frac{\bar s \bar t}{\overline{s\bar t}}*\right) + s\bar t y
\end{eqnarray*}
and refer to this equality as $(\dagger)$.
Here, $(b)$ is an application of barycenter in $\Alg X$; $(c)$ holds
since $x, tx \boxplus \bar t * \in P$; and $(d)$ is an application of barycenter in $\Alg P_*$.    

Now assume (\ref{eq:A}), let $q,r \in (0,1)$, and take $s = \frac{\bar q}{\overline{qr}}$ and $t = qr$. Then $s, t \in (0,1)$ and
\[
	q(rx \boxplus \bar r *) + \bar q y = 
	\overline{s\bar t} \left( \frac{t}{\overline{s\bar t}}x \boxplus \frac{\bar s \bar t}{\overline{s\bar t}}*\right) + s\bar t y 
	\stackrel{(\dagger)}{=}
	s(tx + \bar t y) + \bar s(t x  \boxplus \bar t *)
	\stackrel{(\ref{eq:A})}{=} 
	t x  \boxplus \bar t * = qrx \boxplus \overline{qr} *
\]
proving (\ref{eq:B}).

For (\ref{eq:B}) $\Rightarrow$ (\ref{eq:A}), assume (\ref{eq:B}), let $p \in (0,1)$, and take any $q, r\in (0,1)$ such that 
$p = qr$. Set again $s = \frac{\bar q}{\overline{qr}}$. Then $s\in (0,1)$ and 
$$px \boxplus \bar p* = qrx \boxplus \overline{qr} * \stackrel{(\ref{eq:B})}{=} 
q(rx \boxplus \bar r*) + \bar q y \stackrel{(\dagger)}{=} s(px + \bar py)+  \bar s(px  \boxplus \bar p*)$$
which proves (\ref{eq:A}).

For (\ref{eq:A}) $\Leftrightarrow$ (\ref{eq:C}) we now compute
\begin{eqnarray*}
	s(tx + \bar t y) \boxplus \bar s(t x  \boxplus \bar t *) & \stackrel{(d)}{=} & 
	s(tx + \bar t y) \boxplus \bar st x  \boxplus \bar s\bar t *\\ & \stackrel{(d)}{=} &
	\overline{\bar s\bar t} \left( \frac{s}{\overline{\bar s\bar t}}(tx + \bar ty) \boxplus \frac{\bar st}{\overline{\bar s\bar t}}x\right) \boxplus \bar s\bar t *\\ & \stackrel{(b)}{=} &
	\overline{\bar s\bar t} \left( \frac{t}{\overline{\bar s\bar t}}x + \frac{s \bar t}{\overline{\bar s\bar t}}y\right) \boxplus \bar s\bar t *
\end{eqnarray*}
and refer to this equality as $(\ddagger)$.

Now assume (\ref{eq:A}), let $q,r \in (0,1)$ be given, consider $p = qr$, and take $s = \frac{q \bar r}{\overline{qr}}$ and $t = p$. Hence $\bar s = \frac{\bar q}{\overline{qr}}$ and $t = qr$. Then $s, t \in (0,1)$, $\frac{t}{\overline{\bar s\bar t}} = r$, $\overline{\bar s \bar t} = q$ and hence
$$q(rx + \bar r y) \boxplus \bar q* \stackrel{(\ddagger)}{=} \frac{q\bar r}{\overline{qr}}(qrx + \overline{qr}y) \boxplus \frac{\bar q}{\overline{qr}}(qrx \boxplus \overline{qr}*) \stackrel{(\ref{eq:A})}{=} qr x \boxplus \overline{qr}*.$$

For the opposite direction, assume (\ref{eq:C}). Let $p$ be given. Pick $q,r \in (0,1)$ such that $p = qr$ and 
put $s = \frac{q\bar r}{\overline{qr}}$. Then again
$$px + \bar p* = qr x + \overline{qr}* \stackrel{(\ref{eq:C})}{=} q(rx + \bar ry) \boxplus \bar q* \stackrel{(\ddagger)}{=} s(px + \bar p y) \boxplus \bar s(px 
\boxplus \bar p*)$$ proving (\ref{eq:A}).

Coming back to the actual proof of the lemma, note that (\ref{eq:A}) is satisfied by assumption, hence both (\ref{eq:B}) and
(\ref{eq:C}) hold as well. Moreover, recall that $X\setminus P$ is a subalgebra of $\Alg X$, since $P$ is a prime ideal. 
Write $P' = X \setminus P$ and $\Alg P'$ for the corresponding convex algebra. 
Further, let $\Alg P'_*$ be the black-hole extension of $\Alg P'$ (operations denoted as $p(-)+'\bar p(-)$). 

We will show now that $p(-) \oplus \bar p(-)$ are indeed convex operations on $\Alg X_*$, i.e., 
we check idempotence, parametric commutativity, and parametric associativity, cf.\ Proposition~\ref{prop:bin-suff}. 

Since the operations of $\Alg X$ and $\Alg P_*$ coincide on $P$ and those of $\Alg X$ and $\Alg P'_*$ coincide 
on $P'$, the definition of $p(-) \oplus \bar p(-)$ gives
\[
	px \oplus \bar py = 
	\begin{cases}
		px + \bar py &\hspace*{-3mm},\quad x, y \in X,\\
		px \boxplus \bar py &\hspace*{-3mm},\quad x, y \in P_*,\\
		px+'\bar py &\hspace*{-3mm},\quad x, y \in P'_*.
	\end{cases}
\]
We have $X_*=P\cup P'\cup\{*\}$ and, by what we just observed, 
the operations $p(-)\oplus\bar p(-)$ coincide on the union of each two of these 
sets with the operations of a convex algebra (namely with $+$ of $\Alg X$ on $P\cup P'$, with $\boxplus$ of $\Alg P_*$ 
on $P\cup\{*\}$, and with $+'$ of $\Alg P'_*$ on $P'\cup\{*\}$). 
The idempotence law involves only one variable and parametric commutativity involves only two variables. 
We conclude that both of these laws hold for $\oplus$. Parametric associativity is the law
\[
	p(qx \oplus \bar q y) \oplus \bar pz = 
	pqx \oplus \overline{pq}(\frac{p\bar q}{\overline{pq}}y \oplus \frac{\bar p}{\overline{pq}}z)
	.
\]
If all three of $x,y,z$ belong to one of $P\cup P'$, $P\cup\{*\}$, and $P'\cup\{*\}$, 
the above argument shows that this law holds. 
It remains to check six cases. In the following set $r = \frac{p\bar q}{\overline{pq}}$. 
\begin{enumerate}
\item $x \in P, y \in P', z = *$. 
	\begin{eqnarray*}
		p(qx \oplus \bar q y) \oplus \bar pz & = & p(qx + \bar q y)\boxplus \bar p*\\
		& \stackrel{(\ref{eq:C})}{=} & pqx \boxplus \overline{pq}*\\
		& = & pqx \boxplus \overline{pq}(ry \oplus \bar rz)\\
		& = & pqx \oplus \overline{pq}(ry \oplus \bar rz). 		
	\end{eqnarray*}
\item $x \in P, y = * , z \in P'$.
	\begin{eqnarray*}
		p(qx \oplus \bar q y) \oplus \bar pz & = &  p(qx \oplus \bar q *) \oplus \bar pz\\
		& = & 	p(qx \boxplus \bar q *) \oplus \bar pz\\
		& = & 	p(qx \boxplus \bar q *) + \bar pz\\
		& \stackrel{(\ref{eq:B})}{=} & pq x \boxplus \overline{pq} *\\
		& = & pqx \boxplus \overline{pq}(r* \oplus \bar rz)\\
		& = & pqx \oplus \overline{pq}(ry \oplus \bar rz).
	\end{eqnarray*}
\item $x \in P', y \in P , z = *$. 
	\begin{eqnarray*}
		p(qx \oplus \bar q y) \oplus \bar pz & = & p(qx \oplus \bar q y) \oplus \bar p*\\
		& \stackrel{comm}{=} & p(\bar qy \oplus q x) \oplus \bar p *\\
		& = & p(\bar qy + q x) \boxplus \bar p *\\
		&\stackrel{(\ref{eq:C})}{=} & p\bar q y \boxplus \overline{p\bar q}*\\
		& = & p\bar q y \oplus \overline{p\bar q}*
	\end{eqnarray*} and
	\begin{eqnarray*}
	    pqx \oplus \overline{pq}(ry \oplus \bar rz) & \stackrel{comm}{=} & \overline{pq}(ry \oplus \bar rz) \oplus pqx\\
	    & = & \overline{pq}(ry \boxplus \bar r*) + pqx\\
	    & \stackrel{(\ref{eq:B})}{=} & \overline{pq}ry \boxplus \overline{\overline{pq}r}*\\
	    & = & \overline{pq}ry \oplus \overline{\overline{pq}r}*\\
	    & = & p \overline q y \oplus \overline{p\overline q} *.
	\end{eqnarray*}
\item $x \in P', y = *, z \in P$.
	\begin{eqnarray*}
		p(qx \oplus \bar q y) \oplus \bar pz & = & p* \oplus \bar p z\\
		&\stackrel{comm}{=} & \bar pz \oplus p *
	\end{eqnarray*} 
	and 
	\begin{eqnarray*}
	    pqx \oplus \overline{pq}(ry \oplus \bar rz) & \stackrel{comm}{=} & \overline{pq}(\bar rz \oplus r*) \oplus pqx\\
	    & = & \overline{pq}(\bar rz \boxplus r*) + pqx\\
	    &\stackrel{(\ref{eq:B})}{=} & \overline{pq}\bar rz \boxplus \overline{\overline{pq}\bar r}*\\
	    & = & \bar p z \oplus p *.
	\end{eqnarray*} 
\item $x = *, y \in P, z \in P'$.
	\begin{eqnarray*}
	   p(qx \oplus \bar q y) \oplus \bar pz  & \stackrel{comm}{=} & p(\bar qy \oplus q *) \oplus \bar pz\\
	   & = & p(\bar qy \boxplus q *) + \bar pz\\
	   & \stackrel{(\ref{eq:B})}{=} & p\bar q y \boxplus \overline{p \bar q}*\\
	   & = & p\bar q y \oplus \overline{p \bar q}*
	\end{eqnarray*} and
	\begin{eqnarray*}
	    pqx \oplus \overline{pq}(ry \oplus \bar rz) & \stackrel{comm}{=} & \overline{pq}(ry \oplus \bar rz) \oplus pq*\\
	    & = & \overline{pq}(ry + \bar rz) \oplus pq*\\
	    & \stackrel{(!)}{=} & \overline{pq}(ry + \bar rz) \boxplus pq*\\
	    & \stackrel{(\ref{eq:C})}{=} & \overline{pq}ry \oplus \overline{\overline{pq}r}*\\
	    & = & p \bar q y \oplus \overline{p \bar q} *,
	\end{eqnarray*}
	where the equality marked by $(!)$ holds because $ry + \bar r z \in P$.
\item $x = *, y \in P', z \in P$.
	\begin{eqnarray*}
	   p(qx \oplus \bar q y) \oplus \bar pz  & = & p* \oplus \,\bar p z\\
	  & \stackrel{comm}{=} & \bar p z \oplus p *
	\end{eqnarray*} and
	\begin{eqnarray*}
	   pqx \oplus \overline{pq}(ry \oplus \bar rz) & \stackrel{comm}{=} & \overline{pq}(\bar r z \oplus ry) \oplus pq*\\
		   & = & \overline{pq}(\bar r z + ry) \oplus pq*\\
		   & \stackrel{(!!)}{=} & \overline{pq}(\bar r z + ry) \boxplus pq*\\
		   & \stackrel{(\ref{eq:C})}{=} & \overline{pq}\bar r z \boxplus \overline{\overline{pq}\bar r}*\\
		   & = & \bar p z \oplus p *,
     \end{eqnarray*}
     where the equality marked by $(!!)$ holds because now $\bar rz +  r y \in P$.       
\qedhere
\end{enumerate}
\end{proof}

\begin{proof}[Proof of Example~\ref{ex:star}.4]
\label{prf:ex:star.4}
	Assume we are in the situation of Example~\ref{ex:star}.4, i.e., $\Alg X$ is a convex algebra and 
	$w$ is an extremal point of $\Alg X$. Set $P=X\setminus\{w\}$, then $P$ is a prime ideal. Further, 
	let $\Alg P_*$ be obtained as in Example~\ref{ex:star}.3 with $\Alg P\leq\Alg X$ by letting $*$ imitate $w$. 
	Condition (\ref{eq:glue-lem}) is satisfied with equality, and hence the Gluing Lemma provides 
	$\Alg X_*$. The operations $p(-)\oplus\bar p(-)$ obtained in this way coincide with those written in 
	Example~\ref{ex:star}.4. 
\end{proof}

\begin{proof}[Proof of Theorem~\ref{thm:free-extensions}]
	The uniqueness part is easy to see. First, the action of $*$ determines which case of $(1)$--$(3)$ occurs since 
	$\Adh((\Alg D_S)_*)$ is $\Dis S$ in case $(1)$, $\emptyset$ in case $(2)$, and $\{w\}$ in case $(3)$.
	Now uniqueness of the point $w$ in $(2)$ and $(3)$ follows since $\Alg D_S$ is cancellative. 

	We have to show that every extension occurs in one of the described ways. Hence, let an extension 
	$(\Alg D_S)_*$ be given.
	If $\Pri((\Alg D_S)_*)=\emptyset$, case $(1)$ takes place. Assume that $\Pri((\Alg D_S)_*)\neq\emptyset$ 
	and choose $z\in\Pri((\Alg D_S)_*)$ and $q\in(0,1)$. Set 
	\[
		w=\frac 1{\bar q}\big([qz+\bar q*]-qz\big)\in\ell^1(S)
		,
	\]
	then $qz+\bar q*=qz+\bar qw$ by definition. 
	We apply Lemma~\ref{lem:uniqueness-lemma} with $\Alg D_S\leq\ell^1(S)$ and $z,w,q$. This yields 
	\begin{equation}\label{eq:imitates}
		px+\bar p*=px+\bar pw,\quad x\in\Pri((\Alg D_S)_*),p\in(0,1)
		,
	\end{equation}
	and $\Adh((\Alg D_S)_*)\subseteq\{x\in\Dis S \mid x\adh w\}\subseteq\{w\}$. 

	As a linear combination of two elements of $\Dis S$, the element $w$ is finitely supported. Further, 
	by (\ref{eq:imitates}), 
	\[
		1=\frac 1{\bar p}\big(\|pz+\bar p*\|_1-p\|z\|_1\big)\leq\|w\|_1\leq
		\frac 1{\bar p}\big(\|pz+\bar p*\|_1+p\|z\|_1\big)=\frac{1+p}{1-p}
	\]
	for all $p\in(0,1)$, and we see that $\|w\|_1=1$. Together, $w\in\Dis S$. 

	If $\Pri((\Alg D_S)_*)=\Dis S$, we are in case $(2)$ of the theorem. Otherwise, 
	$\Pri((\Alg D_S)_*)=(\Dis S)\setminus\{w\}$. This implies that $w$ is an extremal point of $\Alg D_S$, 
	and we are in case $(3)$. 
\end{proof}

\sloppypar Next we investigate functoriality of one-point extensions. We say that a functor $F\colon\EM(\Dis)\to\EM(\Dis)$ 
naturally provides a one-point extension, if 
$\Alg X\leq F\Alg X$ and $F\Alg X$ has carrier $X \cup \{*\}$ for $* \notin X$ for every algebra $\Alg X$, and 
$(Ff)|_{X}=f$ for every convex map $f\colon\Alg X\to\Alg Y$. The latter property is (literally) a natural property: it 
says that the family of inclusion maps $\iota_X\colon\Alg X\to F\Alg X$ is a natural transformation of the identity functor to $F$. 

An example of a functor possessing these properties is obtained by the black-hole construction:
for an algebra $\Alg X$ let $F\Alg X$ be its black-hole extension, and for a convex map $f:\Alg X\to\Alg Y$ 
let $Ff$ be the extension of $f$ mapping $*$ (of $F\Alg X$) to $*$ (of $F\Alg Y$). 

\begin{thm}\label{thm:functor}
	Let $F:\EM(\Dis)\to\EM(\Dis)$ be a functor such that 
	for all objects $\Alg X$ and for all morphisms $f:\Alg X\to\Alg Y$
	\begin{equation}\label{eq:reasonable}
		\Alg X\leq F\Alg X,\ \text{the carrier of } F\Alg X \text{ is }X \cup \{*\}\text{ with }*\not\in X,\qquad 
		\raisebox{5mm}{$\xymatrix@C=20mm@R=5mm{
			F\Alg X \ar[r]^{Ff} & F\Alg Y
			\\
			\Alg X \ar[r]_{f} \ar[u]^{\iota_X} & \Alg Y \ar[u]_{\iota_Y}
		}$}
	\end{equation}
	Then, for all $\Alg X$, $F\Alg X$ is the black-hole extension, and for all $f\colon\Alg X\to\Alg Y$, $Ff$ 
	is the extension of $f$ mapping $*$ (of $F\Alg X$) to $*$ (of $F\Alg Y$).
\end{thm}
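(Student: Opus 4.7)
My plan is to establish the result in three stages: first show that $Ff(*_X)=*_Y$ for every morphism, then determine $F\Alg D_S$ for $|S|\geq 2$, and finally bootstrap from free algebras to arbitrary algebras using the algebra multiplication. Throughout, I write $\mathbf{1}$ for the terminal (one-element) convex algebra $\Alg D_{\{a\}}$ and $!_X$ for the unique morphism $\Alg X\to\mathbf{1}$.

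For the first stage, I would observe that for any nonempty $\Alg X$ and any chosen $x_0\in X$, the inclusion $i_{x_0}\colon\mathbf{1}\to\Alg X$ satisfies $!_X\circ i_{x_0}=\mathrm{id}_\mathbf{1}$; functoriality then yields $F!_X\circ Fi_{x_0}=\mathrm{id}_{F\mathbf{1}}$. The naturality condition~(\ref{eq:reasonable}) forces $F!_X(x)=1$ for every $x\in X$, so if $Fi_{x_0}(*_1)$ landed in $X$, then $F!_X$ would send it to $1\neq*_1$, a contradiction. Hence $Fi_{x_0}(*_1)=*_X$ and $F!_X(*_X)=*_1$. For a general $f\colon\Alg X\to\Alg Y$, uniqueness of the terminal map gives $!_Y\circ f=!_X$, so $F!_Y\circ Ff=F!_X$; evaluating at $*_X$ gives $F!_Y(Ff(*_X))=*_1$, and the same naturality argument forces $Ff(*_X)=*_Y$.

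For the second stage, I would invoke Theorem~\ref{thm:free-extensions}, which pins $F\Alg D_S$ down to one of three described shapes. To rule out the two non-black-hole shapes, I would test the constant convex map $c_s\colon\Alg D_S\to\Alg D_S$, $x\mapsto\delta_s$, at the element $p\delta_s+\bar p\,*_{\Dis S}$. By the first stage $Fc_s$ fixes $*_{\Dis S}$, so convexity yields
\[
Fc_s(p\delta_s+\bar p\,*_{\Dis S})=p\delta_s+\bar p\,*_{\Dis S}.
\]
In shape~(2) (with some $w\in\Dis S$), or in shape~(3) with $s$ chosen distinct from the support point of the extremal $w$, the element $p\delta_s+\bar p\,*_{\Dis S}$ coincides with $p\delta_s+\bar p w$ and hence lies in $\Dis S$, so $Fc_s$ evaluates it to $\delta_s$. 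Comparing the two expressions, $\delta_s=p\delta_s+\bar p w$; cancellativity of $\Alg D_S$ then forces $w=\delta_s$. Since $s$ is essentially arbitrary, this is impossible whenever $|S|\geq 2$, leaving only the black-hole shape.

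For the third stage, for arbitrary $\Alg X$ with $|X|\geq 2$, I would apply $F$ to the algebra multiplication $\mu_X\colon\Alg D_X\to\Alg X$. By the first stage, $F\mu_X(*_{\Dis X})=*_X$, and by the second stage, $p\delta_x+\bar p\,*_{\Dis X}=*_{\Dis X}$ in $F\Alg D_X$ for every $x\in X$. Applying the convex map $F\mu_X$ to this identity gives $px+\bar p\,*_X=*_X$ in $F\Alg X$, which is the black-hole property; combined with idempotence at $*_X$, binary sufficiency (Remark~\ref{rem:bin-suff}) identifies $F\Alg X$ with the black-hole extension. For $|X|=1$ the same computation works verbatim using $!\colon\Alg D_{\{1,2\}}\to\mathbf{1}$ in place of $\mu_X$. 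The hard part will be the second stage: I need to check carefully that in shape~(3), choosing $s$ distinct from the support of the extremal $w$ keeps $p\delta_s+\bar p\,*_{\Dis S}$ inside $\Dis S$ (which is exactly where the hypothesis $|S|\geq 2$ is consumed), and that the cancellation inside $\Alg D_S$ is legitimate. Once the structure of $F\Alg D_S$ is nailed down, the other stages reduce to straightforward functoriality-plus-convexity chases.
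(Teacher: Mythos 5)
Your proof is correct, and the middle stage takes a genuinely different route from the paper. Stage~1 is essentially the paper's Lemma~\ref{lem:1st}: both arguments reduce to the terminal map, which has sections, and then compose an arbitrary $f$ with it. The real divergence is in Stage~2. The paper's Lemma~\ref{lem:2nd} takes $S$ \emph{infinite}, shows that the imitated point $w$ of Theorem~\ref{thm:free-extensions} would have to be a fixpoint of every automorphism of $\Alg D_S$, and derives a contradiction from a transposition swapping a support point of $w$ with a point outside its (finite) support --- infinitude of $S$ is genuinely needed there, since for finite $S$ the uniform distribution is automorphism-invariant. Your argument instead probes $F\Alg D_S$ with the constant endomorphisms $c_s$ and uses cancellativity of $\Alg D_S$ to force $w=\delta_s$ for varying $s$; this is more elementary and, importantly, works for every $S$ with $|S|\geq 2$. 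That extra generality is what lets your Stage~3 push forward along the structure map $\mu_X\colon\Alg D_X\to\Alg X$ rather than along a surjection from a free algebra over an auxiliary infinite set, as the paper does via $f(\Adh(F\Alg D_S))\subseteq\Adh(F\Alg X)$; the two Stage-3 computations are otherwise the same in spirit. Two small points to make explicit when writing this up: the passage from $px\oplus\bar p\,{*}={*}$ to $p\,{*}\oplus\bar p x={*}$ uses parametric commutativity (automatic in any convex algebra, so harmless), and both your proof and the paper's silently assume nonempty carriers in the $Ff(*)={*}$ step --- for the empty algebra the conclusion is vacuous, so nothing is lost.
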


We present the proof using two lemmata. 

\begin{lem}\label{lem:1st}
	Assume that $F\colon\EM(\Dis)\to\EM(\Dis)$ satisfies (\ref{eq:reasonable}), and let $f\colon\Alg X\to\Alg Y$ be a 
	convex map. Then $(Ff)(*)=*$ and $f(\Pri(F\Alg X))\subseteq\Pri(F\Alg Y)$, 
	$f(\Adh(F\Alg X))\subseteq\Adh(F\Alg Y)$.
\end{lem}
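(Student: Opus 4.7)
The plan is to first identify $(Ff)(*)=*$; the two inclusion claims will then follow by routine applications of the convexity of $Ff$.

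For the main identification I will probe the functor with the terminal convex algebra $\Alg 1=\{p\}$. Assume $X$ is nonempty (the empty source case can be handled separately since its inclusion statements are vacuous). Pick $x_0\in X$ and consider the convex maps $i\colon\Alg 1\to\Alg X$, $p\mapsto x_0$, and $c_X\colon\Alg X\to\Alg 1$, $x\mapsto p$, which satisfy $c_X\circ i=\id_{\Alg 1}$. By functoriality $Fc_X\circ Fi=\id_{F\Alg 1}$, so in particular $Fc_X(Fi(*))=*$. Since by (\ref{eq:reasonable}) the restriction $(Fi)|_{\{p\}}$ equals $i$, the element $Fi(*)$ lies in $X\cup\{*\}$; if $Fi(*)$ were some $x\in X$, then $Fc_X(x)=c_X(x)=p$, contradicting $Fc_X(Fi(*))=*$. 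Hence $Fi(*)=*$ and consequently $Fc_X(*)=*$. The same argument applied to $\Alg Y$ yields $Fc_Y(y)=p$ for $y\in Y$ and $Fc_Y(*)=*$, so $*\in F\Alg Y$ is the unique $Fc_Y$-preimage of $*\in F\Alg 1$.

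Now the equality $c_Y\circ f=c_X$ among morphisms to $\Alg 1$ gives $Fc_Y\circ Ff=Fc_X$; evaluating at $*\in F\Alg X$ produces $Fc_Y(Ff(*))=Fc_X(*)=*$, and the uniqueness just established forces $Ff(*)=*$.

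The two inclusion claims now follow by applying the convex map $Ff$ to the defining relations. If $x\in\Adh(F\Alg X)$, then $px+\bar p*=*$ in $F\Alg X$ for every $p\in(0,1)$; convexity of $Ff$ together with $Ff(*)=*$ yields $pf(x)+\bar p*=*$ in $F\Alg Y$, whence $f(x)\in\Adh(F\Alg Y)$. If $x\in\Pri(F\Alg X)$, then some $p\in(0,1)$ gives $y:=px+\bar p*\in X$; applying $Ff$ produces $pf(x)+\bar p*=f(y)\in Y$, which prevents $f(x)\adh*$ and places $f(x)$ in $\Pri(F\Alg Y)$. The main obstacle is locating an intrinsic characterization of $*$; the terminal algebra $\Alg 1$ together with the section-retraction pair $(i,c_X)$ provides it, and the rest is a direct consequence of $Ff$ being convex.
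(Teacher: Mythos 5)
Your proof is correct and takes essentially the same route as the paper's: both arguments pin down $(Ff)(*)=*$ via the terminal convex algebra, using that $(Fg)^{-1}(\{*\})\subseteq\{*\}$ whenever $(Fg)|_{X}=g$ lands in $Y$, and then obtain the two inclusions from the single computation $pf(x)+\bar p*=(Ff)(px+\bar p*)$. The only difference is cosmetic --- the paper first treats maps admitting a right inverse and then factors an arbitrary $f$ through the final object, whereas you invoke the section--retraction pair $(i,c_X)$ directly --- and your aside about an empty carrier is moot, since the paper implicitly works with nonempty algebras throughout.
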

\begin{proof}
	For the proof of $(Ff)(*)=*$, note that $(Ff)^{-1}(\{*\})\subseteq\{*\}$ since 
	$(Ff)|_X=f$. If $f$ has a right inverse, say $g:\Alg Y\to\Alg X$ with $f\circ g=\id_{\Alg Y}$, then 
	$(Ff)((Fg)(*))=*$, and hence $(Fg)(*)=*$. In turn also $(Ff)(*)=*$. 
	Now let $f$ be arbitrary. Let $\Alg Z$ be an algebra which has only one element, a final object 
	of $\EM(\Dis)$, and let $h:\Alg Y\to\Alg Z$ 
	be the unique convex map. The map $h\circ f$ has a right inverse, and therefore $(Fh)((Ff)(*))=(F(h\circ f))(*)=*$. 
	Again, we obtain $(Ff)(*)=*$. 

	It remains to prove that $f$ maps the respective prime ideals (sets of adherence) into each other. 
	Let $x\in X$ and $p\in(0,1)$. Then 
	\begin{equation}\label{eq:F-computation}
		pf(x)+\bar p*=p(Ff)(x)+\bar p(Ff)(*)=(Ff)(px+\bar p*)=
		\begin{cases}
			f(px+\bar p*)\in Y &\hspace*{-3mm},\ x\in\Pri(F\Alg X),
			\\[1mm]
			(Ff)(*)=* &\hspace*{-3mm},\ x\in\Adh(F\Alg X).
		\end{cases}
	\end{equation}
	Thus, indeed, $f(x)\in\Pri(F\Alg Y)$ if $x\in\Pri(F\Alg X)$, and $f(x)\in\Adh(F\Alg Y)$ if $x\in\Adh(F\Alg X)$.
\end{proof}

\begin{lem}\label{lem:2nd}
	Assume that $F\colon\EM(\Dis)\to\EM(\Dis)$ satisfies (\ref{eq:reasonable}), and let $S$ be an infinite set. 
	Then $F\Alg D_S$ is the black-hole extension of $\Alg D_S$. 
\end{lem}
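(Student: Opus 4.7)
My plan is to combine Theorem~\ref{thm:free-extensions}, which classifies all one-point extensions of $\Alg D_S$ into three types (black-hole, imitating an inner $w\in\Dis S$, or mixed at an extremal point $w$), with Lemma~\ref{lem:1st} to rule out the two non-black-hole cases. The key leverage is that $\Alg D_S$ is free: every function $S\to\Dis S$ extends uniquely to a convex map $\Alg D_S\to\Alg D_S$, so we have a rich supply of endomorphisms $f$ to test against the constraints that $Ff$ must satisfy.

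First I would suppose we are in case $(2)$, so $\ast$ imitates some $w\in\Dis S$. Pick any convex map $f\colon\Alg D_S\to\Alg D_S$. By Lemma~\ref{lem:1st}, $(Ff)(\ast)=\ast$, hence for $x\in\Dis S$ and $p\in(0,1)$ we get
\[
pf(x)+\bar p\,\ast=(Ff)(px+\bar p\,\ast)=(Ff)(px+\bar pw)=pf(x)+\bar pf(w),
\]
using that $Ff|_{\Dis S}=f$ and that $\ast$ imitates $w$. On the other hand, $\ast$ imitates $w$ also means $pf(x)+\bar p\,\ast=pf(x)+\bar pw$, so $pf(x)+\bar pw=pf(x)+\bar pf(w)$ in $\Alg D_S$. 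Since $\Alg D_S$ is cancellative, $f(w)=w$ for every convex endomorphism $f$. Writing $w=\sum_{s\in T}w_s\delta_s$ with $T$ finite and choosing (using $S$ infinite) some $s_0\in T$ and $s_1\in S\setminus T$, the map defined on generators by $f(s_0)=\delta_{s_1}$ and $f(s)=\delta_s$ otherwise satisfies $f(w)\neq w$, a contradiction.

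Next I would rule out case $(3)$, where $\ast$ imitates an extremal point $w$ of $\Alg D_S$ and $\Adh(F\Alg D_S)=\{w\}$. The extremal points of $\Alg D_S$ are the Dirac measures, so $w=\delta_{s_0}$ for some $s_0\in S$. By Lemma~\ref{lem:1st}, every convex $f\colon\Alg D_S\to\Alg D_S$ must satisfy $f(\Adh(F\Alg D_S))\subseteq\Adh(F\Alg D_S)$, which forces $f(\delta_{s_0})=\delta_{s_0}$ for every endomorphism $f$. But since $S$ is infinite we may pick $s_1\neq s_0$ and define $f$ on generators by $f(s_0)=\delta_{s_1}$, $f(s)=\delta_s$ for $s\neq s_0$, giving $f(\delta_{s_0})=\delta_{s_1}\neq\delta_{s_0}$, a contradiction.

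With cases $(2)$ and $(3)$ excluded, Theorem~\ref{thm:free-extensions} leaves only the black-hole extension, which is precisely what was claimed. No step presents real difficulty once one observes that freeness of $\Alg D_S$ supplies arbitrarily flexible endomorphisms $f$; the mild subtlety is simply to ensure, in case $(2)$, that the target $s_1$ can be chosen outside the finite support of $w$, which is exactly where the hypothesis that $S$ is infinite is used.
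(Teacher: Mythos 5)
Your proof is correct and follows essentially the same route as the paper: both invoke Theorem~\ref{thm:free-extensions} to produce a distinguished $w\in\Dis S$, use Lemma~\ref{lem:1st} and cancellativity to show $w$ is fixed by every convex endomorphism, and then exploit freeness plus the infiniteness of $S$ (so that $w$ has non-full support) to construct an endomorphism moving $w$. The only cosmetic difference is that you split cases $(2)$ and $(3)$ and use a collapsing endomorphism, whereas the paper treats $\Pri(F\Alg D_S)\neq\emptyset$ uniformly and uses a transposition automorphism.
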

\begin{proof}
	Assume towards a contradiction that $\Pri(F\Alg D_S)\neq\emptyset$. By Theorem~\ref{thm:free-extensions} we find 
	$w\in\Dis S$ such that $px+\bar p*=px+\bar pw$, $x\in\Pri(F\Alg D_S)$, $p\in(0,1)$. Fix $x\in\Pri(F\Alg D_S)$ and $p\in(0,1)$, 
	and let $f\colon\Alg D_S\to\Alg D_S$ be an automorphism. Then $f(x)\in\Pri(F\Alg D_S)$ by Lemma~\ref{lem:1st}, and we 
	can compute 
	\[
		pf(x)+\bar pw=pf(x)+\bar p*\stackrel{(\ref{eq:F-computation})}{=} f(px+\bar p*)=
		f(px+\bar pw)=pf(x)+\bar pf(w)
		.
	\]
	Cancelling $f(x)$ gives $w=f(w)$. Hence $w$ is a fixpoint of every automorphism.

	Since $S$ is infinite, we can choose a point $s_1\in S$ which lies outside of the support of $w$. 
	Further, let $s_2\in S$ be in the support of $w$, and let $\sigma\colon S\to S$ be the permutation of $S$ which 
	exchanges $s_1$ and $s_2$ and leaves all other points fixed. Since $\Alg D_S$ is free with basis $S$, this permutation 
	extends to an automorphism $f$ of $\Alg D_S$. But now $f(w)\neq w$, a contradiction. 
\end{proof}

\begin{proof}[Proof of Theorem~\ref{thm:functor}]
	The fact that $Ff$ is the extension of $f$ mapping $*$ to $*$ was shown in Lemma~\ref{lem:1st}. It remains to 
	show that, for every algebra $\Alg X$, $F\Alg X$ is the black-hole extension. Given $\Alg X$,  
	choose an infinite set $S$ and a surjective convex map $f:\Alg D_S\to\Alg X$. This is possible since every convex algebra is the image of a free convex algebra, and if $S \supseteq S'$ then there is a surjective homomorphism from  $\Alg D_S$ to $\Alg D_{S'}$. Then, by  Lemma~\ref{lem:1st} and Lemma~\ref{lem:2nd}, 
	$\Adh(F\Alg X)\supseteq f(\Adh(F\Alg D_S)) = f(\Alg D_S)=\Alg X$.
\end{proof}

\section{Extensions of $\CSS\Alg D$}
\label{sec:PcD}

In this section we formulate and prove Theorem~\ref{thm:Pc-extensions} 
where we describe the set of all extensions $(\CSS\Alg D)_*$ for convex algebras $\Alg D$ 
which are convex subsets of a vector space (equivalently, cancellative) and satisfy a certain linear boundedness condition. 
Theorem~\ref{thm:Pc-extensions} applies in particular to the algebra $\Alg D=\Alg D_S$ of finitely supported distributions over $S$. 

We start with some algebraic preliminaries. First, we recall the notion of linear boundedness, see e.g. 
\cite[Definition~1.1]{boerger.kemper:1996}.

\begin{defi}\label{def:lin-bound}
	A convex algebra $\Alg X$ is \emph{linearly bounded}, if every homomorphism of the convex algebra $(0,\infty)$ 
	into $\Alg X$ is constant. 
\qedd
\end{defi}

Intuitively, a convex algebra is linearly bounded if it does not contain an infinite ray.
A large class of examples of linearly bounded algebras is given by topologically bounded subsets of a topological vector space. 
Recall that a topological vector space is a vector space endowed with a topology such that addition and scalar 
multiplication are continuous. 
Our standard reference for the theory of topological vector spaces is \cite{rudin:1991}. 

\begin{defi}\label{def:bound-tvs}
	Let $\Alg V$ be a topological vector space. A subset $D\subseteq V$ is \emph{bounded}, if 
	for every neighbourhood $U$ of $0$ there exists $r_0>0$ such that $D\subseteq rU$, $r>r_0$
	(cf.\ \cite[p.8]{rudin:1991}).
\end{defi}

For example, if $\Alg V$ is a normed space (with a norm denoted by $\|.\|$), then a subset $D$ is bounded in this sense
if and only if $\sup_{x\in D}\|x\|<\infty$. 

We could not find an explicit reference for the following (intuitive) fact, and hence provide a complete proof.

\begin{lem}\label{lem:bound-lin-bound}
	Let $\Alg V$ be a topological vector space. Then for every bounded convex subset $D$ of $\Alg V$, the convex 
	algebra $\Alg D$ is linearly bounded. 
\end{lem}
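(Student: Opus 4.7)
The plan is to prove the contrapositive: if there is a non-constant homomorphism $f\colon(0,\infty)\to\Alg D$, then $D$ is not bounded in $\Alg V$. Recall that $(0,\infty)$ carries its convex algebra structure as a convex subset of $\mathbb R$ (Example~\ref{ex:CAs}.1), so a convex algebra homomorphism into $\Alg D$ is exactly an affine map in the usual sense. The key step will be to produce an explicit affine formula $f(t)=tv+c$ with $v,c\in V$ and $v\neq 0$, showing that the image of $f$ is a full ray inside $D$; such a ray cannot be bounded in a Hausdorff topological vector space.

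To obtain the formula, I would fix $a<b$ in $(0,\infty)$ and set $x=f(a)$, $y=f(b)$. For any $t\in(0,\infty)$, one of the points $t,a,b$ can be written as a convex combination of the other two (depending on whether $t\in(0,a)$, $t\in(a,b)$, or $t\in(b,\infty)$). Applying the homomorphism property in each case and solving for $f(t)$ gives the single formula
\[
f(t)=tv+c,\qquad v=\tfrac{1}{b-a}(y-x),\ c=\tfrac{1}{b-a}(bx-ay),
\]
uniformly for all $t>0$. If $f$ is non-constant then $x\neq y$, hence $v\neq 0$, and $\{tv+c:t>0\}\subseteq D$.

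For the contradiction, I would take the sequence $f(n)=nv+c\in D$ for $n\in\mathbb N$ together with the scalars $\alpha_n=1/n\to 0$. By a standard characterisation (e.g.\ Rudin, Theorem~1.30), if $D$ were bounded then $\alpha_n f(n)\to 0$ in $\Alg V$. But $\alpha_n f(n)=v+c/n$ and, by continuity of scalar multiplication, $c/n\to 0$, so $\alpha_n f(n)\to v$. Since $\Alg V$ is Hausdorff (built into Rudin's definition of TVS), uniqueness of limits forces $v=0$, contradicting $v\neq 0$. The only non-routine point I expect to have to handle carefully is citing the right boundedness characterisation from Rudin; everything else reduces to unwinding the affine formula.
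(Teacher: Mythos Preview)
Your argument is correct and follows the same line as the paper's: both show that a nonconstant convex map $(0,\infty)\to D$ has the form $t\mapsto tv+c$ with $v\neq 0$ and then contradict boundedness of $D$. The differences are only in execution: the paper obtains the affine formula by citing an extension result of Flood (extending $\varphi$ to all of $\mathbb R$) and then normalising to get positive homogeneity, whereas you compute the formula directly by case analysis on the position of $t$ relative to $a,b$; and the paper contradicts boundedness via the neighbourhood definition, while you use the sequential characterisation from Rudin (both rely on the Hausdorff assumption built into Rudin's definition of a topological vector space).
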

\begin{proof}
	Let $D$ be a bounded convex and nonempty subset of a topological vector space $\Alg V$, and let 
	$\varphi\colon(0,\infty)\to\Alg D$ be a convex map. 
	By \cite[Proposition~2.7]{flood:1981} we find a convex extension $\Phi:\mathbb R\to\Alg V$ of $\varphi$. 
	Set $\Psi=\Phi-\Phi(0)$, then $\Psi$ is convex and $\Psi(0)=0$. The purpose of this normalisation is that it 
	allows us to conclude $\Psi(tx)=t\Psi(x)$, $t>0$, $x\in\mathbb R$: If $t=1$ this is trivial. If $t<1$ use convexity 
	to compute $\Psi(tx)=\Psi(tx+(1-t)0)=t\Psi(x)+(1-t)\Psi(0)=t\Psi(x)$. If $t>1$, use the already known to compute 
	$\Psi(x)=\Psi(\frac 1t\cdot tx)=\frac 1t\Psi(tx)$. 

	Assume now towards a contradiction that $\varphi$ is not constant. Then we can choose $s\in(0,\infty)$ with 
	$\varphi(s)\neq\Phi(0)$, i.e., $\Psi(s)\neq 0$. Choose a neighbourhood $U$ of $0$ such that $\Psi(s)\notin U$. 
	Since $D$ is bounded, also its translate $D-\Phi(0)$ is bounded. Hence, we find $r>0$ with $D-\Phi(0)\subseteq rU$. 
	From 
	\[
		r\Psi(s)=\Psi(rs)=\varphi(rs)-\Phi(0)\in D-\Phi(0)\subseteq rU
	\]
	we obtain $\Psi(s)\in U$, and reached a contradiction. 
\end{proof}

\begin{rem}\label{rem:lin-bound-vs}
	Let $\Alg V$ be a vector space over $\mathbb R$. Then, for each fixed $w\in\Alg V$ and $t\in\mathbb R\setminus\{0\}$, 
	we have the translation map $x\mapsto x+w$ and the scaling map $x\mapsto tx$. They are bijective convex maps on $\Alg V$. Applying $\CSS$ on these maps gives bijective convex maps on $\CSS\Alg V$. 
	Moreover, a subset $A\in\CSS\Alg V$ is linearly bounded if and only if $t(A+w)$ is linearly bounded. 
\end{rem}

The following observation holds for all cancellative convex algebras $\Alg D$.

\begin{lem}\label{lem:adh-singletons}
	Let $\Alg D$ be a convex algebra and consider $\Alg X =\CSS\Alg D$. 
	If $\Alg D$ is cancellative, then $A\adh\{x\}\Rightarrow A=\{x\}$ for all $A\in X$, $x\in D$.
\end{lem}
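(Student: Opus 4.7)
The plan is to unwind the definition of adherence in $\CSS\Alg D$ and reduce to a one-line cancellation argument in $\Alg D$. By Lemma~\ref{lem:adh-props}.1, $A\adh\{x\}$ amounts to the single equation $pA+\bar p\{x\}=\{x\}$ holding for some (equivalently, every) $p\in(0,1)$. Using the pointwise definition of the operations on $\CSS\Alg D$ from Definition~\ref{def:Pc}, the left-hand side equals $\{pa+\bar p x\mid a\in A\}$, so this equation unpacks to the statement that $pa+\bar p x=x$ for every $a\in A$.

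Next I would combine this with idempotence inside $\Alg D$, which gives $x=px+\bar p x$. Thus for every $a\in A$ we have
\[
    pa+\bar p x=x=px+\bar p x,
\]
and cancellability of $x$ in $\Alg D$ (which is granted since $\Alg D$ is assumed cancellative) yields $a=x$. Hence $A\subseteq\{x\}$, and since elements of $\CSS D$ are required to be nonempty, $A=\{x\}$.

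There is essentially no obstacle here: the only subtlety is being careful that the definition of $\CSS\Alg D$ is indeed pointwise Minkowski addition, so that the singleton equation $pA+\bar p\{x\}=\{x\}$ really forces $pa+\bar px=x$ for \emph{every} $a\in A$ (not merely the existence of some such $a$). Once this is noted, cancellability of $x$ finishes the proof in one step.
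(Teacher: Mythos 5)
Your proof is correct and follows essentially the same route as the paper's: unpack $A\adh\{x\}$ via the pointwise Minkowski operation to get $pa+\bar p x=x=px+\bar p x$ for each $a\in A$, cancel $x$ using cancellativity of $\Alg D$, and invoke nonemptiness of $A$. The extra care you take in noting that set equality forces the identity for \emph{every} $a\in A$ is implicit in the paper's one-line argument but is exactly the right point to check.
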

\begin{proof}
	Let $a \in A$. Then $pa + \bar p x = x = px + \bar p x$ which after cancelling with $x$ yields $a = x$. 
	Since $A$ is nonempty, as it belongs to $\CSS\Alg D$, we get $A = \{x\}$. 
\end{proof}

Under a linear boundedness condition, the roles of $A$ and $\{x\}$ can be exchanged.

\begin{lem}\label{lem:intersection}
	Let $\Alg V$ be a vector space over $\mathbb R$, let $A\in\CSS\Alg V$, and assume that $A-A$ is linearly bounded. 
	Then 
	\[
		\bigcap_{p\in[0,1)}\big(p\{x\}+\bar pA\big) \subseteq \{x\}
		,\quad\text{~for~} x\in V
		.
	\]
	In particular, $\{x\}\adh A\Rightarrow A=\{x\}$ for all $x\in V$. 
\end{lem}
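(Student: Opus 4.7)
The plan is to take $y \in \bigcap_{p\in[0,1)}\bigl(p\{x\}+\bar pA\bigr)$ and show directly that $y=x$. For each $p\in[0,1)$ I can pick $a_p\in A$ with $y=px+\bar p a_p$. At $p=0$ the projection axiom forces $a_0=y$, while for $p\in(0,1)$ solving the relation gives
\[
a_p=\frac{1}{\bar p}(y-px)=y+\frac{p}{\bar p}(y-x).
\]
Hence $a_p-a_0=\frac{p}{\bar p}(y-x)\in A-A$. As $p$ ranges over $[0,1)$, the coefficient $\frac{p}{\bar p}$ ranges over $[0,\infty)$, so $t(y-x)\in A-A$ for every $t\ge0$.

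Now $A-A$ is a convex subset of $\Alg V$ (being the Minkowski sum of $A$ and $-A$, both convex), so it carries a subalgebra of $\Alg V$. The map $\varphi\colon(0,\infty)\to A-A$, $t\mapsto t(y-x)$, is convex (in fact linear), i.e., a convex algebra homomorphism. By assumption $A-A$ is linearly bounded, so $\varphi$ must be constant by Definition~\ref{def:lin-bound}. From $\varphi(1)=\varphi(2)$ I obtain $y-x=2(y-x)$, hence $y=x$, proving the inclusion. The ``in particular'' claim is then immediate: if $\{x\}\adh A$ then $p\{x\}+\bar pA=A$ for every $p\in(0,1)$, and trivially also at $p=0$ by projection, so $A$ is contained in every member of the intersection; thus $A\subseteq\{x\}$, and since $A$ is nonempty (as an element of $\CSS V$), $A=\{x\}$.

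There is no real obstacle here beyond keeping track of the affine change of variable $a_p=y+\frac{p}{\bar p}(y-x)$; the entire content of the lemma is that the one-parameter family of witnesses $\{a_p\}_{p\in[0,1)}$ traces an unbounded ray inside $A-A$ unless $y=x$, which is incompatible with linear boundedness. No topology is used; linear boundedness is applied purely algebraically via the $(0,\infty)$-homomorphism.
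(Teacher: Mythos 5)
Your proof is correct and follows essentially the same route as the paper's: extract the witnesses $a_p$, observe that $\frac{p}{\bar p}(y-x)=a_p-y\in A-A$ sweeps out a ray, package this as a convex homomorphism $(0,\infty)\to A-A$, and invoke linear boundedness to force $y=x$. The handling of the ``in particular'' clause via nonemptiness of $A$ is also exactly what is needed.
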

\begin{proof}
	Note first that $A - A$ is convex. Let $y$ belong to the intersection. 
	Then $y\in A$ and for each $p\in(0,1)$ we find $a_p\in A$ with 
	$y=px+\bar pa_p$. This implies 
	\[
		\frac p{\bar p}(x-y)=y-a_p\in A-A,\quad\text{for~} p\in(0,1)
		.
	\]
	Any positive real number $t$  can be represented as $\frac p{\bar p}$, namely with $p = \frac{t}{1+t}\in(0,1)$. 
	It is easy to check then that $\varphi\colon t\mapsto t(x-y)$ is a convex homomorphism from $(0,\infty)$ to $A-A$. 
	Since $A-A$ is linearly bounded, $\varphi$ is constant, which further implies $x=y$. 
\end{proof}

In order to construct extensions where $*$ imitates an outer element, we need the following notion of visibility closure. 

\begin{defi}\label{def:vis}
	Let $\Alg X$ be a convex algebra and $A\in\CSS\Alg X$. The \emph{visibility hull} of $A$ is 
	\[
		\ViH(A)=\big\{x\in X\mid \forall a\in A.\,\,\forall p\in(0,1).\,\, px+\bar pa\in A\big\}
		.
	\]
	The set $A$ is \emph{visibility closed} if $A=\ViH(A)$. 
\qedd
\end{defi}

\begin{exa}\label{ex:vis-example}
	Let $A\subseteq\mathbb R^2$ be the open half-disk 
	$A=\{(t_1,t_2)\in\mathbb R^2\mid t_1^2+t_2^2<1,t_2>0\}$. 
	Then $\ViH(A)$ is the closed half disk, shown in Figure~\ref{fig:Vis-A}.

	Now consider $B=A\cup\{(0,0)\}$. Then the part of the boundary of $B$ located on the $t_1$-axis 
	does not belong to $\ViH(B)$, see Figure~\ref{fig:Vis-B}.
	\begin{figure}[h]
	    \centering
	    \begin{subfigure}[h]{0.4\textwidth} \centering
		\includegraphics[width=0.5\textwidth]{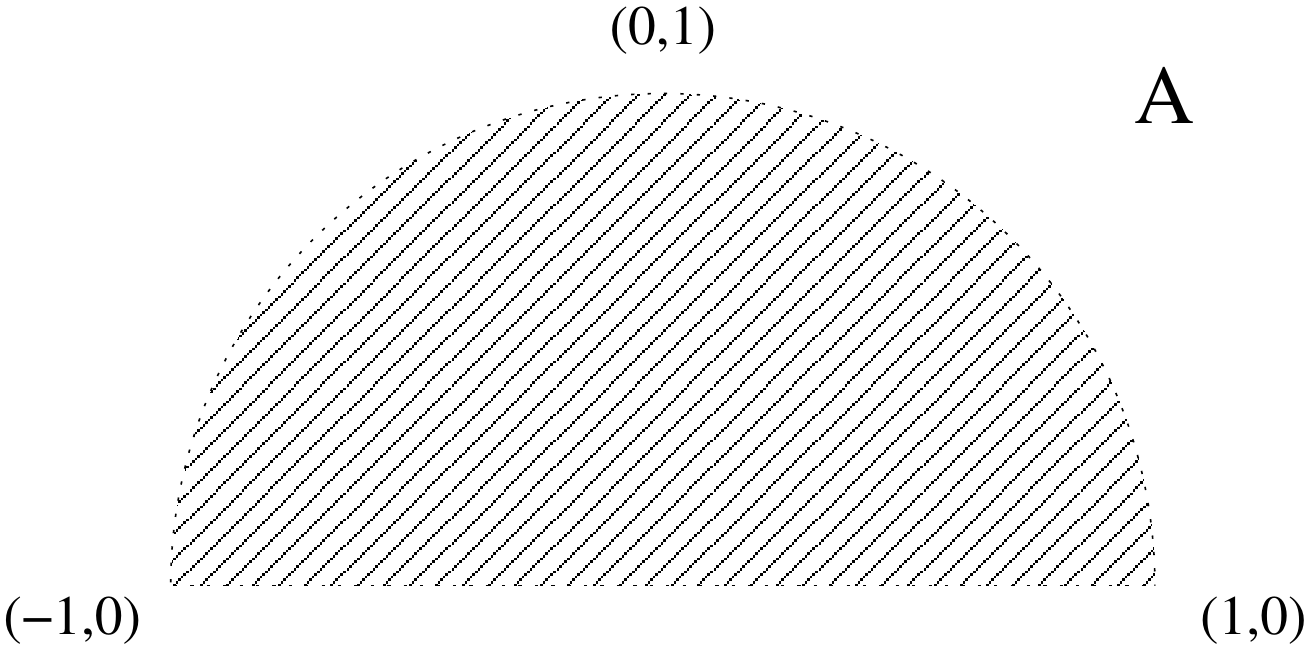}
		\includegraphics[width=0.5\textwidth]{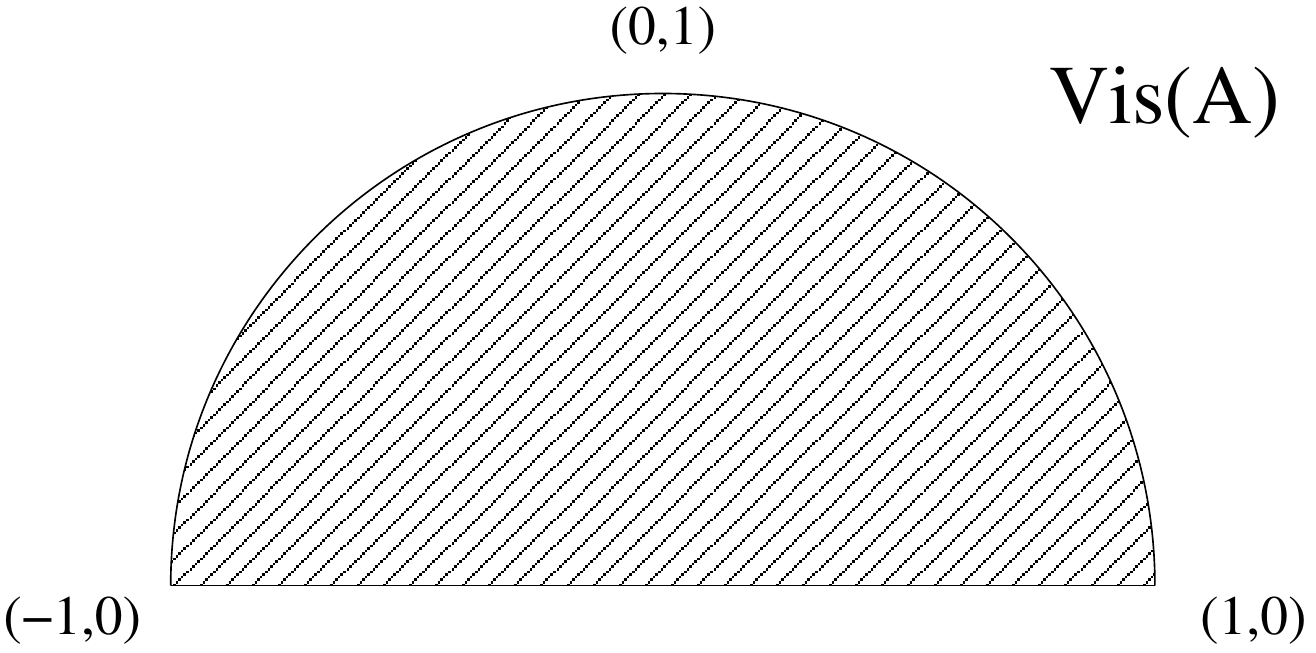}
		\caption{$A$ and $\ViH(A)$\label{fig:Vis-A}}
	    \end{subfigure}
	    ~ 
	    \begin{subfigure}[h]{0.4\textwidth} \centering
		\includegraphics[width=0.5\textwidth]{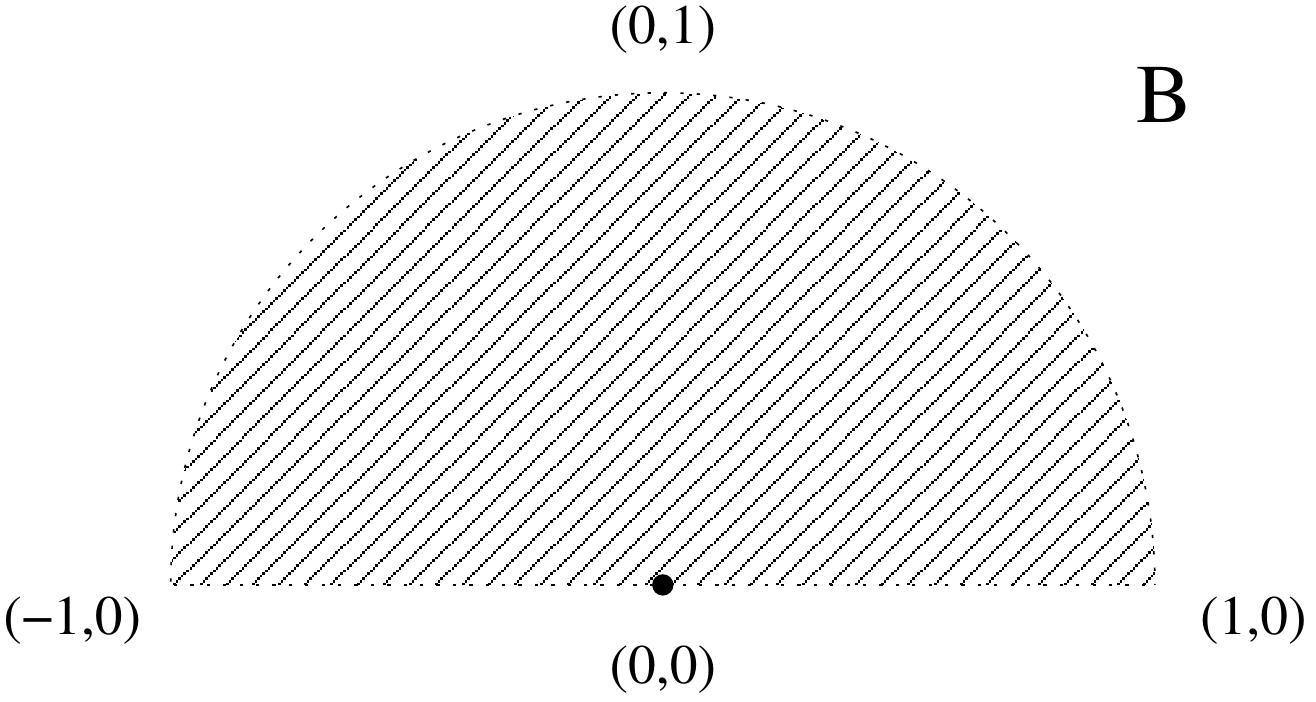}
		\includegraphics[width=0.5\textwidth]{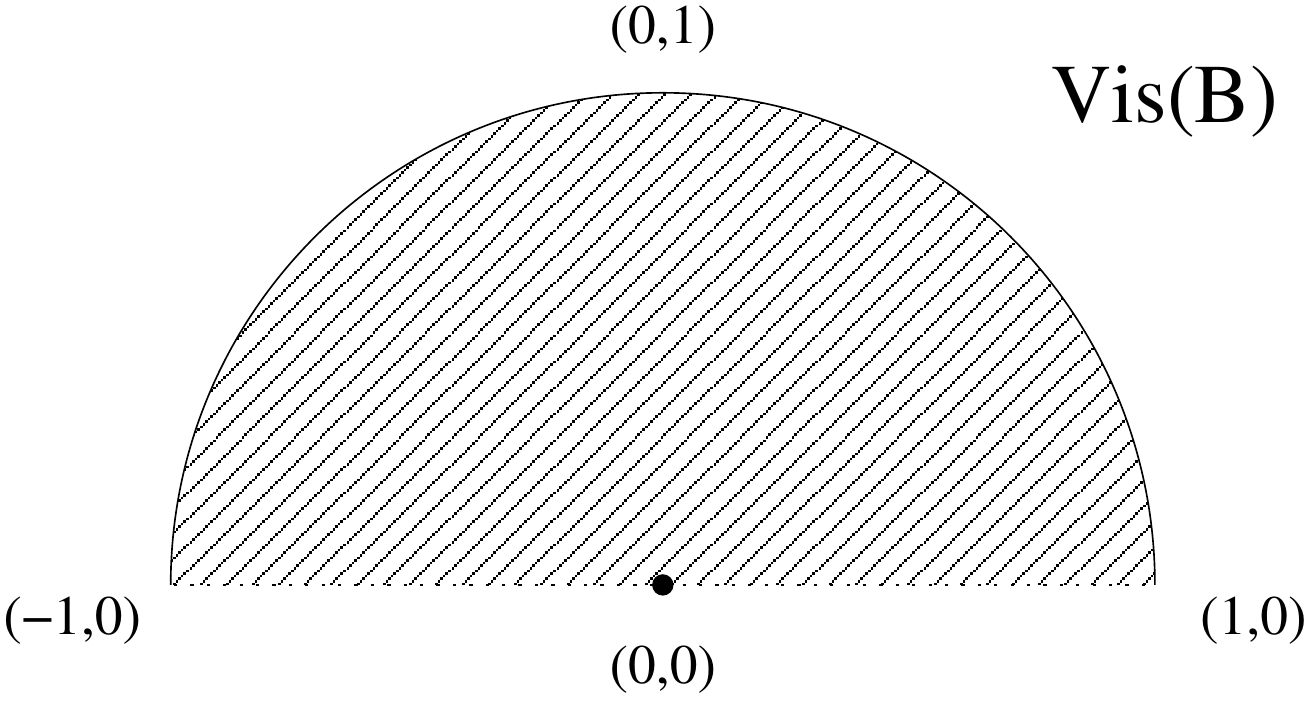}
		\caption{$B$ and $\ViH(B)$\label{fig:Vis-B}}	
	    \end{subfigure}
	    \caption{Visibility hulls}
	\end{figure}
\end{exa}

Let $\Alg V$ be a vector space over $\mathbb R$. 
The \emph{affine hull} of a subset $A\subseteq V$ is 
$$\AfH(A)=\{\sum_{i=1}^nt_ix_i\mid n\geq 1,x_i\in A,t_i\in\mathbb R,\sum_{i=1}^nt_i=1\}.$$
The affine hull of $A$ is the smallest affine subspace of $\Alg V$ containing $A$, see e.g.\ \cite[p.6]{rockafellar:1970}.

\begin{lem}\label{lem:vis-properties}
	Let $\Alg V$ be a vector space over $\mathbb R$, and $A\in\CSS\Alg V$. Then 
	\begin{enumerate}
	\item ${\displaystyle \ViH(A)=\bigcap_{\substack{a\in A\\ p\in(0,1)}}\frac 1p(A-\bar pa)\subseteq\AfH(A)}$.
	\item $\ViH(A)$ is convex.
	\item $A\subseteq\ViH(A)$ and $\ViH(\ViH(A))=\ViH(A)$.
	\item $\ViH(\{z\})=\{z\}$ for all $z\in V$. 
	\item If $\Alg V$ is a topological vector space, then $\ViH(A)\subseteq\overline A$, $\overline A$ being the topological closure of $A$. 
	\end{enumerate}
\end{lem}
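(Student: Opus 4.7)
The plan is to treat the five parts in the order given.

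Parts (1) and (2) are essentially formal manipulations. For (1), the equality $\ViH(A)=\bigcap_{a\in A,\,p\in(0,1)}\frac{1}{p}(A-\bar pa)$ is just the tautology $px+\bar pa\in A\Leftrightarrow x\in\frac{1}{p}(A-\bar pa)$, quantified over $a$ and $p$. For the containment $\ViH(A)\subseteq\AfH(A)$, I fix any $a_0\in A$ (nonempty) and any $p_0\in(0,1)$; for $x\in\ViH(A)$ the element $y=p_0x+\bar{p_0}a_0$ lies in $A$, so $x=\frac{1}{p_0}y-\frac{\bar{p_0}}{p_0}a_0$ expresses $x$ as an affine combination of elements of $A$ (the coefficients $\frac{1}{p_0}$ and $-\frac{\bar{p_0}}{p_0}$ sum to $1$). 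For (2), given $x,y\in\ViH(A)$, $q\in(0,1)$, $a\in A$, and $p\in(0,1)$, the barycenter axiom lets me rewrite $p(qx+\bar qy)+\bar pa=pq\cdot x+\overline{pq}\cdot z$ where $z=\frac{p\bar q}{\overline{pq}}y+\frac{\bar p}{\overline{pq}}a$; since the two fractions are positive and sum to $1$, $z\in A$ by convexity of $A$, and then $x\in\ViH(A)$ gives $pq\cdot x+\overline{pq}\cdot z\in A$.

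Part (3) is the heart of the argument. The inclusion $A\subseteq\ViH(A)$ follows directly from convexity of $A$, and $\ViH(A)\subseteq\ViH(\ViH(A))$ is then the same statement applied to $\ViH(A)$. The main obstacle is the reverse inclusion $\ViH(\ViH(A))\subseteq\ViH(A)$: given $x\in\ViH(\ViH(A))$, $a\in A$, and $p\in(0,1)$, the naive application of the definition only yields $px+\bar pa\in\ViH(A)$ (since $a\in A\subseteq\ViH(A)$), whereas we need membership in $A$ itself. I plan to bridge this gap by choosing some $p'\in(p,1)$ strictly larger than the target $p$; then $p'x+\bar{p'}a\in\ViH(A)$ by the same reasoning, and applying the definition of $\ViH(A)$ to this element with auxiliary element $a\in A$ and parameter $q=p/p'\in(0,1)$ produces $q(p'x+\bar{p'}a)+\bar qa=qp'x+\overline{qp'}a=px+\bar pa\in A$, as desired.

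Parts (4) and (5) are short. For (4), I sandwich using (1) and (3): $\{z\}\subseteq\ViH(\{z\})\subseteq\AfH(\{z\})=\{z\}$. For (5), given $x\in\ViH(A)$ and any fixed $a\in A$, I have $px+\bar pa\in A$ for every $p\in(0,1)$; continuity of scalar multiplication at $(0,a-x)$ forces $\bar p(a-x)$ into any prescribed neighborhood of $0$ once $p$ is close enough to $1$, so every neighborhood of $x$ meets $A$ and therefore $x\in\overline A$.
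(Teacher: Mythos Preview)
Your argument is essentially the paper's, with one slip in part~(2). You write that $z=\frac{p\bar q}{\overline{pq}}y+\frac{\bar p}{\overline{pq}}a\in A$ ``by convexity of $A$'', but $y$ is only known to lie in $\ViH(A)$, not in $A$. The conclusion $z\in A$ is nonetheless correct, but for a different reason: it is exactly the defining property of $y\in\ViH(A)$ applied with the element $a\in A$ and parameter $\frac{p\bar q}{\overline{pq}}\in(0,1)$. With that fix your direct proof of~(2) goes through. The paper avoids this computation entirely by observing that each set $\frac 1p(A-\bar pa)$ in the intersection of part~(1) is convex (an affine image of the convex set $A$), so $\ViH(A)$ is convex as an intersection of convex sets.

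The remaining parts match the paper's proof. Your idempotence argument in~(3) is the same two-step trick (first land in $\ViH(A)$, then push into $A$), just parametrised by $p'>p$ and $q=p/p'$ rather than by factoring $r=pq$. For~(4) the paper computes $\frac 1p(\{z\}-\bar pz)=\{z\}$ directly, whereas your sandwich $\{z\}\subseteq\ViH(\{z\})\subseteq\AfH(\{z\})=\{z\}$ is an equally valid shortcut. Part~(5) is identical.
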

\begin{proof}
\hfill
	\begin{enumerate}
	\item We have
		\[
			x\in\ViH(A) \ \Leftrightarrow\ \forall a\in A.\,\,\forall p\in(0,1).\,\, px+\bar pa\in A
			\ \Leftrightarrow\ \forall a\in A.\,\,\forall p\in(0,1).\,\,x\in\frac 1p(A-\bar pa)
		\]
	\item
	By 1., the set $\ViH(A)$ is the intersection of convex sets. 
	\item Let $x\in A$. Then $px+\bar p a\in A$, $a\in A$, $p\in(0,1)$, since $A$ is convex. Thus $A\subseteq\ViH(A)$. 
		Assume that $x\in\ViH(\ViH(A))$, and let $a\in A$, $p,q\in(0,1)$. 
		Then $px+\bar pa\in\ViH(A)$, since $a \in A \subseteq \ViH(A)$, and hence 
		$qp x+\overline{qp}a=q(px+\bar pa)+\bar qa\in A$.
		Every number $r\in(0,1)$ can be represented as $r=pq$ with some $p,q\in(0,1)$, and we conclude that 
		$x\in\ViH(A)$. 
	\item We have $\frac 1p(\{z\}-\bar pz)=\{z\}$, $p\in(0,1)$. By 1., $\ViH(\{z\})=\{z\}$. 
	\item Let $x\in\ViH(A)$ and $a\in A$. Then $x=\lim_{p\rightarrow 1}(px+\bar pa)\in\overline{A}$. 
\qedhere
	\end{enumerate}
\end{proof}

The operator $\ViH\colon\CSS\Alg V\to\CSS\Alg V$ is not monotone, as demonstrated in Example~\ref{ex:vis-example}. 
Hence, it is not the restriction of a topological closure operator to $\CSS\Alg V$. 
Still, it is related to topological closures:

\begin{rem}\label{rem:relclos}
	Let $\Alg V$ be a topological vector space and $A\in\CSS\Alg V$ \emph{relatively closed}, i.e., 
	closed in $\AfH(A)$ w.r.t.\ the subspace topology. Then $A$ is visibility closed. 
	This follows by putting together Lemma~\ref{lem:vis-properties}.1 and 5. 
	The converse does not hold, as demonstrated by the set $\ViH(B)$ from Example~\ref{ex:vis-example}. 
	This observation shows for example that $\ViH(\Dis S)=\Dis S$. 
\end{rem}

We can now formulate our description of extensions of $\CSS\Alg D$. 

\begin{thm}\label{thm:Pc-extensions}
	Let $\Alg V$ be a vector space over $\mathbb R$, let $D$ be a convex subset of $V$ with more than one element, 
	and consider the convex algebra $\Alg X=\CSS\Alg D$.
	One-point extensions of $\Alg X$ can be constructed as follows:
	\begin{enumerate}
	\item The black-hole behaviour, where the set of adherence equals $X$. 
	\item Let $C\in\CSS(\ViH(D))$, 
		and let $*$ imitate $C$ on all of $X$. 
	\item Let $w$ be an extremal point of $\Alg D$, 
		and let $*$ imitate $\{w\}$ on $X\setminus\{\{w\}\}$ and adhere $\{w\}$. 
	\item Let $C\in\CSS(\ViH(D))$ with at least two elements, 
		assume $\CoH\{A\in X\mid A\nadh C\}\neq X$, and let $I = \CoH\{A\in X\mid A\nadh C\}$. 
		Let $P \neq X$ be a prime ideal in $\Alg X$ with $I\subseteq P$, 
		and let $*$ imitate $C$ on $P$ and adhere $X\setminus P$. 
	\end{enumerate}
	Assume in addition that $D-D$ is linearly bounded. 
	Then every one-point extension of $\Alg X$ can be obtained in this way.
	Each two of these extensions are different: 
	the point $w$ in case $(3)$, the set $C$ in cases $(2),(4)$, and the prime ideal $P$ in case $(4)$, 
	are uniquely determined by a given extension. 
\end{thm}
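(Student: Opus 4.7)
My plan proceeds in three movements. First I verify that each of the four constructions produces a valid extension, then I argue that the parameters $w$, $C$, $P$ attached to a given extension are uniquely recoverable, and finally I use the linear-boundedness hypothesis to show every extension arises from one of the four constructions.

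\textbf{Each construction gives an extension.} Case $(1)$ is the black-hole of Example~\ref{ex:star}.1. Case $(2)$ is handled by imitating $C$ inside the ambient algebra $\CSS\Alg V$ (Example~\ref{ex:star}.2 when $C\in X$, Example~\ref{ex:star}.3 when $C\notin X$); the only nontrivial check is $pA+\bar pC\subseteq D$ for $A\in X$ and $p\in(0,1)$, which is exactly what $C\subseteq\ViH(D)$ gives since $A\subseteq D$. Case $(3)$ is Example~\ref{ex:star}.4 applied to $\CSS\Alg D$ once one notes that $\{w\}$ is extremal in $\CSS\Alg D$ iff $w$ is extremal in $\Alg D$, a short Minkowski-sum calculation. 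For case $(4)$ I would apply the Gluing Lemma: on the subalgebra $\Alg P\leq\CSS\Alg D$ I install the imitating-$C$ structure from case $(2)$ inside $\CSS\Alg V$, obtaining $\Alg P_*$. The compatibility condition~(\ref{eq:glue-lem}) then reads $pA+\bar pB\adh pA+\bar pC$ for $A\in P$ and $B\in X\setminus P$, which by barycenter in $\CSS\Alg V$ unfolds to $pA+\bar p(rB+\bar rC)=pA+\bar pC$ for every $r\in(0,1)$ and is therefore equivalent to $B\adh C$. This last condition holds because $X\setminus P\subseteq X\setminus I\subseteq\{B\in X\mid B\adh C\}$, by the very definition of $I$ and the assumption $P\supseteq I$.

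\textbf{Classification.} Let $\Alg X_*$ be a given extension. If $\Pri(\Alg X_*)=\emptyset$ we are in case $(1)$. Otherwise, I would pick a singleton $\{v\}\in\Pri(\Alg X_*)$ and a $q\in(0,1)$, set $Y=q\{v\}+\bar q*\in X$, and define $C=\frac 1{\bar q}(Y-qv)\in\CSS\Alg V$. Since singletons are cancellable in $\CSS\Alg V$ (Minkowski translation is invertible), Lemma~\ref{lem:uniqueness-lemma} applied with $\CSS\Alg D\leq\CSS\Alg V$ tells me that $*$ imitates $C$ on $\Pri(\Alg X_*)$ and $\Adh(\Alg X_*)\subseteq\{A\in X\mid A\adh C\}$. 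The split is then governed by $\Pri(\Alg X_*)$: if it equals $X$ we land in $(2)$; if $C$ is a singleton $\{w\}$ and $\Adh(\Alg X_*)=\{\{w\}\}$, Lemma~\ref{lem:adh-singletons} forces $w$ to be extremal in $\Alg D$ and we land in $(3)$; otherwise $|C|\geq 2$ and $P=\Pri(\Alg X_*)$ gives $(4)$, with the inclusion $I\subseteq P$ just being the unpacking of $X\setminus P\subseteq\{A\mid A\adh C\}$. Uniqueness of $w$, $C$, and $P$ then follows from the cancellativity of singletons in $\CSS\Alg V$ together with the independence of the formula for $C$ of the choices of $\{v\}$ and $q$.

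\textbf{The main obstacle} is precisely the verification, under the hypothesis that $D-D$ is linearly bounded, of two facts that I glossed over above: that $\Pri(\Alg X_*)$ contains at least one singleton whenever it is nonempty, and that the resulting $C$ actually lies in $\CSS(\ViH(D))$ rather than only in $\CSS\Alg V$. For the inclusion $C\subseteq\ViH(D)$, the direct verification $pd+\bar pc\in D$ is immediate for $d$ with $\{d\}\in\Pri(\Alg X_*)$, since $p\{d\}+\bar pC=p\{d\}+\bar p*\in X$; the hard part is $\{d\}\in\Adh(\Alg X_*)$, where Lemma~\ref{lem:uniqueness-lemma} only supplies $\{d\}\adh C$ and hence $pd+\bar pc\in C$, so one must use Lemma~\ref{lem:intersection} applied to $C$ (whose difference set is linearly bounded because $C-C\subseteq\frac 1{\bar q}(D-D)$) to pin down the actual location of $pd+\bar pc$ in $D$. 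A dual invocation of Lemma~\ref{lem:intersection} excludes the degenerate possibility that $\Pri(\Alg X_*)$ contains only non-singletons. This algebro-geometric rigidity induced by linear boundedness is the crux of the entire theorem.
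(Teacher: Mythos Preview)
Your proposal follows the paper's architecture closely and gets most of the pieces right. There are two genuine gaps worth naming.

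\textbf{Construction of case $(4)$.} To apply the Gluing Lemma you need $\Alg P_*$ to be a convex algebra on $P\cup\{*\}$ with $\Adh(\Alg P_*)=\emptyset$. Installing the imitating-$C$ structure (Example~\ref{ex:star}.3) requires $P\cup\{C\}$ to be convex in $\CSS\Alg V$, i.e.\ $pA+\bar pC\in P$ for every $A\in P$ and $p\in(0,1)$; this same fact is what gives $\Adh(\Alg P_*)=\emptyset$. You only argue $pA+\bar pC\in X$ (via $C\subseteq\ViH(D)$), not $\in P$. The paper closes this with a short but essential contradiction argument (its relation~(\ref{eq:in-P})): if $pA+\bar pC\in X\setminus P\subseteq X\setminus I$, then $pA+\bar pC\adh C$, whence $C=qpA+\overline{qp}\,C\in X$; now $P$ being an ideal in $\Alg X$ forces $pA+\bar pC\in P$ after all. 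Without this step the Gluing Lemma cannot be invoked. (Your claim that the compatibility condition is \emph{equivalent} to $B\adh C$ is also too strong, since a general $A\in P$ is not cancellable; only the forward implication---Lemma~\ref{lem:adh-props}.3---is needed and available.)

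\textbf{Producing a singleton in $\Pri(\Alg X_*)$.} Your ``dual invocation of Lemma~\ref{lem:intersection}'' is not a proof: that lemma concerns $\{x\}\adh A$ for $A\in\CSS\Alg V$, whereas the obstruction here is $\{x\}\adh *$ with $*$ lying outside $\CSS\Alg V$. The paper instead proves a dedicated statement (Lemma~\ref{lem:one-singleton}): if $\Pri(\Alg X_*)\neq\emptyset$ then $\Adh(\Alg X_*)$ contains \emph{at most one} singleton. The argument takes $\{x\},\{y\}\in\Adh(\Alg X_*)$ and some $A\in\Pri(\Alg X_*)$, uses $pA+\bar p\{x\}\adh pA+\bar p*$ and likewise for $\{y\}$, and extracts that $\frac{q\bar p}{\overline{q\bar p}}(x-y)\in D-D$ for all $p,q\in(0,1)$; linear boundedness of $D-D$ then forces $x=y$. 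Since $|D|\geq 2$, at least one singleton lies in $\Pri(\Alg X_*)$, and your program then goes through. The paper also organises the subsequent case split more cleanly than you do: it branches on whether $\Adh(\Alg X_*)$ contains a singleton at all. If yes, Lemma~\ref{lem:intersection} (applied to $C$, as you correctly indicate, using $C-C\subseteq\frac 1{\bar q}(D-D)$) forces $C=\{w\}$ and one lands in case~$(3)$; if no, \emph{every} singleton lies in $\Pri(\Alg X_*)$ and the inclusion $C\subseteq\ViH(D)$ is immediate, leading to~$(2)$ or~$(4)$. Your worry about ``the hard part $\{d\}\in\Adh(\Alg X_*)$'' for the $\ViH$-inclusion thus dissolves: that case never needs the inclusion, since it is already case~$(3)$.
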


We are familiar with the constructions $(1)$--$(3)$ from Example~\ref{ex:star} and Theorem~\ref{thm:free-extensions}. 
That $(4)$ gives extensions follows from the Gluing Lemma, Lemma~\ref{lem:glue}. 

\begin{proof}[Proof of Theorem~\ref{thm:Pc-extensions}; constructions]
	The black-hole behaviour is always possible, cf.\ Example~\ref{ex:star}.1.
	Assume we are given $C \in \CSS(\ViH(D))$. If $C\subseteq D$ use Example~\ref{ex:star}.2.
	Otherwise, use Example~\ref{ex:star}.3 with the algebra $\Alg Y=\CSS\Alg V$ and its element $C$. 
	The necessary hypothesis, that $X\cup\{C\}$ is convex, is satisfied since $C\subseteq\ViH(D)$ and hence 
	$pA+\bar pC\subseteq D$ for all $A\subseteq D$, $p\in(0,1)$. 
	Assume we are given an extremal point $w$ of $\Alg D$. 
	Then $\{w\}$ is an extremal point of $\CSS\Alg D$. The construction $(3)$ is exactly 
	Example~\ref{ex:star}.4 applied with this extremal point.

	Now consider the construction $(4)$. Assume $C$ and $P$ have the properties stated in $(4)$. 
	We first show 
	\begin{equation}\label{eq:in-P}
		pA+\bar pC\in P,\quad\text{~for~} A\in P, \, p\in(0,1)
		.
	\end{equation}
	Assume towards a contradiction that $pA+\bar pC\not\in P$ for some $A\in P$, $p\in(0,1)$. 
	Since $P\supseteq I$ we have $pA+\bar pC\notin I$ and hence $pA+\bar pC\adh C$. 
	Since $C\subseteq\ViH(D)$ we have $sA+\bar sC\in X$, for $s\in(0,1)$. 
	Choose $q\in(0,1)$, then 
	\[
		C=q(pA+\bar pC)+\bar qC=qpA+\overline{qp}C\in X
		.
	\]
	Since $P$ is an ideal in $\Alg X$, we get $pA+\bar pC\in P$, a contradiction. 

	The relation (\ref{eq:in-P}) implies that $P\cup\{C\}$ is a convex subset of $\CSS\Alg V$, and 
	Examples~\ref{ex:star}.2/3 provide an extension $\Alg P_*$. 
	Again by (\ref{eq:in-P}), it holds that $\Adh(\Alg P_*)=\emptyset$. 
	To apply the Gluing Lemma, we need to check (\ref{eq:glue-lem}). 
	Let $A\in P$, $B\in X\setminus P$, $p\in(0,1)$. Since $I\subseteq P$ we have $B\adh C$, and hence 
	$pA+\bar pB\adh pA+\bar pC=pA+\bar p*$ by Lemma~\ref{lem:adh-props}.3.
\end{proof}

Assume that $D-D$ is linearly bounded. 
Our task is to show that every given extension $\Alg X_*$ can be realised as described in $(1)$--$(4)$ of the theorem,
and show uniqueness. 
The proof relies on the following lemma.

\begin{lem}\label{lem:one-singleton}
	Assume $\Alg X_*$ is an extension with $\Pri(\Alg X_*)\neq\emptyset$. 
	Then $\Adh(\Alg X_*)$ contains at most one singleton set. 
\end{lem}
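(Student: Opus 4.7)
The approach will be proof by contradiction. The plan is to suppose there exist two distinct singletons $\{x\},\{y\}\in\Adh(\Alg X_*)$ with $x\neq y$, and some $A\in\Pri(\Alg X_*)$, and from this to manufacture a nonzero vector $qd\in V$ together with its positive ray $\{s\cdot qd:s\geq 0\}$ sitting inside a set of the form $E-E\subseteq D-D$, contradicting the hypothesis that $D-D$ is linearly bounded.

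The first step will be, for any fixed $r\in(0,1)$, to form the elements $B_x:=rA+\bar r\{x\}$, $B_y:=rA+\bar r\{y\}$ in $X$, together with $E:=rA+\bar r*$, which lies in $X$ (not $*$) precisely because $A\in\Pri$. Applying Lemma~\ref{lem:adh-props}.3 with $z=A$ to each of $\{x\}\adh *$ and $\{y\}\adh *$ then yields $B_x\adh E$ and $B_y\adh E$ in $\Alg X$; unfolding these gives the set equalities
\[
    qB_x+\bar qE=E=qB_y+\bar qE\quad\text{for all }q\in(0,1).
\]
The key observation is the vector-space identity $B_y=B_x+d$ with $d:=\bar r(y-x)\neq 0$ (as Minkowski sums in $V$). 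Substituting $B_y=B_x+d$ into the above should produce $E=E+qd$: the set $E$ is invariant under the nontrivial translation by $qd$ in $V$.

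Given this invariance, the final step will be routine. From $E=E+qd$ and convexity of $E$, one extracts an entire ray inside $E$: integer iteration gives $e_0+n(qd)\in E$ for any $e_0\in E$ and every $n\in\mathbb N$, and convex interpolation between $e_0$ and $e_0+n(qd)$ fills in $e_0+s(qd)\in E$ for all $s\in[0,n]$; letting $n$ grow yields the ray $\{e_0+s(qd):s\geq 0\}\subseteq E$. Hence $s\mapsto s(qd)$ is a convex homomorphism $(0,\infty)\to E-E$. Since $E\subseteq D$ forces $E-E\subseteq D-D$, which is linearly bounded by hypothesis, this homomorphism must be constant, and therefore $qd=0$. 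Consequently $x=y$, the desired contradiction.

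The main obstacle I anticipate is recognising how to use both adherences simultaneously: the naive attempt to ``cancel $E$'' in $qB_x+\bar qE=qB_y+\bar qE$ to derive $B_x=B_y$ is not valid for Minkowski sums, since $\CSS\Alg V$ fails cancellation in general. The trick will be to bundle the two adherences into the single translation-invariance equation $E=E+qd$, after which the standard linearly-bounded-ray technique (essentially the machinery behind Lemma~\ref{lem:intersection}) takes care of the rest.
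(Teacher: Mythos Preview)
Your proof is correct and follows the same overall strategy as the paper: use Lemma~\ref{lem:adh-props}.3 to push the two adherences $\{x\}\adh *$ and $\{y\}\adh *$ to the pair of set equations $q B_x+\bar qE=E=q B_y+\bar qE$, and from these manufacture a nonconstant convex map $(0,\infty)\to D-D$, contradicting linear boundedness.

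The execution differs in one respect worth noting. You package the two equations into the translation-invariance statement $E=E+qd$ (using $B_y=B_x+d$ in $V$) and then iterate and interpolate to produce the ray inside $E-E$. The paper instead works element-by-element: from the two equations it picks witnesses $a,a_1\in A$ and $c,c_1\in C$ and rearranges to get $\frac{q\bar p}{\overline{q\bar p}}(x-y)\in D-D$ directly, then varies both parameters $p,q$ so that $\frac{q\bar p}{\overline{q\bar p}}$ sweeps out all of $(0,\infty)$, obtaining $t(x-y)\in D-D$ for every $t>0$ without any iteration. Your set-level translation-invariance formulation is arguably cleaner conceptually; the paper's element-level argument is shorter and avoids the iteration/convex-interpolation step. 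Both are complete.
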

\begin{proof}
	Assume that $\{x\},\{y\}\in\Adh(\Alg X_*)$, and choose $A\in\Pri(\Alg X_*)$. 
	Then, for each $p\in(0,1)$, by Lemma~\ref{lem:adh-props}.3
	\[
		pA+\bar p\{x\}\adh pA+\bar p*,\quad pA+\bar p\{y\}\adh pA+\bar p*
		.
	\]
	Set $C=pA+\bar p*$. Then $C\in\Pri(\Alg X_*)$ and for each $q\in(0,1)$ 
	\[
		q(pA+\bar p\{x\})+\bar qC=C=q(pA+\bar p\{y\})+\bar qC
		.
	\]
	Thus, for each $a\in A,c\in C$ we find $a_1\in A,c_1\in C$ with 
	\[
		q\bar px+\overline{q\bar p}\Big(\frac{qp}{\overline{q\bar p}}a+\frac{\bar q}{\overline{q\bar p}}c\Big)
		=q\bar py+\overline{q\bar p}\Big(\frac{qp}{\overline{q\bar p}}a_1+\frac{\bar q}{\overline{q\bar p}}c_1\Big)
		,
	\]
	and hence 
	\[
		\frac{q\bar p}{\overline{q\bar p}}(x-y)\in D-D
		.
	\]
	Any positive real number $t$ can be represented as $\frac{q\bar p}{\overline{q\bar p}}$ with some $p,q\in(0,1)$, 
	for example use $p=\frac{1}{2t+1}, q=\frac{2t+1}{2t+2}$. 
	Thus $\varphi\colon t\mapsto t(x-y)$ is a homomorphism of $(0,\infty)$ to $D-D$. 
	Since $D-D$ is linearly bounded, $\varphi$ is constant, and hence $x =y$. 
\end{proof}

\begin{proof}[Proof of Theorem~\ref{thm:Pc-extensions}; all $\Alg X_*$ are obtained.]
	Let an extension $\Alg X_*$ of $\Alg X$ be given. If $\Pri(\Alg X_*)=\emptyset$ then case $(1)$ 
	of the theorem holds. Assume in the following that $\Pri(\Alg X_*)\neq\emptyset$. 

	By Lemma~\ref{lem:one-singleton}, $\Adh(\Alg X_*)$ contains at most one singleton set. 
	Since $D$ has more than one element, we find $z\in D$ with $\{z\}\in\Pri(\Alg X_*)$. 
	Choose $q\in(0,1)$. Then $q\{z\}+\bar q*\in\Pri(\Alg X_*)\subseteq\CSS\Alg V$. 
	We will show that $*$ imitates the convex set 
	\[
		C = \frac 1{\bar q}\big([q\{z\}+\bar q*]-qz\big)\in\CSS\Alg V
		.
	\]
	By definition, $C$ satisfies $q\{z\}+\bar q*=q\{z\}+\bar qC$. 
	Since singletons are $\CSS\Alg V$-cancellable, as noted after Definition~\ref{def:Pc}, 
	the hypothesis of Lemma~\ref{lem:uniqueness-lemma} are fulfilled. 
	We conclude that $*$ imitates $C$ on $\Pri(\Alg X_*)$ and that 
	$\Adh(\Alg X_*)\subseteq\big\{A\in X\mid A\adh C\big\}$. 

	Consider the case that $\Adh(\Alg X_*)$ contains a singleton, say $\{w\}\in\Adh(\Alg X_*)$. 
	Since $C\subseteq\frac 1{\bar q}(D-qz)$, the set $C-C$ is linearly bounded, cf.\ Remark~\ref{rem:lin-bound-vs}. 
	Lemma~\ref{lem:intersection} implies that $C=\{w\}$ and Lemma~\ref{lem:adh-singletons} that
	$\{A\in X\mid A\adh C\}=\{\{w\}\}$. We see that $\Pri(\Alg X_*)=X\setminus\{\{w\}\}$ 
	and that $*$ imitates $\{w\}$ on $\Pri(\Alg X_*)$. 
	Since $X\setminus\{\{w\}\}$ is an ideal in $\Alg X$, also $D\setminus\{w\}$ is an ideal in $\Alg D$, i.e., 
	$w$ is an extremal point of $\Alg D$. Thus $\Alg X_*$ has the form described in case $(3)$. 

	Consider the case that $\Adh(\Alg X_*)$ contains no singleton. Hence all singletons are in $\Pri(\Alg X_*)$. Then 
	\[
		p\{y\}+\bar pC=p\{y\}+\bar p*\subseteq X
		,\quad\text{~for~} y\in D,\,\, p\in(0,1)
		. 
	\]
	Thus $C\subseteq\bigcap_{\substack{y\in D\\ p\in(0,1)}}\frac{1}{\bar p}(D-py)=\ViH(D)$, by
	Lemma~\ref{lem:vis-properties}.1. 
	If $\Adh(\Alg X_*)=\emptyset$, case $(2)$ of the theorem holds. 
	Assume that $\Adh(\Alg X_*)\neq\emptyset$. If $C$ contains only one element, say $C=\{w\}$, 
	we would have 
	\[
		\emptyset\neq\Adh(\Alg X_*)\subseteq\{A\in X\mid A\adh \{w\}\}=\{\{w\}\}
		.
	\]
	From this $\Adh(\Alg X_*)=\{\{w\}\}$, a contradiction. Thus $C$ has at least two elements. 
	Since 
	\[
		X\neq\Pri(\Alg X_*)=X\setminus\Adh(\Alg X_*)\supseteq\{A\in X\mid A\nadh C\}
		,
	\]
	the convex hull of $\{A\in X\mid A\nadh C\}$ is not $X$ and case $(4)$ of the theorem holds. 
\end{proof}

\begin{proof}[Proof of Theorem~\ref{thm:Pc-extensions}; uniqueness]
	The uniqueness assertion of the theorem follows since $\Pri(\Alg X_*)$ always contains singletons
	by Lemma~\ref{lem:one-singleton}, and singletons are cancellable in $\CSS\Alg V$. 
\end{proof}

The following example shows that the linear boundedness condition in Theorem~\ref{thm:Pc-extensions} 
cannot be dropped without admitting other types of constructions. 

\begin{exa}\label{ex:lb-nec}
	Let $\Alg V=\mathbb R^2$ and $D=\{(t_1,t_2)\in\mathbb R^2\mid t_2>0\}\cup\{(0,0)\}$. 
	Set $P=\{A\in\CSS\Alg D\mid (0,0)\not\in A\}$. 
	First we show that $P$ is a prime ideal of $\CSS\Alg D$. Denote by $f\colon\mathbb R^2\to\mathbb R$ the projection onto 
	the second coordinate. Let $A\in P$, $B\in\CSS\Alg D$, and $p\in(0,1)$. Then, for each $a\in A$ and $b\in B$, 
	\[
		f(pa+\bar pb)=p\underbrace{f(a)}_{>0}+\bar p\underbrace{f(b)}_{\geq 0}>0
		.
	\]
	Thus $pA+\bar pB\in P$, and we see that $P$ is an ideal. If $A,B\in(\CSS\Alg D)\setminus P$, the point 
	$(0,0)$ belongs to both of $A$ and $B$ and hence also to each convex combination of $A$ and $B$. 
	Thus $(\CSS\Alg D)\setminus P$ is convex, and $P$ is prime. 

	Set $C=D\cup\{(t_1,t_2)\in\mathbb R^2\mid t_2=0,t_1>0\}$. Then $C\subseteq\ViH(D\setminus\{(0,0)\})$, and we 
	can define an extension $\Alg P_*$ by letting $*$ imitate $C$ on all of $P$. 
	We check that the compatibility condition (\ref{eq:glue-lem}) of the gluing lemma is satisfied. 
	To this end we show that every element of $(\CSS\Alg D)\setminus P$ adheres to $C$ (in $\CSS\Alg V$), and 
	refer to Lemma~\ref{lem:adh-props}.3. Let $B\in(\CSS\Alg D)\setminus P$ and $p\in(0,1)$. 
	Since $B\subseteq D\subseteq C$ and $C$ is convex, we have $pB+\bar pC\subseteq C$. For the reverse inclusion, 
	observe that $tC=C$ for all $t>0$. We can thus write any element $c\in C$ as 
	\[
		c=p\cdot(0,0)+\bar p\cdot\big(\frac 1{\bar p}c\big)\in pB+\bar pC
		.
	\]
	Applying the Gluing Lemma we obtain an extension $(\CSS\Alg D)_*$. 
	This extension is not among the ones listed in Theorem~\ref{thm:Pc-extensions}, since $C\nsubseteq\ViH(D)$. 
\end{exa}

Unboundedness of $D$ enters in this example in the way that it enables us to let $*$ imitate a cone. 
In fact, dropping linear boundedness, one can still show that an extension
which is not of type $(1)$--$(4)$ of the theorem, must be such that 
$*$ imitates some cone whose apex lies in $D$. However, we have no description of which cones occur that way.

\section{Termination in Probabilistic Automata}
\label{sec:PA}

A (simple) \emph{probabilistic automaton} (PA, for short)~\cite{SL94,Seg95:thesis} is a triple 
$M = (S, A, \to)$ where $S$ is the set of states, $A$ the set of actions, and $\to \,\, \subseteq S \times A \times \Dis S$ the transition relation. As usual, we write $a \stackrel{a}{\to} \xi$ for $(s,a,\xi) \in\, \to$ and say that $s$ makes an $a$-step to $\xi$.

A probabilistic automaton $M = (S, A, \to)$ can be identified~\cite{BSV04:tcs,Sokolova11} with a $(\Po\Dis)^A$-coalgebra $(S,c_M)$ where $c_M \colon S \to (\Po\Dis S)^A$ on $\Sets$ and $s \stackrel{a}{\to} \xi$ in $M$ iff $\xi \in c_M(s)(a)$.

We say that a PA $M$ is \emph{input enabled}, if for all $s \in S$ and all $a \in A$, the set $\{\xi \in \Dis S \mid s \stackrel{s}{\to} \xi\}$ is nonempty. Input-enabledness implies that the associated coalgebra $c_M$ is actually a $(\Po_{\neq\emptyset}\Dis)^A$-coalgebra where $\Po_{\neq\emptyset}$ denotes the nonempty powerset functor.

We call this view on PA (their standard definition and the observation that they are coalgebras on $\Sets$) the classical view.

In the classical view, the canonical semantics for PA is bisimilarity. However, it is possible to give PA more intricate and useful \emph{distribution} semantics. This was noticed in the last decade by many authors, and studied from a coalgebraic perspective in~\cite{BSS17}, by viewing them as \emph{belief-state transformers} and making the underlaying convex algebra structure explicit. To be precise, we state the following property, which is an instance of~\cite[Lemma 25]{BSS17}. By $U$ we denote the forgetful functor from $\EM(\Dis)$ to $\Sets$.

\begin{lem}\label{lem:BST}
There is a 1-1 correspondence between input enabled PA, i.e.,
$(\Po_{\neq\emptyset}\Dis)^A$-coalgebras on $\Sets$, and $\CSS^A$-coalgebras on
$\EM(\Dis)$ with carriers free algebras: 

$$
\begin{array}{c}
{c_M\colon S \to (\Po_{\neq\emptyset}\Dis S)^A \,\,\,\text{~in~}\,\,\, \Sets}
\raisebox{-6pt}{\rule{0pt}{0pt}}
\\
\hhline{=}
{c_M^\#\colon \Alg D_S \to (\CSS \Alg D_S)^A\,\,\,\text{~in~}\,\,\, \EM(\Dis)}	
\raisebox{13pt}{\rule{0pt}{0pt}}
\end{array}
$$

Given $c_M$, $c_M^\# = \alpha \after \Dis c$ for $\alpha$ being the convex algebra structure on $(\CSS\Alg D_S)^A$. Concretely, for $\xi = \sum p_i s_i \in \Alg D_S$ and $a \in A$ we have
$$c_M^\#(\xi)(a) = \left\{ \,\,\sum p_i \xi_i \,\mid \, s_i \stackrel{a}{\to} \xi_i\right\}.$$

Given $c_M^\#$, $c_M = U c_M^\# \after \eta$ where $\eta$ is the unit of the distribution monad, i.e., $\eta(s) = \delta_s$ is the Dirac distribution that assigns probability 1 to the state $s$. Concretely, for $s \in S$ and $a \in A$ we have
$c_M(s)(a) = c_M^\#(\delta_s)$. \qed
\end{lem}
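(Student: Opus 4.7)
The plan is to derive this lemma from the universal property of $\Alg D_S$ as the free convex algebra on $S$ (Example~\ref{ex:CAs}.2). Writing $\eta_S\colon S \to U\Alg D_S$, $s \mapsto \delta_s$, for the unit of the free-forgetful adjunction between $\Sets$ and $\EM(\Dis)$, the universal property yields, for any convex algebra $\Alg Y$, a natural bijection
$$\Sets(S, U\Alg Y) \;\cong\; \EM(\Dis)(\Alg D_S, \Alg Y),$$
via free extension in one direction and precomposition with $\eta_S$ in the other. I would take $\Alg Y = (\CSS\Alg D_S)^A$: as a set, this consists of functions assigning to each action a nonempty convex subset of $\Dis S$, so a $\Sets$-morphism $S \to U((\CSS\Alg D_S)^A)$ is precisely an input-enabled PA with convex-valued transition sets.

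Concretely, the forward map sends $c_M$ to $c_M^\# = \alpha \after \Dis c_M$, where $\alpha$ is the $\Dis$-algebra structure on $(\CSS\Alg D_S)^A$ (the standard formula for the free extension). The backward map restricts along $\eta_S$ to recover $c_M = U c_M^\# \after \eta_S$. Mutual inverseness is immediate from the triangle identities of the adjunction. To derive the concrete formula, I use that $c_M^\#$ is a convex algebra homomorphism:
$$c_M^\#\Bigl(\sum_i p_i s_i\Bigr)(a) \;=\; \sum_i p_i\, c_M^\#(\delta_{s_i})(a) \;=\; \sum_i p_i\, c_M(s_i)(a),$$
where the convex structure on $(\CSS\Alg D_S)^A$ is pointwise in $a$ and Minkowski on $\CSS\Alg D_S$ (Definition~\ref{def:Pc}). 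Unfolding the Minkowski sum gives exactly $\{\sum_i p_i \xi_i \mid \xi_i \in c_M(s_i)(a)\} = \{\sum_i p_i \xi_i \mid s_i \stackrel{a}{\to} \xi_i\}$, as stated.

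The one substantive point — and the main likely obstacle — is a mismatch between the two sides of the correspondence: $U c_M^\# \after \eta_S$ always produces convex-valued transition sets, whereas a $(\Po_{\neq\emptyset}\Dis)^A$-coalgebra in general need not. The stated bijection therefore really holds between $\CSS^A$-coalgebras on free algebras and input-enabled PAs up to taking state-action-wise convex hulls; equivalently, the composite $c_M \mapsto c_M^\# \mapsto c_M$ idempotently replaces each transition set by its convex hull. This identification is harmless in the belief-state-transformer semantics of~\cite{BSS17}, where the convex hull is the semantically meaningful datum.
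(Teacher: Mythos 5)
Your argument is essentially the proof the paper has in mind: the lemma is stated as an instance of \cite[Lemma~25]{BSS17}, which is precisely the free--forgetful adjunction bijection $\Sets(S,U\Alg Y)\cong\EM(\Dis)(\Alg D_S,\Alg Y)$ instantiated at $\Alg Y=(\CSS\Alg D_S)^A$, followed by the same unfolding of the pointwise/Minkowski operations to obtain the concrete formula. So both the route and the computations match what the paper delegates to the citation.

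Your closing observation is also correct, and it is the one point where the statement, read literally, is imprecise. The Sets-side functor produced by the adjunction is $U\CSS^AF$, i.e.\ $S\mapsto(\CSS\Dis S)^A$, not $(\Po_{\neq\emptyset}\Dis S)^A$: since the displayed formula gives $c_M^\#(\delta_s)(a)=c_M(s)(a)$ with no convex hull taken, the forward assignment is only well defined (and $c_M^\#$ is only a homomorphism into $(\CSS\Alg D_S)^A$ --- note that already idempotence $\tfrac12C+\tfrac12C=C$ forces $C$ convex) when every transition set $c_M(s)(a)$ is convex. The honest bijection is therefore between $\CSS^A$-coalgebras on free carriers and input-enabled PAs whose transition sets are nonempty \emph{convex} sets of distributions; for a general input-enabled PA one must insert $\CoH$ into the formula, and then, as you say, the round trip is the state-action-wise convex-hull closure rather than the identity. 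This does not affect how the lemma is used afterwards, since the belief-state-transformer semantics only depends on the convex hulls, but the flag you raise is warranted and would be worth a remark in the text.
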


The fact that input is always enabled is critical here, as $\CSS$ on $\EM(\Dis)$ is the \emph{nonempty} convex powerset and, as discussed above after Definition~\ref{def:Pc}, this nonemptiness is crucial in order to get a convex algebra structure.

Our original motivation when starting the work was to answer the question: What are all possible one-point extensions of $\CSS \Alg D_S$ in order to be able to allow for termination in a belief-state transformer, i.e., in order to overcome the restriction of input-enabledness. 

Theorem~\ref{thm:Pc-extensions} answers this question fully: All one-point extensions of $\Alg X = \CSS \Alg D_S$ with $X = \CSS\Dis S$ are given by (1)-(4) below and they are all different. 
\begin{enumerate}
	\item  The black-hole extension, where the set of adherence equals $X$.
	\item Let $C \in X$ and let $*$ imitate $C$ on all of $X$.
	\item Let $w = \delta_s$ be a Dirac distribution for $s \in S$, and let $*$ imitate $\{w\}$ on $X \setminus \{\{w\}\}$ and adhere $\{w\}$. 
	\item Let $C\in X$ with at least two elements, 
		assume $\CoH\{A\in X\mid A\nadh C\}\neq X$, and let $I = \CoH\{A\in X\mid A\nadh C\}$. 
		Let $P \neq X$ be a prime ideal in $\Alg X$ with $I\subseteq P$, 
		and let $*$ imitate $C$ on $P$ and adhere $X\setminus P$. 
	\end{enumerate}

Here, we get a simplification of the formulation of Theorem~\ref{thm:Pc-extensions} as (as noted above Theorem~\ref{thm:Pc-extensions}) $\ViH(\Alg D_S) = \Alg D_S$ and the extremal points of $\Alg D_S$ are the Dirac distributions on $S$.
The case (4) does not simplify further, which we elaborate on some examples in Section~\ref{sec:appendix-ex}.

Moreover, by Theorem~\ref{thm:functor}, we know that only the black-hole extension is functorial, and 
a 1-1 correspondence between (not necessarily input-enabled) PA,
$(\Po\Dis)^A$-coalgebras on $\Sets$, and $(\CSS + 1)^A$-coalgebras on
$\EM(\Dis)$ with carriers free algebras, where $- + 1$ is the black-hole
extension functor on $\EM(\Dis)$, is also an instance of~\cite[Lemma 25]{BSS17}.

\section{Illustrative Examples of Type (4) Extensions}
\label{sec:appendix-ex}

In what follows, in particular in Example~\ref{ex:conv-body} and Example~\ref{ex:dis-S-finite}, 
we discuss the possible extensions of $\CSS\Alg D$ when $D$ is a compact convex subset of $\mathbb R^n$ 
(which is not a singleton), and in particular when $D=\Dis S$ when $S$ is a finite set. 
The proofs of the stated facts depend heavily on arguments from convex geometry and topology and are deferred 
to Section~\ref{sec:appendix-examples}. 
Here we only recall the necessary terminology and a few facts 
taken from the textbooks \cite{rockafellar:1970} and \cite{schneider:1993}. 

\begin{defi}\label{def:imported-notions-topology}
	Let $A\in\CSS\mathbb R^n$. The \emph{relative interior} $\relint(A)$ of $A$ is the interior of $A$ 
	as a subset of $\AfH(A)$, where $\AfH(A)$ is endowed with the subspace topology inherited from $\mathbb R^n$. 
	The set $A$ is \emph{relatively open}, if $A=\relint(A)$
	(cf.\ \cite[p.44]{rockafellar:1970}).
\qedd
\end{defi}

\begin{defi}\label{def:imported-notions-geometry}
	\hspace*{0pt}
	\begin{enumerate}
	\item A \emph{convex body} is a compact, convex, and nonempty subset of an euclidean space $\mathbb R^n$. 
		The set of all convex bodies in $\mathbb R^n$ is denoted as $\mathcal K^n$
		(cf.\ \cite[p.8]{schneider:1993}).
	\item For given $D\in\mathcal K^n$, by $\mathcal K(D)$ we denote the set $\mathcal K(D)=\{A\in\mathcal K^n\mid A\subseteq D\}$
		(cf.\ \cite[p.157]{schneider:1993}).
	\item Two convex bodies $A,B\in\mathbb R^n$ are \emph{homothetic}, if either one of them is a singleton or 
		there exist $s>0,x\in\mathbb R^n$ with $B=sA+x$
		(\cite[p.xii]{schneider:1993}).
	\item A convex body $A\in\mathcal K^n$ is \emph{indecomposable}, if a representation $A=B+C$ is only 
		possible with $B$ and $C$ both homothetic to $A$
		(cf.\ \cite[p.150]{schneider:1993}).
\qedd
	\end{enumerate}
\end{defi}

\begin{rem}\label{rem:imported-facts-geometry}
	\hspace*{0pt}
	\begin{enumerate}
	\item When endowed with the Minkowski sum and (pointwise declared) multiplication with positive 
		scalars, $\mathcal K^n$ is a convex cone, in particular a convex algebra. 
		These operations are continuous w.r.t.\ the Hausdorff metric. 
		See \cite[p.42,\,p.126]{schneider:1993}. Recall in this place that the Hausdorff metric $d_H$ is defined 
		as 
		\[
			d_H(A,B)=\max\big\{
			\sup_{a\in A}\inf_{b\in B}d(a,b),\sup_{b\in B}\inf_{a\in A}d(a,b)
			\big\}
		\]
		where $d$ denotes the euclidean metric of $\mathbb R^n$. 
	\item The set $\mathcal K(D)$ carries a subalgebra of $\CSS\Alg D$. It is compact w.r.t. the Hausdorff metric. 
		See \cite[Theorem~1.8.6]{schneider:1993}.
	\item On the set of all convex bodies having more than one point, homothety induces an equivalence relation. 
	\end{enumerate}
\end{rem}

\begin{exa}\label{ex:conv-body}
	Let $n\geq 1$, let $D$ be a compact convex subset of $\mathbb R^n$ with more than one point, and consider 
	$\Alg X =\CSS\Alg D$. 
	The first observation is that $\ViH(D)=D$ since $D$ is closed. Hence, the extensions $\Alg X_*$ 
	described in $(2)$ of Theorem~\ref{thm:Pc-extensions} are in one-to-one correspondence with $X$ itself. 
	The extensions described in $(3)$ correspond to the extremal points $\Ext D$ of $D$. 
	Since $D=\CoH(\Ext D)$ by Minkowski's theorem, cf.\ \cite[Corollary~1.4.5]{schneider:1993}, 
	this set is certainly not empty. However, it is also not too large, e.g., $\relint(D)$ 
	does not contain any extremal points and is a dense subset of $D$, cf.\ \cite[Theorem~1.1.14]{schneider:1993}.

	Case $(4)$ of the theorem is the most intriguing. To explain that the set of extensions 
	occurring in this way has a complicated structure, we consider two particular situations. 
	Namely, that the element $C$ is either closed or relatively open. 
	\begin{enumerate}
	\item Let $C\in\mathcal K(D)$ with more than one point. Then (by Lemma~\ref{lem:conv-body-2} and
		Lemma~\ref{lem:conv-body-3} in Section~\ref{sec:appendix-examples})
		\[
			\{A\in X\mid A\adh C\}=\{C\},\quad 
			I=\CoH\big(X\setminus\{C\}\big)=
			\begin{cases}
				X\setminus\{C\} &\hspace*{-3mm},\quad C\in\Ext\mathcal K(D),
				\\
				X &\hspace*{-3mm},\quad \text{otherwise}.
			\end{cases}
		\]
		Hence, $C$ is eligible for the construction in $(4)$ if and only if $C\in\Ext\mathcal K(D)$. 
		If $C$ is an extremal point there is a unique choice for the prime ideal $P$, namely $P=X\setminus\{C\}$. 
	\item Let $C\in\mathcal K(D)$ be relatively open with more than one point. 
		Then (by Lemma~\ref{lem:conv-body-1} in Section~\ref{sec:appendix-examples})
		\[
			\{A\in X\mid A\adh C\}=\{A\in X\mid \overline A=\overline C\},\quad 
			I=\CoH(\{A\in X\mid \overline A\neq\overline C\})
			.
		\]
		Assume that $\overline C\in\Ext\mathcal K(D)$. Then $I$ itself is a prime ideal, and we may choose $P=I$ in $(4)$. 
		There are also other possible choices for $P$. For example, $P=X\setminus\{\overline C\}$ is a prime ideal. 

		If $\overline C\not\in\Ext\mathcal K(D)$, we have no general result telling how large $I$ will be. 
		In some concrete low-dimensional examples, we saw that $I\neq X$ may happen and there are many 
		prime ideals admissible for $(4)$. 
	\end{enumerate}
	To summarize, we exhibited two families of different extensions: 
	1.\ A unique extension is associated with each extremal point $C$ of $\mathcal K(D)$. 
	Points $C\in\mathcal K(D)\setminus\Ext\mathcal K(D)$ are not eligible; 
	2.\ Every relatively open subset $C$ with $\overline C\in\Ext\mathcal K(D)$ is eligible and each such set 
	gives rise to several extensions. If $\overline C\not\in\Ext\mathcal K(D)$, examples indicate that $C$ can be 
	eligible and give rise to different extensions. 
\end{exa}

\begin{rem}\label{rem:conv-body-extr}
	Describing $\Ext\mathcal K(D)$ is an open problem in convex geometry. 
	$\Ext\mathcal K(D)$ must be large in the sense that $\CoH\Ext\mathcal K(D)$ is dense in $\mathcal K(D)$
	w.r.t.\ the Hausdorff metric.
	This fact is shown by considering $\mathcal K(D)$ as a compact convex subset of the space of continuous functions on 
	the $n$-dimensional sphere via passing to support functions (e.g.\ \cite[Theorems~1.7.1,1.8.11 and p.150]{schneider:1993}), 
	and applying the Krein-Milman theorem (e.g.\ \cite[Theorem~3.23]{rudin:1991}).

	The only situation we know where an explicit description of $\Ext\mathcal K(D)$ is possible is 
	\cite[Theorem]{grzaslewicz:1984} where $D$ is a strictly convex subset of $\mathbb R^2$ with nonempty interior. 
\end{rem}

When $D$ is a simplex (defined below), a bit more can be said about $\Ext\mathcal K(D)$. 

\begin{exa}\label{ex:dis-S-finite}
	Let $n\geq 2$, let $S$ be an $n$-element set, and consider $\mathcal K(\Dis S)$. 
	The algebra $\Dis S$ is isomorphic to the standard $(n-1)$-simplex 
	\[
		\Delta^{n-1}:=\big\{(t_1,\ldots,t_n)\in\mathbb R^n \mid t_i\geq 0,\sum_{i=1}^n t_i=1\big\}
		.
	\]
	The extremal points of $\Delta^{n-1}$ are its corner points, the canonical basis vectors $e_i$ in $\mathbb R^n$. 
	Hence a singleton $\{\xi\}$ belongs to $\Ext\mathcal K(\Dis S)$ if and only if $\xi$ is a Dirac measure $\delta_x$ concentrated 
	at one of the points $x$ of $S$. The set 
	\begin{equation}\label{eq:set-E}
		\mathcal E:=\{A\in\Ext\mathcal K(\Dis S) \mid A\text{ is not singleton}\}
	\end{equation}
	corresponds bijectively to the homothety classes of indecomposable convex bodies in $\mathbb R^{n-1}$ 
	which are not singletons (see Corollary~\ref{cor:Dis-3}, Lemma~\ref{lem:Dis-4} in Section~\ref{sec:appendix-examples}). 
	Let us point out that this property is particular for the algebra $\mathcal K(\Delta^{n-1})$.
	For example, it does not hold for $\mathcal K(D)$ when $D$ is a square in $\mathbb R^2$, 
	cf.\ \cite[Remark~2]{grzaslewicz:1984}.

	For $n=2$ and $n=3$ the set $\mathcal E$ is known explicitly; the case $n=2$ is trivial and the case $n=3$ is elaborated in
	\cite[Theorem~3.2.11]{schneider:1993}. 
	Concretely, for $n=2$ we have
	\[
		\Ext\mathcal K(\Dis\{1,2\})=
		\big\{\{\delta_1\},\{\delta_2\}\big\}\cup 
		\big\{\CoH\{\delta_1,\delta_2\}\big\} = \big\{\{\delta_1\},\{\delta_2\}, \Dis\{1,2\}\big\}
		.
	\]
	For $n=3$, $\Ext\mathcal K(\Dis\{1,2,3\})$ consists of 
	the singletons $\{\delta_1\},\{\delta_2\},\{\delta_3\}$, 
	the line segments connecting one Dirac measure with any convex combination of the other two, 
	and the triangles having at least one corner point in each of 
	$\CoH\{\delta_2,\delta_3\},\CoH\{\delta_1,\delta_3\},\CoH\{\delta_1,\delta_2\}$. 

	For $n\geq 4$, $\Ext\mathcal K(\Dis S)$ is dense in $\mathcal K(\Dis S)$, cf.\ \cite[Theorem~3.2.14]{schneider:1993}. 
\end{exa}

\begin{rem}\label{rem:dis-S-finite-extr}
	Giving description of indecomposable convex bodies in $\mathbb R^n$ with $n\geq 3$ is an open problem in convex geometry. 
	By Baire's category theorem most convex bodies are indecomposable (being 
a dense set). However, almost no
indecomposable bodies are explicitly known.,
	cf.\ the discussion \cite[p.153]{schneider:1993}. For polytopes, however, there are easy to check conditions for 
	indecomposability. For example, if all $2$-dimensional faces of a polytope are triangles, then it is indecomposable, 
	cf.\ \cite[Corollary~3.2.13]{schneider:1993}. 
\end{rem}

\section{Proofs of Geometric Arguments in Example~\ref{ex:conv-body} and Example~\ref{ex:dis-S-finite}}
\label{sec:appendix-examples}

In this section we provide evidence for the statements made in Example~\ref{ex:conv-body} and Example~\ref{ex:dis-S-finite}.

Example~\ref{ex:conv-body} relies on knowing the sets $\{A\in\CSS\Alg D\mid A\adh C\}$ when $C$ is either 
closed or relatively open, and on the fact that an extremal point of $\mathcal K(D)$ is automatically 
extremal in the larger algebra $\CSS\Alg D$. 
The next three results contain all the details. 

\begin{lem}\label{lem:conv-body-1}
	Let $D\in\mathcal K^n$. 
	\begin{enumerate}
	\item The closure operator $\overline{\rule{0pt}{5pt}.}:\CSS\Alg D\to\mathcal K(D)$ is a homomorphism of convex algebras. 
	\item Let $A,B\in\CSS\Alg D$. If $A\adh B$, then $\overline A=\overline B$. 
	\item Let $C\in\CSS\Alg D$ be relatively open. Then 
		\[
			\{A\in\CSS\Alg D\mid A\adh C\}=\{A\in\CSS\Alg D\mid \overline A=\overline C\}
			.
		\]
	\end{enumerate}
\end{lem}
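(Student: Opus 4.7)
The plan is to prove the three items in order; the first two reduce to standard facts about Minkowski sums, and the third is where the real content lies.

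For $(1)$, first note that the closure map is well-defined into $\mathcal K(D)$: the closure of a nonempty convex subset of the compact set $D$ is convex (closure of convex is convex), nonempty, and compact (closed inside the compact $D$). To show $\overline{pA+\bar pB}=p\overline A+\bar p\overline B$ for $p\in(0,1)$ and $A,B\in\CSS\Alg D$, I verify two inclusions. For $\supseteq$, given $\bar a\in\overline A$, $\bar b\in\overline B$, I approximate by sequences $a_n\in A$, $b_n\in B$ converging to $\bar a$, $\bar b$ and note that $pa_n+\bar pb_n\in pA+\bar pB$ converges to $p\bar a+\bar p\bar b$. For $\subseteq$, I observe that $p\overline A+\bar p\overline B$ is the image of the compact set $\overline A\times\overline B$ under the continuous map $(a,b)\mapsto pa+\bar pb$, hence compact and in particular closed; since it already contains $pA+\bar pB$, it contains its closure.

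For $(2)$, if $A\adh B$ then $pA+\bar pB=B$ for every $p\in(0,1)$, and applying $(1)$ gives $p\overline A+\bar p\overline B=\overline B=p\overline B+\bar p\overline B$, the last equality by idempotence in $\mathcal K^n$. The convex algebra $\mathcal K^n$ (with Minkowski addition) is cancellative by R\aa dstr\"om's cancellation law for compact convex sets, hence adherence in $\mathcal K^n$ is the identity and therefore $\overline A=\overline B$.

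For $(3)$, the inclusion $\subseteq$ is immediate from $(2)$. For the reverse inclusion, assume $C$ is relatively open and $\overline A=\overline C$; the goal is $pA+\bar pC=C$ for every $p\in(0,1)$. Since $C$ is convex and relatively open, $C=\relint C=\relint\overline C$. The inclusion $pA+\bar pC\subseteq C$ follows from the line-segment principle for convex sets: for $a\in A\subseteq\overline A=\overline C$ and $c\in C=\relint\overline C$, the combination $pa+\bar pc$ lies in $\relint\overline C=C$. The reverse inclusion is the delicate step. Given $x\in C$, I exploit density of $A$ in $\overline A=\overline C$ to pick $a\in A$ arbitrarily close to $x$ and then set $c'=\tfrac 1{\bar p}(x-pa)$. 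Since the coefficients $\tfrac 1{\bar p}$ and $-\tfrac p{\bar p}$ sum to $1$, $c'\in\AfH(\overline C)$, and $c'$ can be made arbitrarily close to $x$ by choosing $a$ close to $x$; because $x$ is a relative interior point of $\overline C$, $c'\in C$ once $a$ is close enough to $x$. Then $x=pa+\bar pc'\in pA+\bar pC$. The main obstacle I expect is exactly this step: producing a genuine witness $c'\in C$ rather than merely $c'\in\overline C$ requires both hypotheses, namely the density $A\subseteq\overline C$ and the relative openness of $C$.
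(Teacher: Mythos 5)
Your proof is correct, and it follows the same three-part skeleton as the paper's, but two of the three items are argued by genuinely different means. Item (1) is identical (continuity for one inclusion, compactness of $p\overline A+\bar p\overline B$ for the other). For item (2), the paper avoids any cancellation result: from $p\overline A+\bar p\overline B=\overline B$ for all $p\in(0,1)$ it simply lets $p\to 1$ and uses continuity of the Minkowski operations in the Hausdorff metric to get $\overline A=\lim_{p\to1}(p\overline A+\bar p\overline B)=\overline B$; you instead invoke R{\aa}dstr\"om's cancellation law to make $\mathcal K^n$ cancellative and cancel $\bar p\overline B$. Both are sound; the paper's limit argument is self-contained given Remark~\ref{rem:imported-facts-geometry}.1, while yours imports a classical external lemma but needs no metric limit. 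For item (3), the inclusion $pA+\bar pC\subseteq C$ is the same line-segment principle in both proofs. The reverse inclusion is where you diverge most: the paper derives the stronger structural fact $C=\relint(\overline C)=\relint(\overline A)=\relint(A)\subseteq A$ from \cite[Theorem~6.3]{rockafellar:1970} and then concludes $C=pC+\bar pC\subseteq pA+\bar pC$ in one line, whereas you construct an explicit witness $c'=\frac1{\bar p}(x-pa)$ with $a\in A$ close to $x$, using density of $A$ in $\overline C$ and relative openness of $C$ to push $c'$ into $C$. Your perturbation argument is correct (the estimate $\|c'-x\|=\frac p{\bar p}\|x-a\|$ and the observation $c'\in\AfH(\overline C)$ do exactly what is needed), and it is more elementary in spirit; the paper's route is shorter and yields the reusable byproduct $C\subseteq A$ for every $A$ with $\overline A=\overline C$.
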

\begin{proof}
\hfill
\begin{enumerate}
\item 
	Let $A,B\in\CSS\Alg D$, $p\in(0,1)$, and set $C=pA+\bar pB$. Continuity of linear operations ensures that 
	$p\overline A+\bar p\overline B\subseteq\overline C$. However, $\overline A$ and $\overline B$ are compact, 
	so also $p\overline A+\bar p\overline B$ is compact, hence, in particular, closed. We conclude that 
	$\overline C\subseteq p\overline A+\bar p\overline B$. 
\item
	If $A\adh B$, by 1., also $\overline A\adh\overline B$, i.e., $p\overline A+\bar p\overline B=\overline B$ for 
	all $p\in(0,1)$. This implies $\overline A=\lim_{p\to 1}(p\overline A+\bar p\overline B)=\overline B$, where 
	the limit is understood w.r.t.\ the Hausdorff metric. For the first equality recall that the operations are 
	continuous, cf.\ Remark~\ref{rem:imported-facts-geometry}.1.
\item
	The inclusion ``$\subseteq$'' holds by 2. For the reverse inclusion, let $A\in\CSS\Alg D$ with 
	$\overline A=\overline C$ be given. 
	Then $C=\relint(\overline C)=\relint(\overline A)=\relint A\subseteq A$, cf.\ \cite[Theorem~6.3]{rockafellar:1970}. 
	Thus $pA+\bar pC\supseteq C$. The inclusion $pA + \bar pC\subseteq C$ holds by \cite[Theorem~6.1]{rockafellar:1970}. 
\qedhere
\end{enumerate}
\end{proof}

\begin{lem}\label{lem:conv-body-2}
	Let $C\in\mathcal K^n$. 
	\begin{enumerate}
	\item If $A,B\in\CSS\mathbb R^n$, $p\in(0,1)$, with $A,B\subseteq C$ and $pA+\bar pB=C$, then $A=B=C$.
	\item ${\displaystyle \{A\in\CSS\mathbb R^n\mid A\adh C\}=\{C\}}$. 
	\end{enumerate}
\end{lem}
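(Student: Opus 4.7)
My plan is to prove the two parts of Lemma~\ref{lem:conv-body-2} in sequence, using Minkowski's theorem for part (1) and then reducing part (2) to part (1) via Lemma~\ref{lem:conv-body-1}.2.

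For part (1), the key observation is that every extremal point of $C$ must lie in both $A$ and $B$. Given $e\in\Ext C$, the hypothesis $pA+\bar pB=C$ produces $a\in A$, $b\in B$ with $pa+\bar pb=e$; since $a,b\in C$ and $e$ is an extremal point of $C$, extremality forces $a=b=e$. Hence $\Ext C\subseteq A\cap B$. Since $A$ and $B$ are convex, this gives $\CoH(\Ext C)\subseteq A\cap B$, and by Minkowski's theorem \cite[Corollary~1.4.5]{schneider:1993} applied to the convex body $C$ we have $C=\CoH(\Ext C)$. Combined with the standing assumption $A,B\subseteq C$, we conclude $A=B=C$.

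For part (2), the inclusion ``$\supseteq$'' is immediate from reflexivity of $\adh$ (Lemma~\ref{lem:adh-props}.2). For the reverse, suppose $A\in\CSS\mathbb R^n$ satisfies $A\adh C$. By Lemma~\ref{lem:conv-body-1}.2 we have $\overline A=\overline C=C$, where the last equality uses compactness of $C$. In particular $A\subseteq\overline A=C$. Now fix any $p\in(0,1)$; the assumption $A\adh C$ gives $pA+\bar pC=C$, and since $A,C\subseteq C$ we can apply part (1) (with $B=C$) to conclude $A=C$.

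The only step that requires a little care is justifying $\Ext C\subseteq A$ and $\Ext C\subseteq B$ in part (1): this uses extremality of $e$ inside $C$, which applies because $a,b$ lie in $C$ (they lie in $A,B\subseteq C$ by hypothesis). Everything else is routine once Minkowski's theorem and the closure homomorphism from Lemma~\ref{lem:conv-body-1} are in hand. I do not anticipate any real obstacle; the main subtlety is simply remembering that $A$ need not be closed, so one must first use Lemma~\ref{lem:conv-body-1}.2 to place $A$ inside $C$ before invoking part (1).
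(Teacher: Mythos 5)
Part (1) of your proposal is exactly the paper's argument: extremal points of $C$ are forced into $A\cap B$, and Minkowski's theorem plus convexity of $A\cap B$ yields $C=\CoH(\Ext C)\subseteq A\cap B$. No issues there.

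In part (2), however, the step ``By Lemma~\ref{lem:conv-body-1}.2 we have $\overline A=\overline C$'' is not justified as written. Lemma~\ref{lem:conv-body-1} is stated (and proved) only for $A,B\in\CSS\Alg D$ with $D\in\mathcal K^n$ a convex body; its proof genuinely uses compactness of $\overline A$ and $\overline B$ (to get that $p\overline A+\bar p\overline B$ is closed, and to pass to the Hausdorff-metric limit). Here $A\in\CSS\mathbb R^n$ is an arbitrary nonempty convex set, a priori possibly unbounded, so it need not lie in $\CSS\Alg D$ for any convex body $D$ and the lemma does not apply. The gap is fixable: from $A\adh C$ with, say, $p=\tfrac12$ one gets $\tfrac12 a+\tfrac12 c\in C$ for any fixed $c\in C$, hence $A\subseteq 2C-c$ is bounded, and one may then place $A$ and $C$ inside a common convex body and invoke Lemma~\ref{lem:conv-body-1}.2. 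But at that point it is simpler to argue as the paper does: for $a\in A$ and $c\in C$ the points $pa+\bar pc$ lie in $C$ for all $p\in(0,1)$, and letting $p\to1$ and using closedness of $C$ gives $a\in C$ directly, i.e.\ $A\subseteq C$; then your reduction to part (1) with $B=C$ goes through verbatim. So the architecture of your proof is right, but you must first establish that $A$ is bounded (or prove $A\subseteq C$ directly) before any appeal to Lemma~\ref{lem:conv-body-1}.
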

\begin{proof}
\hfill
\begin{enumerate}
\item
	Let $x\in\Ext C$, and choose $a\in A$, $b\in B$, with $x=pa+\bar pb$. Since $a,b\in C$, it 
	follows that $a=b=x$. We see that $\Ext C\subseteq A\cap B$. Using Minkowski's theorem, 
	cf.\ \cite[Corollary~1.4.5]{schneider:1993}, yields 
	$C=\CoH(\Ext C)\subseteq A\cap B$, from which $A=B=C$ follows. 
\item
	Let $a\in A$ and choose $c\in C$. Then, for each $p\in(0,1)$, the element $pa+\bar pc$ belongs to $C$. 
	Since $C$ is closed, 
	\[
		a=\lim_{p\to 1}(pa+\bar pc)\in C
		.
	\]
	Thus $A\subseteq C$, and 1.\ shows that $A=C$. 
\qedhere
\end{enumerate}
\end{proof}

\begin{lem}\label{lem:conv-body-3}
	Let $D\in\mathcal K^n$ and $C\in\mathcal K(D)$. 
	Then $C$ is an extremal point of $\mathcal K(D)$ if and only if it is an extremal point of $\CSS\Alg D$. 
\end{lem}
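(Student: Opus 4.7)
The plan is a short argument combining the two preceding lemmata, with one implication being immediate and the other requiring a closure-and-squeeze argument.

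For the direction ($\Leftarrow$), I would simply observe that $\mathcal K(D)\subseteq\CSS\Alg D$: if $C$ is extremal in the larger algebra, then any decomposition $C=pA+\bar p B$ with $A,B\in\mathcal K(D)$ already forces $A=B=C$, so $C$ is a fortiori extremal in $\mathcal K(D)$.

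For the nontrivial direction ($\Rightarrow$), assume $C\in\mathcal K(D)$ is extremal in $\mathcal K(D)$ and suppose $pA+\bar pB=C$ for some $A,B\in\CSS\Alg D$ and $p\in(0,1)$. First I would take topological closures on both sides and apply Lemma~\ref{lem:conv-body-1}.1, which says that $\overline{\rule{0pt}{5pt}.}$ is a convex algebra homomorphism. This yields $p\overline A+\bar p\overline B=\overline C=C$. Since $A,B\subseteq D$ and $D$ is compact, $\overline A,\overline B\in\mathcal K(D)$, so extremality of $C$ in $\mathcal K(D)$ gives $\overline A=\overline B=C$. In particular $A\subseteq\overline A=C$ and $B\subseteq\overline B=C$. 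Now we are in the situation of Lemma~\ref{lem:conv-body-2}.1, which directly delivers $A=B=C$.

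There is essentially no obstacle here: both technical ingredients (closure is a convex homomorphism, and a decomposition $pA+\bar pB=C$ with $A,B\subseteq C$ forces equality when $C$ is a convex body) have already been proven in Lemma~\ref{lem:conv-body-1} and Lemma~\ref{lem:conv-body-2}. The one minor point to make sure about is that $\overline A$ and $\overline B$ really lie in $\mathcal K(D)$, which is automatic from compactness of $D$ and from $\overline A, \overline B$ being closed convex nonempty subsets of $D$.
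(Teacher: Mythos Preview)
Your proof is correct and matches the paper's argument essentially line for line: the easy direction from the subalgebra inclusion, then closures via Lemma~\ref{lem:conv-body-1}.1 to reduce to $\mathcal K(D)$, extremality there to get $\overline A=\overline B=C$, and finally Lemma~\ref{lem:conv-body-2}.1 to conclude $A=B=C$. The only difference is cosmetic: you spell out why $\overline A,\overline B\in\mathcal K(D)$, which the paper leaves implicit.
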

\begin{proof}
	Since $\mathcal K(D)$ is a subalgebra of $\CSS\Alg D$, we have $\Ext(\CSS\Alg D)\cap\mathcal K(D)\subseteq\Ext\mathcal K(D)$. 
	Assume $C\in\Ext\mathcal K(D)$ and that $C=pA+\bar pB$ with some $A,B\in\CSS\Alg D$ and $p\in(0,1)$. 
	Then $p\overline A+\bar p\overline B=\overline C=C$, and we obtain $\overline A=\overline B=C$. In particular, 
	$A,B\subseteq C$, and Lemma~\ref{lem:conv-body-2}.1 shows $A=B=C$. 
\end{proof}

We turn to the proof of the statement made in Example~\ref{ex:dis-S-finite}, that the set $\mathcal E$ from (\ref{eq:set-E}) 
corresponds to (classes of) indecomposable bodies. 

First, let us introduce some notation.
Let $\pi_i\colon\mathbb R^n\to\mathbb R$ denote the $i$-th projection, and set $\sigma=\sum_{i=1}^n\pi_i$. 
Denote by $e_i$ the $i$-th canonical basis vector of $\mathbb R^n$. 
Then the standard $n$-simplex can be written as 
\[
	\Delta^n=\CoH\{e_1,\ldots,e_n\}=\sigma^{-1}(\{1\})\cap\bigcap_{i=1}^n\pi_i^{-1}([0,1])
	.
\]
A central role is played by the subset of $\mathcal K(\Delta^n)$ of all bodies which touch each face of $\Delta^n$.

\begin{defi}\label{def:touching}
	Set $\mathcal L(\Delta^n)=\{A\in\mathcal K(\Delta^n)\mid \min\pi_i(A)=0,i=1,\ldots,n\}$.
\qedd
\end{defi}

\begin{lem}\label{lem:Dis-1}
	The set $\mathcal L(\Delta^n)$ contains no singletons. 
	It is a convex and extreme subset of $\mathcal K(\Delta^n)$. 
\end{lem}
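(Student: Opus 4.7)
The plan is to verify each of the three assertions separately, relying only on linearity of the projections $\pi_i$ and positivity on $\Delta^n$.

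First, for the absence of singletons, I would suppose $A=\{x\}\in\mathcal L(\Delta^n)$. Then the condition $\min\pi_i(A)=0$ forces $\pi_i(x)=0$ for every $i=1,\ldots,n$, and hence $\sigma(x)=0$. This contradicts $x\in\Delta^n$, which requires $\sigma(x)=1$.

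For convexity, I would take $A,B\in\mathcal L(\Delta^n)$ and $p\in(0,1)$, and set $C=pA+\bar pB$. Since $\Delta^n$ is convex and closed, $C\in\mathcal K(\Delta^n)$. Linearity of $\pi_i$ gives $\pi_i(C)=p\pi_i(A)+\bar p\pi_i(B)$, and since both summands are sets of nonnegative reals with minimum $0$, one has
\[
	\min\pi_i(C)=p\min\pi_i(A)+\bar p\min\pi_i(B)=0,
\]
so $C\in\mathcal L(\Delta^n)$.

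For the extremality assertion, I would take $A,B\in\mathcal K(\Delta^n)$ and $p\in(0,1)$ with $pA+\bar pB\in\mathcal L(\Delta^n)$, and argue essentially the same way in reverse. Fix $i$. Then $\pi_i(A)\subseteq[0,1]$ and $\pi_i(B)\subseteq[0,1]$, so $\min\pi_i(A)\geq 0$ and $\min\pi_i(B)\geq 0$. By linearity,
\[
	0=\min\pi_i(pA+\bar pB)=p\min\pi_i(A)+\bar p\min\pi_i(B),
\]
and positivity of $p,\bar p$ forces $\min\pi_i(A)=\min\pi_i(B)=0$. As this holds for every $i$, both $A$ and $B$ lie in $\mathcal L(\Delta^n)$.

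No step here is really an obstacle; the whole proof is just the observation that the map $A\mapsto(\min\pi_1(A),\ldots,\min\pi_n(A))$ is affine on $\mathcal K(\Delta^n)$ (since the $\pi_i$ are linear and nonnegative on $\Delta^n$), so the preimage of $\{0\}$ under each coordinate is a convex extreme subset, and $\mathcal L(\Delta^n)$ is the intersection of these preimages.
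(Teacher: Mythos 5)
Your proof is correct and follows essentially the same route as the paper: both arguments rest on the identity $\min\pi_i(pA+\bar pB)=p\min\pi_i(A)+\bar p\min\pi_i(B)$, with the only cosmetic difference being that the paper proves extremality via the contrapositive (if $\min\pi_j(A)>0$ then $\min\pi_j(pA+\bar pB)>0$) while you argue the implication directly using nonnegativity of the minima. Your closing observation that $\mathcal L(\Delta^n)$ is an intersection of preimages of $\{0\}$ under affine coordinate maps is a tidy summary of the same computation.
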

\begin{proof}
	Assume $\{x\}\in\mathcal L(\Delta^n)$. Since $\min\pi_i(\{x\})=\pi_i(x)$, it follows that $x=0$ which contradicts 
	$\sigma(x)=\max\sigma(\{x\})=1$. 

	It holds that $\min\pi_i(pA+\bar pB)=p\min\pi_i(A)+\bar p\min\pi_i(B)$. The fact that $\mathcal L(\Delta^n)$ is 
	convex follows immediately. 
	To show that $\mathcal L(\Delta^n)$ is extreme, let $A\in\mathcal K(\Delta^n)\setminus\mathcal L(\Delta^n)$. 
	Choose $j\in\{1,\ldots,n\}$ with $\min\pi_j(A)>0$. Then, for each $p\in(0,1)$ and $B\in\mathcal K(\Delta^n)$ we have 
	$\min\pi_i(pA+\bar pB)\geq p\min\pi_i(A)>0$, and conclude that $pA+\bar pB\not\in\mathcal L(\Delta^n)$. 
\end{proof}

Next we prove that each extremal point of $\mathcal K(\Delta^n)$ which is not a singleton 
must touch each face of $\Delta^n$. 
This property fits our intuition, but surprisingly it only holds for the algebra $\mathcal K(\Delta^n)$. 
For example, it does not hold in $\mathcal K(D)$ for $D$ a square in $\mathbb R^2$, cf.\ \cite[Remark~2]{grzaslewicz:1984}. 

\begin{lem}\label{lem:Dis-2}
	We have $\big\{A\in\Ext\mathcal K(\Delta^n)\mid A\text{ is not singleton}\big\}\subseteq\mathcal L(\Delta^n)$. 
\end{lem}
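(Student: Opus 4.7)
The plan is to prove the contrapositive: if $A\in\mathcal K(\Delta^n)$ is not a singleton and $A\notin\mathcal L(\Delta^n)$, then $A$ is not an extremal point of $\mathcal K(\Delta^n)$. The key insight is that the slack between $A$ and the face $\pi_j=0$ (which exists when $A\notin\mathcal L(\Delta^n)$) allows one to ``peel off'' a nontrivial contribution of the corresponding vertex $\{e_j\}$ from $A$.

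Concretely, choose $j$ with $\mu:=\min\pi_j(A)>0$. First I would observe that $\mu<1$: otherwise every $a\in A$ would satisfy $a_j=1$ and hence $a=e_j$, contradicting that $A$ is not a singleton. Then define
\[
	A':=\tfrac{1}{1-\mu}(A-\mu e_j).
\]
The main verification is that $A'\in\mathcal K(\Delta^n)$. Compactness, convexity, and non-emptiness are preserved by scaling and translating. For containment in $\Delta^n$, note that for $a\in A$ the point $(a-\mu e_j)/(1-\mu)$ has coordinate sum $(1-\mu)/(1-\mu)=1$, and its $i$-th coordinate is non-negative because $a_i\geq 0$ for $i\neq j$ and $a_j\geq\mu$ by definition of $\mu$. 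Moreover $\min\pi_j(A')=0\neq 1=\pi_j(e_j)$, so $A'\neq\{e_j\}$.

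Finally, one computes directly
\[
	\mu\{e_j\}+(1-\mu)A'=\{\mu e_j+(a-\mu e_j)\mid a\in A\}=A,
\]
which exhibits $A$ as a convex combination (with $p=\mu\in(0,1)$) of two distinct elements $\{e_j\},A'\in\mathcal K(\Delta^n)$. Hence $A$ is not extremal, contradicting the assumption. I do not anticipate any real obstacle here: the argument is essentially a two-line computation, and the only subtlety is the (easy) check that the normalisation $1/(1-\mu)$ is well defined, which is exactly the place where the non-singleton hypothesis is used.
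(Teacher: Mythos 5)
Your argument is correct. It shares with the paper's proof the key object: your $A'=\tfrac{1}{1-\mu}(A-\mu e_j)$ is exactly the set $f^{-1}(A)$ that the paper constructs from the map $f^{-1}\colon x\mapsto\tfrac{1}{1-\epsilon}x-\tfrac{\epsilon}{1-\epsilon}e_j$, and the verification that it lies in $\Delta^n$ is the same coordinate computation. Where you genuinely differ is in the decomposition exhibiting non-extremality: the paper writes $A$ as a convex combination of \emph{two rescaled copies of $A$}, namely $A=p\,f(A)+\bar p\,f^{-1}(A)$ with $p=\tfrac{1}{2-\epsilon}$, and then must argue $f^{-1}(A)\neq A$ via $\min\pi_j(f^{-1}(A))=0\neq\epsilon$; you instead peel off the vertex directly, $A=\mu\{e_j\}+(1-\mu)A'$, so that one of the two summands is a singleton and the fact that not both summands equal $A$ is immediate from $A$ not being a singleton. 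Your route is slightly more economical (no need to check $f(\Delta^n)\subseteq\Delta^n$ or to compute the mixing coefficient $\tfrac{1}{2-\epsilon}$), while the paper's symmetric two-copies decomposition is more in the spirit of the Minkowski-sum/indecomposability theme it develops afterwards. One cosmetic remark: for non-extremality in the sense of Definition~\ref{def:ideals} what you need is that not both summands equal $A$; your observation $A'\neq\{e_j\}$ also delivers this (distinct summands cannot both equal $A$), but the cleaner justification is simply $\{e_j\}\neq A$.
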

\begin{proof}
	Let $A\in\mathcal K(\Delta^n)\setminus\mathcal L(\Delta^n)$ have more than one point. 
	Our aim is to show that $A$ can be written as a convex combination of two elements of $\mathcal K(\Delta^n)$ not both 
	equal to $A$.
	Choose $j\in\{1,\ldots,n\}$ with $\epsilon=\min\pi_j(A)>0$. Since $A$ is not a singleton, we have $\epsilon<1$. 
	Consider the convex map $f\colon x\mapsto(1-\epsilon)x+\epsilon e_j$ and its inverse 
	$f^{-1}\colon x\mapsto\frac 1{1-\epsilon}x-\frac\epsilon{1-\epsilon}e_j$. 
	Obviously, $f(\Delta^n)\subseteq\Delta^n$. We prove that $f^{-1}(A)\subseteq\Delta^n$. 
	Let $x\in A$, then 
	\begin{align*}
		& \sigma(f^{-1}(x))=\frac 1{1-\epsilon}-\frac\epsilon{1-\epsilon}=1,
		\\
		& \pi_i(f^{-1}(x))=\frac 1{1-\epsilon}\pi_i(x)\geq 0,\ i\neq j,\quad 
		\pi_j(f^{-1}(x))=\frac 1{1-\epsilon}\big(\pi_i(x)-\epsilon\big)\geq 0
		.
	\end{align*}
	Thus $f^{-1}(x)\in\Delta^n$. 

	Set $p=\frac 1{2-\epsilon}$. Then $p\in(0,1)$, $\bar p=\frac{1-\epsilon}{2-\epsilon}$, and 
	\[
		pf(x)+\bar pf^{-1}(y)=\Big(\frac{1-\epsilon}{2-\epsilon}x+\frac\epsilon{2-\epsilon}e_j\Big)+
		\Big(\frac 1{2-\epsilon}x-\frac\epsilon{2-\epsilon}e_j\Big)=px+\bar py
		.
	\]
	It follows that $pf(A)+\bar pf^{-1}(A)=A$. Since $\min\pi_j(f^{-1}(A))=0$, we have $f^{-1}(A)\neq A$.
\end{proof}

\begin{cor}\label{cor:Dis-3}
	We have $\Ext\mathcal K(\Delta^n)=\{\{e_1\},\ldots,\{e_n\}\}\,\cup\,\Ext\mathcal L(\Delta^n)$. The union is disjoint.
\end{cor}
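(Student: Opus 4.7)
The plan is to decompose the set of extremal points according to whether they are singletons or not and handle each case separately, using the previous two lemmas.

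First I would dispose of the singleton case. A singleton $\{x\}\in\mathcal K(\Delta^n)$ is an extremal point of $\mathcal K(\Delta^n)$ if and only if $x$ is an extremal point of $\Delta^n$, which is one of the standard basis vectors $e_1,\ldots,e_n$. This is immediate: if $\{x\}=pA+\bar pB$ with $A,B\in\mathcal K(\Delta^n)$ and $p\in(0,1)$, then for all $a\in A,b\in B$ we have $pa+\bar pb=x$, which forces $a=b=x$ (hence $A=B=\{x\}$) precisely when $x$ is a corner of $\Delta^n$; conversely, if $x$ is not extremal in $\Delta^n$, write $x=\frac12 y+\frac12 z$ with $y\neq z$ in $\Delta^n$ and obtain a non-trivial decomposition $\{x\}=\frac12\{y\}+\frac12\{z\}$. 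Thus the singleton part of $\Ext\mathcal K(\Delta^n)$ is exactly $\{\{e_1\},\ldots,\{e_n\}\}$.

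Next I would handle the non-singleton part and show it equals $\Ext\mathcal L(\Delta^n)$. Lemma~\ref{lem:Dis-2} already gives the inclusion of non-singleton elements of $\Ext\mathcal K(\Delta^n)$ into $\mathcal L(\Delta^n)$, and such elements are automatically extremal in the smaller set $\mathcal L(\Delta^n)$. For the reverse inclusion, let $A\in\Ext\mathcal L(\Delta^n)$ and suppose $A=pB+\bar pC$ with $B,C\in\mathcal K(\Delta^n)$, $p\in(0,1)$. By Lemma~\ref{lem:Dis-1}, $\mathcal L(\Delta^n)$ is an extreme subset of $\mathcal K(\Delta^n)$, so $B,C\in\mathcal L(\Delta^n)$. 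Extremality of $A$ in $\mathcal L(\Delta^n)$ then yields $B=C=A$, proving $A\in\Ext\mathcal K(\Delta^n)$. By Lemma~\ref{lem:Dis-1}, such $A$ is never a singleton, which simultaneously gives the disjointness of the union.

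No substantial obstacle is expected; the corollary really is a direct bookkeeping consequence of the two preceding lemmas together with the simple observation about singleton extremal points.
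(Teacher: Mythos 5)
Your proof is correct and follows essentially the same route as the paper's: the singleton extremal points are the corner Diracs, the non-singleton ones lie in $\mathcal L(\Delta^n)$ by Lemma~\ref{lem:Dis-2}, and extremality transfers in both directions between $\mathcal K(\Delta^n)$ and $\mathcal L(\Delta^n)$ because $\mathcal L(\Delta^n)$ is an extremal subset (Lemma~\ref{lem:Dis-1}), which also yields disjointness since $\mathcal L(\Delta^n)$ has no singletons. The only difference is that you spell out the easy characterisation of singleton extremal points, which the paper takes as already established.
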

\begin{proof}
	We already determined the extremal points of $\mathcal K(\Delta^n)$ which are singletons. 
	Clearly, $\mathcal L(\Delta^n)\cap\Ext\mathcal K(\Delta^n)\subseteq\Ext\mathcal L(\Delta^n)$. 
	Since $\mathcal L(\Delta^n)$ is an extremal set, $\Ext\mathcal L(\Delta^n)\subseteq\Ext\mathcal K(\Delta^n)$. 
	If $A\in\Ext\mathcal K(\Delta^n)$ is not a singleton, the above lemma says $A\in\mathcal L(\Delta^n)$. 
\end{proof}

To make the connection with indecomposable bodies, we pass to an isomorphic situation. 
Slightly overloading notation, let again $\pi_i:\mathbb R^{n-1}\to\mathbb R$ 
be the $i$-th projection, and $\sigma$ and $e_i$ be defined as above (with $n-1$ instead of $n$). 
Set 
\[
	D^{n-1}=\big\{(t_1,\ldots,t_n)\in\mathbb R^{n-1} \mid t_i\geq 0,\sum_{i=1}^{n-1} t_i\leq 1\big\}
	.
\]
Then $D^{n-1}=\CoH\{0,e_1,\ldots,e_{n-1}\}=\sigma^{-1}([0,1])\cap\bigcap_{i=1}^{n-1}\pi_i^{-1}([0,1])$. The standard simplex 
$\Delta^n$ is isomorphic to $D^{n-1}$ via the convex map $\phi$ taking the basis vector $e_i$ of $\mathbb R^n$ to the 
corresponding basis vector in $\mathbb R^{n-1}$ if $i<n$ and mapping $e_n$ to $0$. 
This map lifts in the natural (pointwise) way to the isomorphism $\Phi\colon A\mapsto\phi(A)$ of the convex algebra 
$\mathcal K(\Delta^n)$ onto $\mathcal K(D^{n-1})$.
The image of $\mathcal L(\Delta^n)$ under this isomorphism is 
\[
	\Phi\big(\mathcal L(\Delta^n)\big)=
	\big\{A\in\mathcal K(D^{n-1})\mid \max\sigma(A)=1,\ \min\pi_i(A)=0,i=1,\ldots,n-1\big\}
	,
\]
which we shall denote as $\mathcal L(D^{n-1})$. 
Being an isomorphism, $\Phi$ maps extremal points to extremal points, and we obtain 
\[
	\Phi\big(\Ext\mathcal L(\Delta^n)\big)=\Ext\mathcal L(D^{n-1})
	.
\]
Further note that also $\mathcal L(D^{n-1})$ contains no singletons. 

\begin{lem}\label{lem:Dis-4}
	\hspace*{0pt}
	\begin{enumerate}
	\item The set $\mathcal L(D^{n-1})$ is a complete system of representatives modulo homothety 
		of the set of convex bodies in $\mathbb R^{n-1}$ having more than one point. 
	\item Let $A\in\mathcal L(D^{n-1})$. Then $A\in\Ext\mathcal L(D^{n-1})$ if and only if 
		$A$ is indecomposable in $\mathcal K^{n-1}$. 
	\end{enumerate}
\end{lem}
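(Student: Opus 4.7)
For part 1, my plan is to construct, for each convex body $K\subseteq\mathbb R^{n-1}$ with more than one point, its unique homothetic representative inside $\mathcal L(D^{n-1})$. First I would translate $K$ by $-\sum_{i=1}^{n-1}\min\pi_i(K)\cdot e_i$ so that the resulting body $K'$ satisfies $\min\pi_i(K')=0$ for each $i$. Since $K'$ lies in $[0,\infty)^{n-1}$ and is not a singleton, we have $\max\sigma(K')>0$, so rescaling by $1/\max\sigma(K')$ produces an element of $\mathcal L(D^{n-1})$ homothetic to $K$. For uniqueness, if $A,A'\in\mathcal L(D^{n-1})$ satisfy $A'=sA+x$, the constraint $\min\pi_i(A')=0$ applied to $s\min\pi_i(A)+\pi_i(x)$ forces $x=0$, and then $\max\sigma(A')=1=s\max\sigma(A)$ gives $s=1$.

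For part 2, the plan is to exploit the natural correspondence between Minkowski decompositions $A=B+C$ in $\mathcal K^{n-1}$ and convex combinations $A=pX+\bar p Y$ in $\mathcal L(D^{n-1})$. For the direction ``extremal $\Rightarrow$ indecomposable'', suppose $A=B+C$ in $\mathcal K^{n-1}$. The singleton cases for $B$ or $C$ are immediate. Otherwise, using that Minkowski sum is invariant under translations of the summands whose displacements cancel, I would translate $B$ and $C$ so that both lie in $[0,\infty)^{n-1}$ with $\min\pi_i(B)=\min\pi_i(C)=0$; this is consistent because $\min\pi_i(A)=0$. Then $\max\sigma(B)+\max\sigma(C)=\max\sigma(A)=1$, and both summands are nonzero so $p:=\max\sigma(B)$ lies in $(0,1)$. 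The rescaled bodies $\frac{1}{p}B$ and $\frac{1}{\bar p}C$ belong to $\mathcal L(D^{n-1})$, and their $p$-convex combination equals $B+C=A$. Extremality of $A$ forces both to equal $A$, so $B=pA$ and $C=\bar p A$ after translation, meaning $B$ and $C$ are homothetic to $A$.

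For the reverse direction, suppose $A=pB+\bar p C$ with $B,C\in\mathcal L(D^{n-1})$ and $p\in(0,1)$. This is the Minkowski decomposition $A=(pB)+(\bar p C)$, and neither $pB$ nor $\bar p C$ is a singleton since $B,C$ are non-singletons by Lemma~\ref{lem:Dis-1}. Indecomposability of $A$ yields that $pB$ is homothetic to $A$, hence so is $B$; part 1 then gives $B=A$, and analogously $C=A$, so $A\in\Ext\mathcal L(D^{n-1})$.

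The main technical obstacle I expect is bookkeeping of translations in the forward half of part 2: the representation $A=B+C$ is only natural modulo cancelling translations of $B$ and $C$, so one must carefully normalize to reach $\mathcal L(D^{n-1})$ and translate back at the end to recover homothety of the \emph{original} $B,C$. All remaining manipulations reduce to the elementary identities $\min\pi_i(B+C)=\min\pi_i(B)+\min\pi_i(C)$ and $\max\sigma(B+C)=\max\sigma(B)+\max\sigma(C)$, which hold by compactness of $B,C$ and linearity of $\pi_i,\sigma$.
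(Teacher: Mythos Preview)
Your proposal is correct and follows essentially the same route as the paper: the normalization map you describe (translate so all $\min\pi_i$ vanish, then rescale so $\max\sigma=1$) is precisely the map the paper calls $\Psi$, and your two directions in part~2 are the contrapositives of the paper's, using the same key identity $\Psi(B+C)=p\,\Psi(B)+\bar p\,\Psi(C)$ with $p=\max\sigma(B')/(\max\sigma(B')+\max\sigma(C'))$ after normalization. The only cosmetic difference is that you treat the singleton case of a summand explicitly, whereas the paper observes implicitly that a decomposition $A=B+C$ with not both summands homothetic to $A$ can have no singleton summand.
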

\begin{proof}
\hfill
\begin{enumerate}
\item
	Let $A\in\mathcal K^{n-1}$ with more than one point, and set $t_i=\min\pi_i(A)$, $x=\sum_{i=1}^{n-1}t_ie_i$. 
	Since $A$ is not a singleton, $\max\sigma(A)>\sum_{i=1}^{n-1}t_i$. The body
	\[
		\Psi(A)=\big(\max\sigma(A)-\sum_{i=1}^{n-1}t_i\big)^{-1}\big(A-x\big)
	\]
	is homothetic to $A$ and belongs to $\mathcal L(D^{n-1})$. 

	Now assume that $s>0$, $x\in\mathbb R^{n-1}$, and that both $A$ and $sA+x$ belong to $\mathcal L(D^{n-1})$. 
	Then 
	\[
		0=\min\pi_i(sA+x)=s[\min\pi_i(A)]+\pi_i(x)=\pi_i(x)
		,
	\]
	whence $x=0$. Now $1=\max\sigma(sA)=s\max\sigma(A)=s$. 
\item
	Assume $A\in\mathcal L(D^{n-1})\setminus\Ext\mathcal L(D^{n-1})$, then $A$ has a representation 
	$A=pB+\bar pC$ with some $p\in(0,1)$ and $B,C\in\mathcal L(D^{n-1})$ where not both of $B,C$ equal $A$. 
	Then $pB$ and $\bar pC$ are not both homothetic to $A$ by 1., 
	and we conclude that $A$ is decomposable in $\mathcal K^{n-1}$. 

	To show the converse, we first establish the following: 
	If $B,C\in\mathcal K^{n-1}$ are not singletons, then there exists $p\in(0,1)$ with 
	\[
		\Psi(B+C)=p\Psi(B)+\bar p\Psi(C)
		.
	\]
	To see this, denote 
	\begin{align*}
		& t_i^B=\min\pi_i(B),s^B=\max\sigma(B)-\sum_{i=1}^{n-1}t_i^B,\quad 
		t_i^C=\min\pi_i(C),s^C=\max\sigma(C)-\sum_{i=1}^{n-1}t_i^C,
		\\
		& t_i=\min\pi_i(B+C),s=\max\sigma(B+C)-\sum_{i=1}^{n-1}t_i.
	\end{align*}
	Then $t_i=t_i^B+t_i^C$ and $s=s^B+s^C$, which gives
	\[
		\Psi(B+C)=\frac 1{s^B+s^C}\Big[(B+C)+\sum_{i=1}^{n-1}(t_i^B+t_i^C)\Big]
		=\frac{s^B}{s^B+s^C}\Psi(B)+\frac{s^C}{s^B+s^C}\Psi(C)
		.
	\]
	Since $B$ and $C$ are not singletons, $s^B$ and $s^C$ are both nonzero. Thus $p=\frac{s^B}{s^B+s^C}\in(0,1)$. 

	Now let $A\in\mathcal L(D^{n-1})$, and assume that $A=B+C$ with $B,C\in\mathcal K^{n-1}$ not both 
	homothetic to $A$. Then 
	\[
		A=\Psi(A)=p\Psi(B)+\bar p\Psi(C)
		,
	\]
	and not both of $\Psi(B)$ and $\Psi(C)$ are equal to $A$ by 1.
\qedhere
\end{enumerate}
\end{proof}

Putting together Corollary~\ref{cor:Dis-3} and Lemma~\ref{lem:Dis-4}, we see that indeed the set $\mathcal E$ from (\ref{eq:set-E}) 
corresponds bijectively to the homothety classes of indecomposable convex bodies in $\mathbb R^{n-1}$ 
which are not singletons.


\vspace{8mm}
\section{Conclusions}
\label{sec:conc}

We have studied the possibility of extending a convex algebra by a single
element. We have proven that many different extensions are possible of which
only one gives rise to a functor on $\EM(\Dis)$. We have described all
extensions of $\Alg D_S$, the free convex algebra of probability distributions
over a set $S$, and of $\CSS \Alg D$, the convex algebra of convex subsets of a particular kind of
convex subset of a vector space. As a consequence of the latter result, we have
described all extensions of $\CSS \Alg D_S$ used for modelling probabilistic
automata.

It would be interesting to investigate whether the methods developed here could be useful in the study of Eilenberg-Moore algebras 
of the Giry monad on measurable spaces, or on subcategories of measurable spaces like Polish or analytic spaces.

\vspace{1cm}

\end{document}